\newcommand{\algoSize}{\smaller}
\newcommand{\remove}[1]{}
\newcommand{\bigO}{\mathcal{O}}
\begin{document}

\setcounter{footnote}{1}

\title{Self-Stabilizing and Private Distributed Shared Atomic Memory\\ in Seldomly Fair Message Passing Networks\\\large{(Technical Report)~\footnote{An earlier version of this technical report appeared as a brief announcement in~\cite{DBLP:conf/podc/DolevPS15}}}}


\author{Shlomi Dolev~\footnote{Department of Computer Science, Ben-Gurion University of the
Negev, 84105 Beer-Sheva, Israel, \texttt{dolev@cs.bgu.ac.il}}\and Thomas Petig~\footnote{Department of Computer Science and Engineering, Chalmers University of Technology, 41296 Gothenburg, Sweden, \texttt{$\{$petig,elad$\}$@chalmers.se}}\and Elad M.\ Schiller~$^{\footnotemark[3]}$}

\date{}

\maketitle

\begin{abstract}
We study the problem of privately emulating shared memory in message-passing networks. The system includes clients that store and retrieve replicated information on $N$ servers, out of which $e$ are malicious. When a client access a malicious server, the data field of that server response might be different than the value it originally stored. However, all other control variables in the server reply and protocol actions are according to the server algorithm. For the coded atomic storage (CAS) algorithms by Cadambe et al., we present an enhancement that ensures no information leakage and
malicious fault-tolerance. 

We also consider recovery after the occurrence of transient faults that violate the assumptions according to which the system is to behave. 
%
%
After their last occurrence, transient faults leave the system in an arbitrary state (while the program code stays intact). We present a self-stabilizing algorithm, which recovers after the occurrence of transient faults. This addition to Cadambe et al. considers asynchronous settings as long as no transient faults occur. The recovery from transient faults that bring the system counters (close) to their maximal values may include the use of a global reset procedure, which requires the system run to be controlled by a fair scheduler. After the recovery period, the safety properties are provided for asynchronous system runs that are not necessarily controlled by fair schedulers. 

Since the recovery period is bounded and the occurrence of transient faults is extremely rare, we call this design criteria self-stabilization in the presence of seldom fairness. Our self-stabilizing algorithm uses a bounded amount of storage during asynchronous executions (that are not necessarily controlled by fair schedulers). To the best of our knowledge, we are the first to address privacy, malicious behavior and self-stabilization in the context of emulating atomic shared memory in message-passing systems.
\end{abstract}

\section{Introduction}
\label{sec:intro}
%
%
The increasing availability of fast ubiquitous networking, the appearance of Cloud and Fog computing, have offered computer users attractive opportunities for remotely storing massive amounts of data in decentralized storage systems. In such systems, privacy and dependability are imperative. We consider distributed fault-tolerant systems that prevent information leakage, deal with malicious behavior and can recover after the occurrence of transient faults, which cause an arbitrary corruption of the system state, including the state of the mechanisms for storing information, so long as the program's code is still intact. To the best of our knowledge, we are the first to show that the emulation of atomic shared memory in message-passing systems can be done in a way that considers information privacy, resilience to malicious behavior and recovery from transient-faults.  

\subsection{The problem}
A distributed storage system uses a decentralized set of servers for allowing clients to access a shared object concurrently. Register emulation is a well-known method for sharing objects. Among the three kinds of consistency requirements for registers, atomicity is the strongest one, since it requires every sequence of concurrent access to the register to appear sequential~\cite{DBLP:journals/dc/Lamport86a}. Another classification of register emulation considers the number of clients that can read or write the shared register concurrently. We consider the more general form of shared memory emulation of an atomic register in which many clients can read and write concurrently. 

\subsubsection{Storage and communication costs}
Early approaches~\cite{Attiya:1995,DBLP:conf/ftcs/LynchS97} provided fault-tolerance for distributed emulation of shared registers via replication. That is, each server is to store an identical copy of the most recent version of the shared object. These solutions require the read procedure to include a propagation phase in which the reader updates the servers with the most recent value they read; details appear in~\cite{DBLP:conf/ftcs/LynchS97}. Since in these early approaches the interaction between the clients and the servers includes sending of the entire replica, high communication costs are implied. Recent advances in the area~\cite{DBLP:conf/wdag/FanL03,DBLP:journals/dc/CadambeLMM17} are less costly than these early approaches~\cite{Attiya:1995,DBLP:conf/ftcs/LynchS97}, because their propagation phase messages include only the control variables, rather than the entire replica. Moreover, using erasure coding, the servers avoid storing the entire replica by storing only the \textit{coded elements}, which are tailed individually to every server. This leads to further reduction in the size of messages in all phases (see further details in~\cite{DBLP:conf/ftcs/LynchS97,DBLP:conf/podc/SpiegelmanCCK16}). 

\subsubsection{Malicious behavior and privacy}
The use of erasure coding facilitates, as we show in this paper, allows the satisfaction of requirements related to malicious behavior and privacy. That is, when a client access a malicious server, the data field of that server response might be different than the value it originally stored (however, all other control variables in the response and protocol actions follows the algorithm). Our privacy requirement is that the collective storage of any set of less than $k$ servers cannot leak information, where $k$ is a number that we specify next.

\subsubsection{Problem specifications}
The system has $N \in \mathbb{Z}^+$ servers that emulate an atomic shared memory, which any bounded set of clients may access. We consider the integers $f,e \in \mathbb{Z}^+$ and $k \in \{1, \ldots, N-2(f+e)\}$. The coded atomic storage ${\text{CAS}(k)}$ task addresses the problem of multi-writer, multi-reader (MWMR) emulation of atomic shared memory of a single object $\text{CAS}(k)$'s safety requirement says that the algorithm's external behavior follows the ones of atomic memory, and $\text{CAS}(k)$'s liveness require the completion of all (non-failing) operations independently of the node availability. From the communication and storage costs, we interpret the task of ${\text{CAS}(k)}$ to restrict the messages between clients and server, as well as the storage records, to include only individualized coded elements and control variables, as in~\cite{DBLP:journals/dc/CadambeLMM17}. 


\subsection{Fault model}
Our message-passing system is asynchronous and it is prone to (a) fail-stop failures of nodes that may resume at any time, (b) packet failures, such as omission, duplication, and reordering, and (c) malicious servers can reply with a message that its data field is different than the originally stored value (however, all other control variables stay intact and in all other matters malicious servers do not deviate from the algorithm).

We bound the number of malicious servers is bounded by $e$. We assume that the number of failing servers is bounded by $f$ (for correctness sake~\cite{DBLP:books/mk/Lynch96}) and allow them to resume operation at any time. We assume that failing clients stop taking steps. Note that the client identifiers are recyclable using incarnation numbers (as we explain in Section~\ref{sec:incarnationNumber}). Thus, although we bound by {$N$} the number of clients that are concurrently active, the number of client life cycles is unbounded for any practical purpose.

In addition to these benign failures, we consider \emph{transient faults}, i.e., any temporary violation of assumptions according to which the system and network were designed to behave, e.g., the corruption of the system state due to soft errors or the presence of Byzantine nodes. 
We assume that these transient faults arbitrarily change the system state in unpredictable manners (while keeping the program code intact).

\subsection{Design criteria}
Dijkstra's seminal work~\cite{DBLP:journals/cacm/Dijkstra74} proposed the self-stabilization design criteria, which models transient faults to occur before the start of the system run, because Dijkstra considered the occurrence of transient faults to be an extremely rare event that causes the system to start in an arbitrary state. Dijkstra required self-stabilizing systems to return to correct behavior within a bounded period.
 
An unfair scheduler of an asynchronous system (with bounded memory and channel capacity) can cause the system to indefinitely hide stale information, because this adversarial scheduler does not guarantee that all nodes take steps infinitely often.
This stale information is the result of transient faults, which occur before the system start, however its presence may have a long-term effect, because this corrupted data can cause the system, at any time, to violate safety. 
This is true for any system, in particular, for Dijkstra's self-stabilizing systems~\cite{DBLP:journals/cacm/Dijkstra74}, which are required to specify a bound on the time in which they remove all stale information (whenever they appear).
Without ever restricting the degree in which the scheduler can be unfair, we cannot specify, \textit{for any given system}, when it can remove all stale information. Thus, we cannot demonstrate, for any given system, that the proposed algorithm fulfills Dijkstra's requirements.

This paper proposes to restrict the scheduler unfairness degree in the following manner: (i) In the absence of transient faults, the scheduler is unfair and the system guarantees safety and liveness. (ii) After the occurrence of the last transient fault, the scheduler becomes fair for a period that is at least as long as the specified system recovery period. That is, after the recovery period, the scheduler returns to be unfair as in (i) and the system returns to guarantee safety and liveness. The proposed design criteria, which we call \emph{self-stabilization in the presence of seldom fairness}, address challenges that Dijkstra's self-stabilization cannot address when the scheduler is always either fair or unfair. 

For example, any transient fault can cause a counter to reach its maximum value and yet the system might need to increment the counter for an unbounded number of times after that overflow event. This challenge is greater when there is no elegant way to maintain an order among the different counter values, say, by wrapping to zero upon counter overflow. Since the scheduler fair after the occurrence of transient faults and until the end of the recovery period, the system recovery can handle the overflow event by restarting the system. This work uses a restart procedure that does not violate safety (but may violate liveness after the occurrence of transient faults). This is a challenge that Dijkstra's self-stabilization cannot address when the scheduler is always unfair. 

Systems that are self-stabilization in Dijkstra's sense often assume that the scheduler is always fair. Here, fail-stop failures are modeled as transient faults, which can occur only before the system starts to run. Challenges related to fail-stop failures are addressed more adequately by the proposed design criteria than Dijkstra's self-stabilization when the scheduler is always fair.

\subsection{Related work}
\subsubsection{Non-self-stabilizing register emulation in message-passing systems}
\label{sec:noSelfABD}
The literature on (non-self-stabilizing) register emulation in message-passing
systems includes~\cite{Attiya:1995} single-writer multi-reader (SWMR), and
their multi-writer (MWMR) counterparts~\cite{DBLP:conf/ftcs/LynchS97,DBLP:conf/wdag/FanL03} as well as solutions that provides (non-self-stabilizing) quorum reconfiguration~\cite{DBLP:journals/dc/GilbertLS10,DBLP:journals/jpdc/GeorgiouNS09}. 

\subparagraph{Recent advances to the state-of-the-art.~~}
As of the time of our publication~\cite{DBLP:conf/podc/DolevPS15}, the literature considered either (i) unbounded storage during asynchronous system runs that are not controlled by a fair scheduler, such as CASGC~\cite{DBLP:journals/dc/CadambeLMM17}, AWE~\cite{DBLP:conf/opodis/AndroulakiCDV14}, HGR~\cite{DBLP:conf/sosp/HendricksGR07} and ORCAS-B~\cite{DBLP:journals/siamcomp/DuttaGLV10}, (ii) store, during a write operation, the entire value being written in each server, such as ORCAS-A~\cite{DBLP:journals/siamcomp/DuttaGLV10}, and by that incurs a worst-case storage cost, as in~\cite{Attiya:1995,DBLP:conf/ftcs/LynchS97}, or (iii) uses a message dispersal primitive and a reliable broadcast primitive, such as~\cite{DBLP:conf/dsn/CachinT06}, which during write operations, can let the storage cost to become as large as the storage cost of replication, see~\cite{DBLP:journals/dc/CadambeLMM17} for details. 

\subparagraph{Advancing the state-of-the-art.~~}
In the context of self-stabilization, we cannot consider unbounded storage cost and this paper, unlike~\cite{DBLP:journals/dc/CadambeLMM17,DBLP:conf/opodis/AndroulakiCDV14,DBLP:conf/sosp/HendricksGR07,DBLP:journals/siamcomp/DuttaGLV10}, presents a bound on the storage costs also in the absence of a fair scheduler.
Thus, our proposal goes beyond the state-of-the-art in the case of (i) not only in the context of self-stabilization and privacy. Moreover, unlike~\cite{DBLP:conf/dsn/CachinT06,Attiya:1995,DBLP:conf/ftcs/LynchS97,DBLP:journals/siamcomp/DuttaGLV10}, during a single write operation, the added storage cost of the proposed algorithm are similar to the ones of CASGC~\cite{DBLP:journals/dc/CadambeLMM17}. Thus, our proposal goes beyond the state-of-the-art in the cases of (ii) and (iii) not only in the context of self-stabilization and privacy.

\subsubsection{Self-stabilizing register emulation in message-passing systems}
As of the time of our publication~\cite{DBLP:conf/podc/DolevPS15}, to the best of our knowledge, there was no self-stabilizing solution with write operations that do not replicate the new object version among all the system servers. Also, privacy is not considered. 

Self-stabilizing emulation of shared registers that have weaker properties than atomicity (and do not consider fail-stop failures) exists~\cite{DBLP:conf/wss/DolevH01,DBLP:conf/icdcn/JohnenH09} as well as~\cite{DBLP:journals/corr/BonomiPPT16a}, which consider Byzantine nodes but not atomicity. Dolev et al.~\cite{DBLP:journals/tpds/DolevIM97} presented a self-stabilizing algorithm for emulating atomic single-writer single-reader (SRSW) shared register in message-passing systems. This work considers many-reader and many writer (MRMW) atomic registers.

\subparagraph{Recent advances to the state-of-the-art.~~}
Recent solutions for shared memory emulation include practically-stabilizing emulation of SWMR registers~\cite{DBLP:journals/jcss/AlonADDPT15}, and MRMW registers~\cite{DBLP:conf/sss/DolevGMS15,DBLP:conf/podc/BonomiDPR15}. Pseudo-self-stabilizing emulation of atomic registers is considered in~\cite{DBLP:conf/opodis/DolevDPT12} for the case of SWMR. 

During asynchronous system runs that are not controlled by fair schedulers, pseudo-self-stabilizing and practically-self-stabilizing systems satisfy safety requirements after an unbounded recovery period (yet finite in the former case). The case of asynchronous system runs that are controlled by fair schedulers is not considered in~\cite{DBLP:conf/podc/DolevPS15,DBLP:conf/opodis/DolevDPT12} for the case of SWMR and in~\cite{DBLP:conf/sss/DolevGMS15,DBLP:conf/podc/BonomiDPR15} for the case of MWMR.

\subparagraph{Advancing the state-of-the-art.~~}
We do not claim that, in the presence of a fair scheduler, the solutions in~\cite{DBLP:conf/podc/DolevPS15,DBLP:conf/opodis/DolevDPT12,DBLP:conf/sss/DolevGMS15,DBLP:conf/podc/BonomiDPR15} have (or have not) a bounded recovery period, but we do point out that their message size is greater than our proposal by a multiplicative factor of polynomial order in the number of system nodes (in addition to the fact that their write operations replicate the new object among all servers).

Our self-stabilizing proposal has a bounded recovery period in the presence of seldomly fair schedulers. Moreover, in the absence of transient faults (that corrupt the control variables), our self-stabilizing solution works well in the absence of fair schedulers. Furthermore, one can replace the type of control variables (tags) that we use with the one of the control variables in~\cite{DBLP:conf/podc/DolevPS15,DBLP:journals/jcss/AlonADDPT15} and abandon merely the part of our proposal that appears in Section~\ref{fig:boundCAS}. This replacement is straightforward. The result will be a practically-self-stabilizing variance of Cadambe et al.~\cite{DBLP:journals/dc/CadambeLMM17} that has a much better use of storage comparing to~\cite{DBLP:conf/podc/DolevPS15,DBLP:conf/opodis/DolevDPT12,DBLP:conf/sss/DolevGMS15,DBLP:conf/podc/BonomiDPR15} (at a costs of polynomial factor of the message size and no bounded recovery period).

We note that after our publication~\cite{DBLP:conf/podc/DolevPS15}, several important results were added to the literature. For example, Spiegelman et al.~\cite{DBLP:conf/podc/SpiegelmanCCK16} considered data items of $D$ bits, concurrency degree if $\delta$, and an upper bound on the number of storage node failures $t$, they show a lower bound of $\Omega(\min(t,\delta)D)$ bits on the space complexity of asynchronous distributed storage algorithms. This implies, for example, that the asymptotic storage cost can be  as high as $\bigO(\delta D)$. Our upper bound on the storage size (Section~\ref{sec:cost}) does not contradicts the lower bound of Spiegelman et al.~\cite{DBLP:conf/podc/SpiegelmanCCK16} and their $\Theta(\min(t,\delta)D)$ upper-bound does not consider self-stabilization.   
To the best of our knowledge, additional advances in the area of coded atomic storage~\cite{DBLP:conf/podc/KonwarPLM17,DBLP:conf/ipps/KonwarPKLMS16,DBLP:conf/opodis/KonwarPLM16}, which appeared after our publication~\cite{DBLP:conf/podc/DolevPS15}, do not consider self-stabilization. 



\subsubsection{Privacy preservation and malicious tolerance}
The CAS algorithm~\cite{DBLP:journals/dc/CadambeLMM17} uses erasure codes for splitting the data into different coded elements that each server stores. As long as at least $k$ coded-elements are available, the algorithm can retrieve the original information. Cadambe et al.~\cite{DBLP:journals/dc/CadambeLMM17} show how to use $(N,k)$-maximum distance separable (MDS) codes~\cite{Roth:2006} for
improving communication and storage performances. $(N,k)$-MDS codes map
$k$-length vectors to an $N$-length ones. The CAS algorithm lets the writers
to store on $N$ servers $k$-length vectors. Each of the $N$ servers stores
(uniquely) one of the $N$ coordinates of the $(N,k)$-MDS-coded information.
When retrieving the information, the algorithm can tolerate up to $(N-k)$
erasures. We address privacy by storing on each server merely parts of the data, as in Shamir's secret sharing scheme~\cite{Shamir:1979}, which we can implement by Reed-Solomon codes~\cite{McEliece:1981} and a matching error correction
algorithm (Berlekamp-Welch~\cite{BerlekamWelch:1986}).


\subsubsection{Proposed techniques of independent interest.~~}
We note that our proposal~\cite{DBLP:conf/podc/DolevPS15} enhances CASGC~\cite{DBLP:journals/dc/CadambeLMM17} from the privacy perspective by also from the system robustness point-of-view. We use here several techniques of independent interest that facilitate this improvement. For example, before adding a new object version, the writers query for the maximum value of the control variables, which are called tags. The writer then couples the new version with a tag number that is greater than the one returned by the query. We show this technique preserves atomicity proof of~\cite{DBLP:journals/dc/CadambeLMM17} and believe that it is suitable for many other self-stabilizing algorithms. 

Our solution also deals with the following interesting challenge. The rate in which clients complete write operations can be much faster than the rate in which these clients can inform all the servers about these operations. This rate can also exceed the rate in which the servers can inform each other about such updates. The challenge here is imposed by the fact that self-stabilizing end-to-end protocols must assume that the communication channels have bounded capacities due to well-known impossibility results~\cite[Chapter 3.2]{Dolev:2000}. Our solution overcomes this challenge using techniques that resemble the ones for converting shared memory models to message-passing ones~\cite[Chapter 4.2]{Dolev:2000} and an extra phase in the writer procedure. This part of the solution is another key difference between the proposed algorithm and the one by Cadambe et al.~\cite{DBLP:journals/dc/CadambeLMM17}.

To the end of bounding the number records that each server needs to store, at any point of time, a given server record is considered relevant only as long as the servers use it. We show that no server store more than $N + \delta + 3$ relevant records during asynchronous system runs that are not necessarily controlled by a fair scheduler, where $\delta$ is a bound on the number of write operations that occur concurrently with any read operation; this is similar to the $\delta$ parameter defined by Cadambe et al.~\cite{DBLP:journals/dc/CadambeLMM17}. 
The proof technique serves as a self-stabilizing alternative to existing non-self-stabilizing algorithms that provide bounds on the number of records at the server storage, such as~\cite{DBLP:journals/siamcomp/DuttaGLV10,DBLP:conf/dsn/CachinT06}, in a way that does not require storage costs during write operations to be the ones of a fully replicated solution.

\subsection{Our Contributions}
We present the algorithmic design for an important component for dependable distributed systems: a robust shared storage that preserves privacy. In particular, we provide a privacy-preserving and self-stabilizing algorithm for decentralized shared memory emulation (over asynchronous message-passing systems)   that is resilient to a wide spectrum of node and communication failures as well as malicious behavior. Moreover, our self-stabilizing algorithm can automatically recovery after the occurrence of transient faults that violate the assumptions according to which the system is to behave. Concretely, we present, to the best of our knowledge, the first solution that provides: 

\begin{enumerate} 
\item  \emph{Dependable and efficient emulation of atomic registers over asynchronous message-passing systems.~~} 
When starting from a legitimate state, our self-stabilizing solution can:

\begin{itemize} 

\item \textit{Deal with communication failures:~~}
The communication channels that are prone to packet failures, such as omission, duplication, reordering, but the resulted communication delays are unbounded yet finite since we assume communication fairness. (That is, it might take a finite number of retransmissions, but packets are received eventually.)

\item \textit{Deal with node failures:~~} 
We show that non-failing clients can retrieve information stored privately by the $N-f$ non-failing servers. We do not bound the number of failing clients but we do assume a bound of {$N$} on the number of concurrently active clients.

\item \textit{Deal with malicious behavior:~~} 
We show that the client can retrieve the originally stored object in the presence of at most $e$ malicious servers.

\item \textit{Prevent information leakage:~~} 
We show that the collective storage of any set of fewer than $k-1$ servers cannot reveal (any version) of the object.
\end{itemize}

\item \emph{Recovering after the occurrence of transient failures.~~} We show that our algorithm can even recover after the occurrence of transient failures in the following cases. The solution presentation considers two `attempts' to solve the problems until the third attempt provides a self-stabilizing solution. 


\begin{itemize} 

\item \textit{Unbounded control variables and number records at the server storage:~~} 
We show that starting from an arbitrary system state and within $\bigO(1)$ time of fair execution, the system reaches a legitimate state after which the algorithm satisfies the ${\text{CAS}(k)}$'s task requirements even when the scheduler stops been fair and the execution becomes asynchronous. This `first attempt' solution assumes that the servers can store all the object versions (in addition stale information originated from the system starting state). 

\item \textit{Unbounded control variables but a bounded number of records at the server storage:~~} 
We bound the number of relevant records that any server stores, at any point of time, by $N + \delta + 3$ during asynchronous system runs that are not necessarily controlled by a fair scheduler, where $\delta$, similar to Cadambe et al.~\cite{DBLP:journals/dc/CadambeLMM17}, is a bound on the number of write operations that occur concurrently with any read operation. 

\item \textit{Bounded control variables and number of records at the server storage:~~} 
The challenge here comes from the fact that any transient fault can bring the control variables to their maximal values. The difficulty here is that there is a need to allow the system to perform an unbounded number of write operations after this overflow event. We address this challenge by using a safety-preserving global restart of the control variables (in a way that may temporarily violate liveness but will leave the most recent version of the object intact). 

\end{itemize}

\end{enumerate}

Another important contribution of this work is the proposal of new design criteria for self-stabilizing systems of self-stabilization in the presence of seldom fairness.  
On the one hand, the proposed design criteria consider a greater set of algorithms that can be considered self-stabilizing when comparing to other design criteria~\cite{Alon:2011,DBLP:conf/netys/SS18,DBLP:conf/opodis/DolevDPT12,DBLP:conf/sss/DolevGMS15,DBLP:journals/dc/BurnsGM93} that do not consider execution fairness at all, not even seldomly. On the other hand, it is much easier to design algorithms for the proposed design criteria than the ones in~\cite{Alon:2011,DBLP:conf/netys/SS18,DBLP:conf/opodis/DolevDPT12,DBLP:conf/sss/DolevGMS15,DBLP:journals/dc/BurnsGM93}.

\subsection{Solution outline and document organization}
We bring our interpretation of the system in the self-stabilization context and the CAS task (Section~\ref{sec:arch}) before the bringing Cadambe et al.~\cite{DBLP:journals/dc/CadambeLMM17} version of CAS (Section~\ref{sec:back}). We present our privacy-preserving variation of Cadambe et al.'s algorithm as a basic result (Section~\ref{sec:basicRes}). 

Our self-stabilizing algorithm requires the specification of a formal model (Section~\ref{sec:sys}) and external building blocks (Section~\ref{s:exbld}). The presentation of this algorithm starts by considering it unbounded version  (Section~\ref{sec:unbAlg}) together with its correctness proof (Section~\ref{sec:proof}). Our proof also shows that there is a bounded set of relevant records that the servers store (Section~\ref{sec:bound}). This bound is the basis for the bounded variation of the proposed self-stabilizing algorithm (Section~\ref{sec:bounded}) and its cost analysis (Section~\ref{sec:cost}).

The discussion (Section~\ref{sec:disc}) includes also an elegant extension that extends our settings to consider the possible recovery of failing nodes. We present self-stabilizing implementations (Section~\ref{sec:basic:App}) of the gossip and quorum services (specified in Section~\ref{s:exbld}). This part appears in the Appendix because we do not consider it to be a major part of our contribution.

\section{System Overview}
\label{sec:arch}
The design criteria of self-stabilization have considerations that must be taken into account (in addition to the ones that exists for non-stabilizing systems). Therefore, before describing the algorithm by Cadambe et al.~\cite{DBLP:journals/dc/CadambeLMM17} and proposing our variation (sections~\ref{sec:sys} to~\ref{sec:cost}), this section brings the studied task (Section~\ref{ss:emulatedObject}) and our interpretation of the system contexts that do (Figure~\ref{fig:system}, right), and do not (Figure~\ref{fig:system}, left), consider privacy and self-stabilization. We note that in the context of self-stabilization, all system components have to follow the self-stabilization criteria.  

\begin{figure*}[t]
   \centering
   \includegraphics[clip=true,scale=1.3]{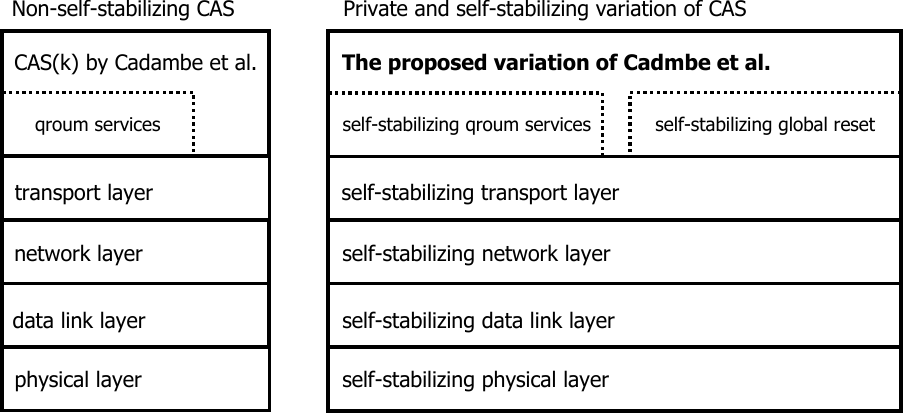}
\caption{A possible deployment of the CAS(k) algorithm by Cadambe et al.~\cite{DBLP:journals/dc/CadambeLMM17} (on the left) and the proposed self-stabilizing variation (on the right); this paper contribution appears in bold font.}
   \label{fig:system}
\end{figure*}

\subsection{Emulating shared objects}
\label{ss:emulatedObject}
The network include \textit{nodes} $\sP  = \{p_1, \ldots, p_N\}$ (processors). Each node $p_i \in P$ has access to a unique identifier $i$ and hosts either (i) a server, (ii) a client or (iii) both a server and a client. The server has access to a storage $S$, which is a set of records, and the client request the servers to use these records for updating and retrieving the latest version of the emulated shared object. The coded atomic storage ${\text{CAS}(k)}$ task addresses the problem of multi-writer, multi-reader (MWMR) emulation of atomic shared memory of a single object in the above settings. The system uses erasure coding for the sake of tolerating fail-stop failures of at most $f$ servers (Section~\ref{sec:MDSback}). 

The object value is a member of a finite set $\sV$, which $\lceil \log_2 |\sV| \rceil$ bits can represent. We refer to $v_0 \in \sV$ as the default (initial) state of the emulated object. A local source commands its client to run the reader or writer procedures, sequentially. A call to a reader returns the current version of the object value. A call to a writer includes the new version of object value and returns upon completion. A writer associates each write request with a unique tag, $t \in \sT$, where $\sT = \mathbb{Z}^+ \times P$ and $\mathbb{Z}^+$ is the set of all positive integers. Note that $\sT$ is a set for which the relation $<\equiv (z_1<z_2)\lor ((z_1=z_2) \land (i<j))$ can order totally any pair of tags, $(z_1,p_i)$ and  $(z_2,p_j)$. We denote the default tag value, $t_0 < \min \sT$, as a tag that is not in $\sT$ and yet it is smaller than any other tag in $\sT$.

The detailed specification of task $\text{CAS}(k)$~\cite{DBLP:journals/dc/CadambeLMM17} and~\citep[Chapter 13]{DBLP:books/mk/Lynch96} considers each version of the object and requires the algorithm's external behavior follows the ones of atomic (linearizable) memory. An atomic shared memory object is one where the commands to the clients and the returned values from these calls appear as if the object is being accessed sequentially rather than via concurrent calls to the client procedure. The detailed task specification requires that it would be possible to include in the system execution serialization points so that the trace of the complete operations corresponds to the one of a read-write variable type. $\text{CAS}(k)$ also requires liveness with respect to the completion of all (non-failing) operations in any (not necessarily always fair) execution in which the number of server failures is at most $f$, where $k \in \{1, \ldots, N-2f\}$.  

\subsection{External building blocks}
We handle node and communication failures as well as transient faults using common external building blocks.

\subsubsection{End-to-end protocols}
%
%
The implementation of the system services requires the availability of an end-to-end protocol. Our self-stabilizing implementation of the services below assumes the availability of self-stabilizing end-to-end protocols, such as ones in~\cite{DBLP:journals/ipl/DolevDPT11,DBLP:conf/sss/DolevHSS12}. Note that self-stabilizing end-to-end protocols assume that the channel has a bounded capacity due to well-known impossibility results~\cite[Chapter 3.2]{Dolev:2000}.

\subsubsection{Gossip services}
Cadambe et al. assume the availability of a reliable gossip service. They use this non-self-stabilizing service to propagate reliably among the servers the tag of every object version. We consider a self-stabilizing gossip service (which we specify in Section~\ref{sec:basic} and suggest an implementation in Appendix~\ref{sec:basic:App}). This service lets each gossip message to overwrite the previous gossip message that is stored in the buffers (without considering whether the previous message was delivered to all receivers). 

Our specifications are motivated by the fact that self-stabilizing end-to-end protocols must consider communication channels with bounded capacities~\cite[Chapter 3.2]{Dolev:2000}. Therefore, a specific quorum of servers might process write operations much faster than the rate in which gossip messages arrive \emph{reliably} to servers that are not part of that quorum. Since the communication channels are assumed to be bounded, it is not clear how can the writer avoid blocking (and still deliver all gossip messages).  

\subsubsection{Quorum services}
Quorum systems can be used for ensuring transaction atomicity in replica system despite the presence of network failures~\cite{DBLP:conf/berkeley/Skeen82}. The term quorum system, $\sQ$, refers to all subsets of $P$, such that each quorum set $Q \in \sQ$ satisfies the quorum system specifications. For example, Attiya et al.~\cite{Attiya:1995} specify the criterion of $\lceil \frac{N}{2}\rceil < |Q|$, Cadambe et al.~\cite{DBLP:journals/dc/CadambeLMM17} consider $k \in \{1, \ldots, N-2f\}$ and specify $\lceil \frac{N+k}{2}\rceil \leq |Q|$. Our specifications (Section~\ref{sec:specQ}) consider $\lceil \frac{N+k+2e}{2}\rceil$, where $e$ is the maximal number of malicious servers and $f$ is a bound on the ones that can fail-stop. 

Cadambe et al.~\cite{DBLP:journals/dc/CadambeLMM17} assume that the operations at a given client follow a ``handshake'' discipline, where a new invocation awaits the response of a preceding invocation. In the context of self-stabilization, this synchronization between clients and servers is subject to transient faults. Thus, we specify a service that provides this ``handshake'' discipline in a self-stabilizing manner (Section~\ref{sec:basic}). (We offer a self-stabilizing implementation of the service in Appendix~\ref{sec:basic:App}.)   

\subsubsection{Reset services}
\label{sec:resetServices}
Non-self-stabilizing algorithms for shared memory emulation might never reach a tag value that is (close to) the maximum of $\sT$, because $\sT = \mathbb{Z}^+ \times P$ is unbounded. However, self-stabilizing algorithm consider only bounded tag values, i.e., for them $\sT = \{1, \ldots, z_{\max}\} \times P$, where $z_{\max} \in \mathbb{Z}^+$ is a predefined positive integer. (We use the same notation of $\sT$ for both variations whenever it is clear from the context whether the system considers self-stabilization.) Since in the context of self-stabilizing systems a single transient fault can introduce a tag value that is (close to) the maximum of $\sT$, the proposed algorithm has to overcome a challenge that non-self-stabilizing algorithms for shared memory emulation often do not consider (with some notable exceptions, e.g.,~\cite[Section 5]{Attiya:1995}).   

The proposed self-stabilizing algorithm uses a self-stabilizing global reset mechanism that resembles the one considered in~\cite{DBLP:conf/wdag/AwerbuchPVD94}. It helps the algorithm to overcome the case in which the system state includes a tag that is (close to) the maximum value in $\sT$. This reset mechanism leaves the storage of every server only with the most recent version of the object and replaces its tag value with a tag that is slightly above $t_0$. We specify the interface between the proposed algorithm and the self-stabilizing global reset mechanism (Section~\ref{s:bld}) and note that its liveness property requires schedule fairness. 

\section{Background}
\label{sec:back}
Cadambe et al.~\cite{DBLP:journals/dc/CadambeLMM17} use erasure codes for emulating shared memory and use quorums to distinguish among writer, server and reader nodes. Their algorithm allows multiple writers using a $(N,k)$ maximum distance separable~\cite{Roth:2006} (MDS) code to write data concurrently to the group of servers while ensuring atomicity and liveness. This section reviews the definition of $(N,k)$ MDS code before explaining how to use them for secret sharing in a slightly adapted variation of Cadambe et al.~\cite{DBLP:journals/dc/CadambeLMM17}.

Cadambe et al.~\cite{DBLP:journals/dc/CadambeLMM17} divide the data into a number of coded elements. Each server stores at most one coded element. Cadambe et al. guarantee that the reader client can fetch the necessary number of coded elements, such that the reader can retrieve the original data. Given two positive integers $m, k \in \mathbb{Z}^+: k < m$,  Cadambe et al. consider an $(m, k)$ Maximum Distance Separable (MDS) code that maps a $k$-length vector (the input) to an $m$-length vector (the output). The aim is that after altering arbitrarily $k$ coordinates of the output vector, a decoding algorithm can still retrieve the input vector. This way, Cadambe et al. use an $(m, k)$ code for storing the input vector on $m$ servers, i.e., the server at $p_i$ stores the output's $i$-th coordinate, because the decoding algorithm is resilient to $(m-k)$ node failures. We bring the definition of $(m, k)$ MDS code (Section~\ref{def:MDS}) before proving the Cadambe et al.'s ${\text{CAS}(k)}$ algorithm (Section~\ref{sec:CadambeAlg}). 

\subsection{Maximum Distance Separable (MDS) codes}
\label{sec:MDSback}
Let $A$ be an arbitrary finite set and $S \subseteq \{1, 2, \ldots ,m\}$. Denote by $\pi_S$ the natural projection mapping from $A^m$ onto $S$'s corresponding coordinates, i.e., $S = \{s_1, s_2, \ldots , s_{|S|}\}$, where $s_1 < s_2 \ldots < s_{|S|}$, and define $\pi_S : A^m \rightarrow A^{|S|}$ as $\pi_S (x_1, x_2, \ldots , x_m) = (x_{s_1}, x_{s_2}, \ldots , x_{s_{|S|}})$.

\begin{definition}[Maximum Distance Separable (MDS) code]
\label{def:MDS}
Let $A$ be a finite set and $m, k \in \mathbb{Z}^+: k < m$ two positive integers. An $(m, k)$ code over $A$ is a map $\Phi : A^k \rightarrow A^m$. An $(m, k)$ code $\Phi$ over $A$ is said to be Maximum Distance Separable (MDS) if, for every $S \subseteq \{1, 2, \ldots ,m\}$, such that $|S| = k$, there is a function $\Phi^{-1}_S : A^k \rightarrow A^k$, such that $\Phi^{-1}_S (\pi_S(\Phi(x)) = x $ for every $x \in A^k$, where $\pi_S$ is the natural projection mapping.
\end{definition}

Cadambe et al.~\cite{DBLP:journals/dc/CadambeLMM17} refer to each of the output's coordinates of an $(m, k)$ code $\Phi$ as a coded element. Further details about $\Phi$ and erasure code appear in~\cite{DBLP:journals/dc/CadambeLMM17}. We extend the use of $(m, k)$ MDS code to secret sharing (Section~\ref{sec:MDSuse}).

\begin{algorithm*}[t!]
\algoSize
%
\BlankLine

\textbf{The client:~}\tcc*{At any time, $p_i$'s client is a writer, a reader or none but not both}
\textbf{$writer(s)$:~}\tcc*{The writer stores secret $s$ as the new version of the shared object}
\tcc{Query for finalized tags and after hearing from a quorum get the maximal tag}
\textbf{let} $(z,j) := \max (\{t' : ( t', \bullet ) \in \text{qrmAccess}((\bot,\bot,\text{`qry'}))\})$\nllabel{BACKalg:w:SelMax}\tcc*{Obtain coded elements $w_1,w_2, \ldots, w_N$, such that $p_i \in P$ has a server, by applying the $\Phi$ to the secret $s$. Then, send $(t,w_i, `pre')$ to every server and wait for a quorum of replies.}
$\text{qrmAccess}(((z+ 1,i),\{\Phi_{p_j}(s)\}_{p_j\in P},\text{`pre'}))$\tcc*{ The prewrite phase }\nllabel{BACKln:preWrite}
\tcc{For each server, send $(t, `null', \text{`fin'})$ and wait for a quorum of replies.}
$\text{qrmAccess}(((z+ 1,i),\bot,\text{`fin'}))$\nllabel{BACKln:finalize}\tcc*{ The finalize phase }
$\Return$\nllabel{BACKalg:w:return}\;

\BlankLine\BlankLine

\textbf{$reader()$:~}\tcc*{The reader retrieves the current object version, or $\bot$ upon failure}
\textbf{let} $t := \max (\{t' : (t', \bullet ) \in \text{qrmAccess}((\bot,\bot,\text{`qry'}))\})$\nllabel{BACKalg:r:SelMax}~\tcc*{Query as in line~\ref{alg:r:SelMax}}
\tcc{For each server, send $(t,\bot,\text{`fin'})$ and wait for a quorum of replies with the requested coded elements, which are associated with tag $t$.}
\textbf{let} $Q:=\text{qrmAccess}((t,\bot,\text{`fin'}))$\nllabel{BACKln:finalizeR}\tcc*{Ask and wait for finalized records from a quorum}
\tcc{Test whether at least $k_{threshold}$ replies include coded elements so that $\Phi^{-1}$ can decode the secret before returning it. If the test fails, return $\bot$.
}
\lIf{$|\{(t,w,\text{`fin'})\in Q:w\neq\bot\}|\geq k_{threshold}$}{$\Return(\Phi^{-1}(w:\{(t,w,s)\in Q:w\neq\bot\}))$\nllabel{BACKalg:read:BW}}
\lElse{\Return $\bot$}\nllabel{BACKalg:read:BWfail}

\BlankLine\BlankLine

\textbf{The server:}\\ 
$S\subset\sT\times(\sW\cup\{\bot\})\times \sD$ is a record set, where $\sT$ $=$ $\mathbb{Z}^+ \times \sP$ is the set of tags, $\sW$ the set of coded words and $\sD=\{\text{`pre'}, \text{`fin'} \}$ the set of phases. When $S=\emptyset$, we use the default triple $(t_0,w_{t_0,i}, \text{`fin'})$ when reporting on the triple with the highest locally known tag\nllabel{BACKalg:srv:default}\; 

\BlankLine\BlankLine

\Upon query arrival from $p_j$'s client to $p_i$'s server \Do{\label{BACKgc:alg:srv:queryInit}
Reply with $(maxPhase(\text{`fin'}), \bot, \text{`qry'})$, where $maxPhase(\text{`fin'})$ refers to the highest tag in any record record in $S$ that has a label $\text{`fin'}$ (whether that record includes a coded element or not)}


\Upon pre-write $m:=(t, w, \text{`pre'})$ arrival from the $p_j$'s writer to $p_i$'s server 
   \Do{\label{BACKgc:alg:srv:pre-write}
\lIf(~~~~~~~~~~~~~~~~~~~~~~~~~~~~~~~~~~~\texttt{/* add the arriving record to $S$ */}){$\nexists (t,\bullet)\in S$}{$S\gets S\cup\{(t,w,\text{`pre'})\}$} Moreover, acknowledge the arriving record by calling $\text{reply}(j,m)$.
}


\Upon finalize $m=\langle t, \bot, \text{`fin'} \rangle$ arrival {from} $p_j$'s writer to $p_i$'s server \label{BACKgc:alg:srv:fin-writer} \Do{
\uIf(~~~~~~~~~~~~~~~~~~~~~~~~\texttt{/* update the record $(t,w, \text{`pre'})$ to $(t,w, \text{`fin'})$ in $S$ */}){$\exists (t,w, \text{`pre'}) \in S$}{$S\gets (S\setminus\{(t,w,\text{`pre'})\})\cup(t,w,\text{`fin'})$}
\lElse{add $(t, \bot, \text{`fin'})$ to $S$} Moreover, acknowledge to the writer by calling $\text{reply}(j,m)$ and gossip the message $(t)$ to all other servers by calling $\text{gossip}(t)$.
}


\Upon finalize $m:=(t,\bot,\text{`fin'})$ arrival {from} $p_j$'s reader to $p_i$'s server \label{BACKgc:alg:srv:fin-reader} \Do{
\uIf{$\exists (t,w_i, \bullet) \in S$}{$S\gets(S\setminus\{(t,w,\bullet)\})\cup\{(t,w,\text{`fin'})\}$ \tcc*{update the record $(t,w_i, \bullet)$ to $(t,w_i, \text{`fin'})$ in $S$} acknowledge the reader with $(t,w_i,\text{`fin'})$\;} 
\Else{$S\gets S\cup\{(t,\bot,\text{`fin'})\}$ \tcc*{add $(t, \bot, \text{`fin'})$ to $S$} acknowledge to the reader by calling $\text{reply}(j,m)$\;} 
Moreover, gossip the message $(t)$ to all other servers by calling $\text{gossip}(t)$.
}


\Upon gossip $(t)$ arrival {from} $p_j$'s server to $p_i$'s server\label{BACKgc:alg:srv:uponGossip} \Do{
\lIf{ $\exists (t, \bullet) \in S$}{update the record $(t, \bullet)$ to $(t, \bullet, \text{`fin'})$ in $S$ \textbf{else} add $(t, \bot, \text{`fin'})$ to $S$}
}

\BlankLine

%
%
\caption{\algoSize A non-self-stabilizing $CAS(k)$ algorithm (that is based on Cadambe et al. with adaptations for the proposed secret sharing scheme), code for $p_i$'s client and server.}
\label{BACKalg:cas}
\end{algorithm*}

\subsection{Cadambe et al.'s {\textbf{CAS}(\textit{k})} algorithm}
\label{sec:CadambeAlg}
Cadambe et al.~\cite{DBLP:journals/dc/CadambeLMM17} present a quorum-based algorithm for implementing the $CAS(k)$ task. Algorithm~\ref{BACKalg:cas} is our interpretation of the non-self-stabilizing $CAS(k)$ algorithm by Cadambe et al.~\cite{DBLP:journals/dc/CadambeLMM17} with slight adaptations for the proposed secret sharing scheme. 

\subsubsection{External building blocks: quorum and gossip communications}
\label{sec:extBack}
Cadambe et al.~\cite{DBLP:journals/dc/CadambeLMM17} specify $\lceil \frac{N+k}{2}\rceil \leq |Q|$ for any $k \in \{1, \ldots, N-2f\}$ and show Lemma~\ref{thm:Q}.

\begin{lemma}[Lemma 5.1 in~\cite{DBLP:journals/dc/CadambeLMM17}]
\label{thm:Q}
Suppose that $k\in\{1\ldots, N-2f\}$. (i) If $Q_1, Q_2 \in \sQ$, then $|Q_1 \cap Q_2| \geq k$. (ii) If the number of failed servers is at most $f$, then $Q$ contains at least one quorum set $Q$ of non-failed servers. 
\end{lemma}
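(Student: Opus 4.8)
The plan is to establish both parts by elementary cardinality counting, using only the defining property of the quorum system, namely that $\lceil\frac{N+k}{2}\rceil \leq |Q|$ for every $Q\in\sQ$, together with $|P|=N$ and the hypothesis $k\in\{1,\ldots,N-2f\}$.

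For part (i), I would invoke inclusion--exclusion: since $Q_1,Q_2\subseteq P$, we have $|Q_1\cap Q_2| = |Q_1|+|Q_2|-|Q_1\cup Q_2| \geq |Q_1|+|Q_2|-N$, using $|Q_1\cup Q_2|\leq N$. Substituting the quorum threshold for each of $|Q_1|$ and $|Q_2|$ yields $|Q_1\cap Q_2|\geq 2\lceil\frac{N+k}{2}\rceil - N$. The only delicate step is disposing of the ceiling: because $\lceil x\rceil\geq x$ we have $2\lceil\frac{N+k}{2}\rceil\geq N+k$, hence the bound is at least $(N+k)-N=k$, as required. Note that this part uses neither $f$ nor the upper constraint on $k$.

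For part (ii), I would argue that when at most $f$ servers have failed, the set $L$ of non-failed servers satisfies $|L|\geq N-f$, so it suffices to show $N-f\geq\lceil\frac{N+k}{2}\rceil$; then any subset of $L$ of size $\lceil\frac{N+k}{2}\rceil$ is a quorum contained entirely in $L$. Here the hypothesis $k\leq N-2f$ does the work: it gives $N+k\leq 2(N-f)$, i.e.\ $\frac{N+k}{2}\leq N-f$, and since $N-f$ is an integer at least $\frac{N+k}{2}$, we obtain $\lceil\frac{N+k}{2}\rceil\leq N-f$. This is precisely the reason the admissible range of $k$ is capped at $N-2f$.

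The argument is essentially routine; the only points demanding attention are the correct treatment of the ceiling in part (i) and the observation that part (ii) is exactly what forces the upper bound $N-2f$ on $k$. For part (ii) I would also record the mild modeling assumption that $\sQ$ is the threshold quorum system, i.e.\ it contains every subset meeting the cardinality bound (equivalently, at least one such subset inside any set of $N-f$ non-failed servers), which is the natural majority-style construction the threshold $\lceil\frac{N+k}{2}\rceil$ describes.
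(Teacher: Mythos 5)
Your proof is correct and follows essentially the same route as the paper: the paper cites this lemma from Cadambe et al.\ without reproving it, but its proof of the analogous $(k+2e)$-overlap variation (Lemma~\ref{thm:quor}) uses exactly your two counting arguments — inclusion--exclusion with $|Q_1\cup Q_2|\leq N$ for the intersection bound, and the cardinality comparison $N-f\geq\lceil\frac{N+k}{2}\rceil$ (there with $k+2e$ in place of $k$) for the existence of a live quorum. Your explicit handling of the ceilings and the remark that $\sQ$ is the threshold system of all sufficiently large subsets are both consistent with the paper's definitions.
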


Algorithm~\ref{BACKalg:cas} access the servers via a call to the function $\text{qrmAccess}()$, which returns a set of replies (records) from at least a quorum of servers. Algorithm~\ref{BACKalg:cas} also assumes the availability of a reliable gossip service, which allows the servers to send their most recent finalized tags $t \in T$.

\subsubsection{Local variables}
The state of the server includes a set of records $(t, w, label)\in S\subset\sT\times(\sW\cup\{\bot\})\times \sD$ (line~\ref{BACKalg:srv:default}), where the label $d \in \{\text{`pre'}, \text{`fin'}\}$ refers to metadata that records the phases of the shared-object updates. The clients carry these updates sequentially and in each phase they access the quorum system do not end the phase before getting replies from at least a quorum. Algorithm~\ref{BACKalg:cas} assumes that when $S=\emptyset$, the default triple $(t_0,w_{t_0,i}, \text{`fin'})$ is included in $S$ when reporting on the triple with the highest locally known tag.  

\subsubsection{Protocol phases}
Both the writer and reader protocols use the query phase for discovering a recent record with the label $\text{`fin'}$ as its metadata (line~\ref{BACKgc:alg:srv:queryInit}). During the pre-write phase of write operations (line~\ref{BACKgc:alg:srv:pre-write}), the writer makes sure that at least a quorum of servers, say $Q_{pw}$, store each a coded element with the tag $t'$ and label $\text{`pre'}$. Note that immediately at the end of the prewrite phase, the stored record cannot be accessed by the readers, because when a server replies to queries considers only records with finalized tags (line~\ref{BACKgc:alg:srv:queryInit}). However, after the prewrite phase, the writer starts the finalize phase (line~\ref{BACKln:finalize} and~\ref{BACKgc:alg:srv:fin-writer}), which diffuses the records with the label $\text{`fin'}$ and the tag $t'$ and then waits for a quorum of servers, say $Q_{fw}$, to reply. Immediately after this finalized phase, any query phase (of any read or write operation) will retrieve a tag that is at least as high as $t'$ (because by Lemma~\ref{thm:Q} it holds that $Q_{pw}$ and $Q_{fw}$ must interest) and in that sense tag $t'$ is viable to all clients. Moreover, the existence a stored record that its label is $\text{`fin'}$ implies that the coded elements associated tag $t'$ are stored by at least a quorum of servers, which is $Q_{pw}$. This property allows the reader to retrieve at least $k_{threshold}$ unique coded elements (line~\ref{alg:r:SelMax} to~\ref{BACKalg:read:BWfail}), which are stored at the servers of $Q_{pw}$. Cadambe et al.~\cite{DBLP:journals/dc/CadambeLMM17} set the value of $k_{threshold}$ to $k$ (whereas we consider another value in Section~\ref{sec:basicRes}). We also note that the reader further facilitates the diffusion of finalized tags to a quorum (line~\ref{BACKln:finalizeR}). This and the gossip messages (line~\ref{BACKgc:alg:srv:uponGossip}) allows the system to complete the diffusion of finalized records in the presence of fail-stop failures of writers.

\begin{corollary}[\cite{DBLP:journals/dc/CadambeLMM17}, Theorem~1]
\label{thm:memory}
Algorithm~\ref{BACKalg:cas} emulates a shared atomic read/write memory.
\end{corollary}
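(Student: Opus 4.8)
The plan is to follow the template of Cadambe et al.~\cite{DBLP:journals/dc/CadambeLMM17} and verify that the slight adaptation to the secret-sharing code $\Phi$ does not affect the argument, treating liveness and atomicity separately.

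\emph{Liveness.} Each phase is a single call to $\text{qrmAccess}()$, which returns once a quorum replies. By Lemma~\ref{thm:Q}(ii), when at most $f$ servers have failed there is always a quorum of non-failed servers, so every phase---and hence every operation---terminates. The one nontrivial point is that a read returns a genuine value rather than $\bot$. I would establish the invariant that whenever a tag $t$ carries the label $\text{`fin'}$ at any server, the pre-write phase of the corresponding write has completed, i.e., a quorum $Q_{pw}$ stores each a coded element of $t$. This invariant is preserved because a tag acquires the $\text{`fin'}$ label only through the writer's finalize (line~\ref{BACKgc:alg:srv:fin-writer}), the reader's finalize (line~\ref{BACKgc:alg:srv:fin-reader}), or gossip (line~\ref{BACKgc:alg:srv:uponGossip}), and in all three cases the label is propagated from a tag that was already finalized, tracing back to a writer that executed its pre-write before its finalize. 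Given this, a read selects a finalized tag $t$ in its query (line~\ref{BACKalg:r:SelMax}), and its own finalize quorum $Q$ (line~\ref{BACKln:finalizeR}) intersects $Q_{pw}$ in at least $k$ servers by Lemma~\ref{thm:Q}(i); each such server replies with the coded element associated with $t$, so the test with $k_{threshold}=k$ succeeds and $\Phi^{-1}$ decodes the stored value.

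\emph{Atomicity.} For atomicity I would invoke the standard sufficient conditions for a read/write object (\citep[Chapter 13]{DBLP:books/mk/Lynch96}). Associate with every operation a tag: a write uses the tag $(z+1,i)$ it selects (line~\ref{BACKalg:w:SelMax}), and a read uses the tag $t$ it obtains in its query (line~\ref{BACKalg:r:SelMax}). Order the operations by $<$ on tags, breaking ties so that a write precedes a read carrying the same tag. Writes receive distinct tags because the tag embeds the writer identifier $i$ and, for a single writer, consecutive writes are strictly increasing in $z$, so all writes are totally ordered as the conditions require. The crux is the consistency between real-time order and tag order: if $\pi_1$ completes before $\pi_2$ begins, then $\text{tag}(\pi_1)\le\text{tag}(\pi_2)$, strictly when $\pi_1$ is a write. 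Both reads and writes finalize their tag at a quorum before returning---the writer in line~\ref{BACKln:finalize} and the reader in line~\ref{BACKln:finalizeR}---so upon completion of $\pi_1$ some quorum $Q_1$ holds $\text{tag}(\pi_1)$ with label $\text{`fin'}$. The query of $\pi_2$ reads from a quorum $Q_q$, and by Lemma~\ref{thm:Q}(i) $Q_1\cap Q_q\neq\emptyset$; the intersecting server reports a finalized tag of at least $\text{tag}(\pi_1)$, so $\pi_2$'s query returns a maximal finalized tag $\geq\text{tag}(\pi_1)$. Hence $\text{tag}(\pi_2)\geq\text{tag}(\pi_1)$ for a read and $\text{tag}(\pi_2)>\text{tag}(\pi_1)$ for a write (which increments). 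Read consistency then follows because a read's tag $t$ is finalized and thus identifies a unique write whose value $s$ was encoded into the coded elements the read decodes, so every read returns the value of the $<$-maximal write with tag at most its own, or $v_0$ via the default tag $t_0$ when none exists.

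\emph{Main obstacle.} The delicate point is the invariant that a $\text{`fin'}$ label anywhere witnesses a completed pre-write; it underlies both decodability (liveness) and read consistency, and must be shown preserved across gossip- and reader-induced finalizations, not merely writer finalizations. The remaining work is to confirm that replacing Cadambe et al.'s coding with the secret-sharing $\Phi$ leaves the decoding guarantee intact---it is still an $(N,k)$-MDS (Reed--Solomon) code in the sense of Definition~\ref{def:MDS}---so that the quorum-intersection count of $\geq k$ coded elements still suffices to recover the written value. This is the single place where the ``slight adaptation'' needs to be checked against the original argument.
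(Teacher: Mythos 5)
Your proposal is correct and follows essentially the same route as the source: the paper states this as a corollary imported directly from Cadambe et al.~\cite{DBLP:journals/dc/CadambeLMM17} (Theorem~1) without reproving it, and the argument there is exactly your decomposition into quorum-availability liveness, the ``a \text{`fin'} label witnesses a completed pre-write'' invariant for decodability, and tag-ordering plus quorum intersection checked against the sufficient conditions for atomicity (which the paper later quotes as Corollary~\ref{thm:prec} when proving the self-stabilizing variant). Your identification of the coding adaptation as the only point needing re-verification matches the paper's own remark after Lemma~\ref{thm:quor}.
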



\section{Basic Results}
\label{sec:basicRes}
We present a variance that adds privacy provision to the implementation proposed by Cadambe et al.~\cite{DBLP:journals/dc/CadambeLMM17}. 
Our variation allows at most $e$ malicious servers and at most $f$ failures in an asynchronous message-passing system. 
In this section, we consider malicious servers can send corrupted secret shares to readers, but not corrupted tags or labels, i.e., when a malicious server replies with a tuple $(t,w,d)$, only $w$ might be corrupted. Writers divide secrets and submit the resulting secret shares to the servers. Servers store their secret shares and deliver them to the readers upon request.
In sections~\ref{sec:sys} to~\ref{sec:cost}, we extend our proposal to withstand fail-stop failures and server malicious behavior to also consider recovery after the occurrence of transient faults.

\subsection{Using \textbf{(m, k)} MDS codes for secret sharing}
\label{sec:MDSuse}
The $(N,k)$-MDS code enables the reader to restore the data under the presence of $\frac{N-k}2$ stop-failed servers.
The $(N,k)$-\textit{threshold scheme} for integers $k$ and $N$, such that
$0<k\leq N$,
is defined by Shamir~\cite{Shamir:1979} and splits a secret $s$
into $N$ secret shares $\{s_i\}_{i\in\{1,\ldots,N\}}$. This scheme requires that there exists a
mapping from any $S\subseteq\{s_i\}_{i\in\{1,\ldots,N\}}$ with $|S|\geq k$ to the secret $s$, but
it is impossible to determine $s$ from a set of less than $k$ secret shares.
%

Let $K$ be a finite field, such that its size $|K|$ is a prime number.
The $(N,k)$-Reed-Solomon code, $\Phi:\sS\to\sW$, transforms the input data, i.e., one
element of a $k$ dimensional vector space, $\sS$, over $K$,
into $N$ dimensional vector space, $\sW$, over the same field, $K$, where $k$ and $N$ are as above.  
We call $N$ the block length and $k$ the message length.
The Berlekamp-Welch algorithm, $\Phi^{-1}$, can correct
$(N,k)$-Reed-Solomon codes within $\sO(N^3)$ time in the presence of $e$ 
errors and $f$ erasures, as long as $2e+f<N-k+1$~\cite{BerlekamWelch:1986}, as described by Gemmell and Sudan~\cite{GemmellSudan:1992}.
Note that $(N,k)$-Reed-Solomon codes are a $(N,k)$-threshold scheme~\cite{McEliece:1981}. 
To that end, the input vector $(\sigma_1,\ldots,\sigma_k)\in \sS$ consists of the secret
$\sigma_1$ and randomly chosen values $\sigma_2,\ldots,\sigma_k$ from a
uniform distribution over $\sS$. We use $\Phi$ to map
$(\sigma_1,\ldots,\sigma_k)$ to the secret shares $(w_1,\ldots,w_N)\in\sW$.

\subsection{Quorums of \textbf{(k+2e)}-overlap}
\label{sec:specQ}
We require that any quorum $Q \in \sQ$ has at least $\lceil
\frac{N+k+2e}{2}\rceil$ servers.
Lemma~\ref{thm:quor} uses the quorum definition to shows that any two different
quorums share at least $k+2e$ servers, rather than just $k$ of them as in Cadambe et al.~\cite{DBLP:journals/dc/CadambeLMM17}. These quorums guarantee that once a writer finishes its write operation, any reader can retrieve at least $k+2e$ secret shares and reconstruct the secret. The lemma also shows, similar to Cadambe et al.~\cite{DBLP:journals/dc/CadambeLMM17}, that any two different quorums share at least $k+2e$ servers. This guarantees that after a writer wrote to a quorum, the readers can read a set of coded elements that allows the secret reconstruction.
\begin{lemma}[Variation of~\cite{DBLP:journals/dc/CadambeLMM17}, Lemma 5.1]
\label{thm:quor}
   Suppose that $k\in\{1\ldots, N-2(f+e)\}$. \textbf{(1)} If $Q_1,Q_2\in\sQ$, then
   $|Q_1\cap Q_2|\geq k+2e$. \textbf{(2)} The existence of such a $k$
   implies the existence of $Q\in\sQ$ such that $Q$ has no crashed servers.
\end{lemma}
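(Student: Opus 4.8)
The plan is to mimic the standard quorum-intersection counting argument (as in Cadambe et al.'s Lemma 5.1), adapting the constants to the new overlap parameter $k+2e$. Both parts follow from simple inclusion-exclusion/pigeonhole counting, so I expect the proof to be short; the only genuine content is tracking the arithmetic so that the hypothesis $k \in \{1,\ldots,N-2(f+e)\}$ delivers exactly the two claimed bounds.

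For part \textbf{(1)}, I would start from the defining property that every quorum has size at least $\lceil \frac{N+k+2e}{2}\rceil$. Given $Q_1,Q_2\in\sQ$, inclusion-exclusion gives
\[
|Q_1\cap Q_2| = |Q_1| + |Q_2| - |Q_1\cup Q_2| \geq 2\left\lceil \tfrac{N+k+2e}{2}\right\rceil - N,
\]
where I use $|Q_1\cup Q_2|\leq |P| = N$. Since $2\lceil \frac{N+k+2e}{2}\rceil \geq N+k+2e$, the right-hand side is at least $(N+k+2e)-N = k+2e$, which is exactly the desired bound. This step is routine; the ceiling only helps (it never hurts), so no case distinction on the parity of $N+k+2e$ is needed.

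For part \textbf{(2)}, the claim is that the quorum system is nonempty in the restricted sense that some quorum avoids all crashed servers, given at most $f$ crashes. The natural approach is to exhibit one: take $Q$ to be any set of $N-f$ non-crashed servers (there are at least $N-f$ such servers by hypothesis). I then need to check this $Q$ is large enough to be a legal quorum, i.e. that $N-f \geq \lceil \frac{N+k+2e}{2}\rceil$. This is where the hypothesis $k \leq N-2(f+e)$ enters: it rearranges to $N+k+2e \leq 2(N-f)$, hence $\lceil \frac{N+k+2e}{2}\rceil \leq N-f$, so the set of non-crashed servers does contain a valid quorum. The main (and only real) obstacle is making sure this inequality is tight enough under the ceiling; since the left side after doubling is an integer bound, $\lceil \frac{N+k+2e}{2}\rceil \leq N-f$ follows directly from $N+k+2e \leq 2(N-f)$, so the argument closes cleanly. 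I would state both inequalities explicitly and note that the bound on $k$ was chosen precisely to make part (2) feasible while still guaranteeing the overlap of part (1).
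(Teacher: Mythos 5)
Your proof is correct and follows essentially the same route as the paper's: part (1) is the identical inclusion-exclusion bound $|Q_1\cap Q_2|\geq 2\lceil\frac{N+k+2e}{2}\rceil-N\geq k+2e$, and part (2) is the same counting argument showing $N-f\geq\lceil\frac{N+k+2e}{2}\rceil$ (the paper phrases it via $N-\lfloor\frac{N-k-2e}{2}\rfloor$, which is the same quantity). No gaps.
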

\begin{proof}
   \textbf{(1)} Let
   $Q_1,Q_2\in\sQ$, then 
   $
      |Q_1\cap Q_2| = |Q_1|+|Q_2|-|Q_1\cup Q_2|\geq 2\left\lceil
      \frac{N+k+2e}2\right\rceil-N\geq k+2e.
   $
   \textbf{(2)} Since there are at most $f$ crashed servers, we can
   show that without such $f$ servers, there are still enough alive servers for a quorum.
   It follows that 
   $
      N-f\geq N-\left\lfloor\frac{N-k-2e}2\right\rfloor=\left\lceil\frac{N+k+2e}2\right\rceil\text{.}
   $
\end{proof}
By Lemma~\ref{thm:quor}, the atomicity and liveness analysis in~\cite[][Theorem 5.2 to Lemma 5.9]{DBLP:journals/dc/CadambeLMM17} also holds when Algorithm~\ref{BACKalg:cas} uses $(k+2e)$-overlap quorums rather than $k$, as Cadambe et al.~\cite{DBLP:journals/dc/CadambeLMM17} indented. 

\remove{

\subsection{Quorum Systems of $(k+2e)$-overlap}
\label{s:q}
A \emph{quorum} is a server subset $Q \subseteq \sP$ with at least $\lceil \frac{N+k+2e}{2}\rceil$ elements and $\sQ$ is a \emph{quorum system} that
includes all such quorums. Note that
any two different quorums share at least $k+2e$ servers (as we show in~\cite{DBLP:conf/podc/DolevPS15}, Lemma 1). Our argumentation
also holds for the CASGC algorithm with quorums with an overlap of at least
$k$ as in~\cite{DBLP:journals/dc/CadambeLMM17}, since for our contribution we only
require that $|Q_1\cap Q_2|>0$ for every $Q_1,Q_2\in\sQ$.

The quorum guarantees (both in~\cite{DBLP:conf/podc/DolevPS15} and~\cite{DBLP:journals/dc/CadambeLMM17}) say that once a writer client finishes its write operation, any reader client can access the servers and retrieve at least $k+2e$ secret shares (or,
respectively $k$ coded values) that the client can use for reconstructing the
originally stored data.

\begin{lemma}[A variation of Lemma 5.1 in~\cite{DBLP:journals/dc/CadambeLMM17}]
\label{thm:quor}
%
   $|Q_1\cap Q_2|\geq k+2e$: $1\le k\le N-2e$ $\land$
   $Q_1,Q_2\in\sQ$.

[[@@ change (i) and (ii). @@]]
\end{lemma}

} 

\subsection{Privacy preserving variation of Cadambe et al.}
We say that a secret sharing protocol is $t$\emph{-private} when a set of at
most $t$ servers cannot compute the secret, as in~\cite{Ben-Or:1988}. Note
that a $0$-private protocol preserves no privacy. When the presence of at most $s$ failing servers (which do not deviate from the algorithm behavior) and at most $t$ malicious servers (which deviate from the algorithm behavior only by modifying the data filed of their replies to the clients), we say that the protocol is $(s,t)$\emph{-robust}. This notion is similar to $t$\emph{-resilience}~\cite{Ben-Or:1988}.

In order to tolerate at most $e$ (secret share corruptions made by) 
malicious servers, we propose Algorithm~\ref{BACKalg:cas} as a variation of Cadambe et al.~\cite{DBLP:journals/dc/CadambeLMM17} CAS algorithm that uses $(k+2e)$-overlap quorums and $(N,k)$-Reed-Solomon codes~\cite{ReedSolomon:1960}, which is an $(N,k)$-MDS~\cite{Roth:2006} code that Cadambe et al.~\cite{DBLP:journals/dc/CadambeLMM17} uses. By the atomicity and liveness analysis for the case of $(k+2e)$-overlap quorums (the remark after Lemma~\ref{thm:quor}), the reader retrieves $k+2e$ unique secret shares with at most $e$ manipulated shares. 

\subsubsection{Robustness}
Robustness is added by the ability of the Berlekamp-Welch algorithm to correct errors in the Reed-Solomon codes. Note that malicious servers only introduce corrupted secret shares. Lemma~\ref{thm:robust} shows Algorithm~\ref{BACKalg:cas}'s resilience against up to $e$ malicious servers and up to $f$ stop-failed servers.
\begin{lemma}\label{thm:robust}
   For $k\in\{1\ldots, N-2(f+e)\}$, Algorithm~\ref{BACKalg:cas} is $(f,e)$-robust.
\end{lemma}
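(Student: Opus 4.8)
The plan is to combine the quorum-overlap guarantee of Lemma~\ref{thm:quor} with the error-correction capability of the Berlekamp-Welch decoder for $(N,k)$-Reed-Solomon codes. Recall that $(f,e)$-robustness requires that, in the presence of at most $f$ fail-stop servers and at most $e$ malicious servers (each of which may only corrupt the data field $w$ of its reply, leaving tags and labels intact), every completed read returns the secret associated with the tag it reads. Since the remark after Lemma~\ref{thm:quor} already transfers the atomicity and liveness analysis of~\cite{DBLP:journals/dc/CadambeLMM17} to the setting of $(k+2e)$-overlap quorums, it suffices to argue about a single read operation: when a read obtains tag $t$ in its query phase and $t$ was produced by a write whose pre-write phase reached a quorum $Q_{pw}$, the read's finalize phase collects replies from some quorum $Q$, and every server in $Q\cap Q_{pw}$ holds the coded element associated with $t$.

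First I would carry out the counting step. By Lemma~\ref{thm:quor}(1) we have $|Q\cap Q_{pw}|\geq k+2e$, so the reader is guaranteed at least $k+2e$ secret shares (coordinates of the length-$N$ codeword) for tag $t$; taking $k_{threshold}=k+2e$ in line~\ref{BACKalg:read:BW}, the read therefore does not return $\bot$. Lemma~\ref{thm:quor}(2), which holds precisely because $k\leq N-2(f+e)$, supplies a quorum free of crashed servers, so the $f$ fail-stop servers cannot block the read from reaching a quorum; this is where the bound on $k$ enters to secure liveness.

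Next comes the decoding step. Among the at least $k+2e$ received coordinates, at most $e$ are corrupted, since there are at most $e$ malicious servers, and the remaining at most $N-(k+2e)$ coordinates of the $N$-length codeword are treated as erasures. I would then verify the Berlekamp-Welch precondition $2e'+f'<N-k+1$ with $e'\leq e$ errors and $f'=N-(k+2e)$ erasures: $2e+(N-k-2e)=N-k<N-k+1$, which holds with an exact margin of one. Hence $\Phi^{-1}$ recovers the input vector $(\sigma_1,\ldots,\sigma_k)$, whose first coordinate is the stored secret. As an alternative framing one may apply the bound globally to the full codeword, where $k\leq N-2(f+e)$ gives directly $2e+f\leq 2e+2f\leq N-k<N-k+1$; both routes certify correct decoding, establishing $(f,e)$-robustness.

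The main obstacle is the bookkeeping in the decoding step: being careful about what counts as an \emph{error} versus an \emph{erasure}, and confirming that the number of \emph{uncorrupted} shares the reader is guaranteed to see comes from the quorum overlap $k+2e$ rather than merely from the total server count. Once malicious replies are classified as errors and everything the reader does not receive is classified as an erasure, the Berlekamp-Welch inequality is met by exactly one, which is what makes the quorum size $\lceil\frac{N+k+2e}{2}\rceil$ together with the range $k\leq N-2(f+e)$ the tight choice for simultaneously tolerating $f$ crashes and $e$ corruptions.
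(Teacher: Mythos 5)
Your proposal is correct and follows essentially the same route as the paper's proof: use the $(k+2e)$-overlap of Lemma~\ref{thm:quor} (together with the atomicity/liveness transfer from Cadambe et al.) to guarantee the reader collects at least $k+2e$ shares of which at most $e$ are corrupted, then decode with Berlekamp--Welch. Your explicit verification of the precondition $2e'+f'<N-k+1$ is a welcome addition that the paper leaves implicit, and the only material you omit is the paper's brief (and easy) observation that writers and servers never retrieve secrets and so are unaffected by malicious data fields.
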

\begin{proof}
   If a writer issues a query, pre-write and finalize operations it does not
   retrieve the secret from the server. Thus, writers are immune to
   malicious servers. Servers do not exchange secrets with other servers and
   thus are not directly affected by malicious servers. 
   
   The rest of the proof focuses on showing that when reconstructing the secret, the read operation $\pi_r$ is able to be resilient against corrupted secret shares that malicious nodes may send.
   To do this, the reader queries all the servers about the maximal finalized tags and wait for a response from at least a quorum of servers. Algorithm~\ref{BACKalg:cas} selects the maximum tag, $t$, for the returned set of tags. This tag $t$ is uniquely associated to a write that reached the finalize phase before $\pi_r$'s query. 
   The read operation $\pi_r$ then sends a finalize command on its own and waits for a quorum of servers to respond.
   Note that the reader merely collects
   secret shares from a quorum of servers, but never update coded elements that the servers stores, since $\pi_r$'s  query
   and finalize records only contains a $\bot$ in place of the coded elements, which are the secret shares.
   By Lemma~\ref{thm:quor} and Corollary~\ref{thm:memory}, it follows that any
   reader $p_i$ receives at least $k+2e$ secret shares from the finalize
   phase. Out of these $k+2e$ secret shares, at most $e$ might be corrupted. This is the case even if up to $f$ server are
   failing. Therefore, the reader can decode the secret from this collection of $k+2e$ responses by applying the Berlekamp-Welch error-correction algorithm~\cite{BerlekamWelch:1986}.
\end{proof}

\subsubsection{Privacy} Our approach ensures the privacy of the secret among servers.
Lemma~\ref{thm:private} shows that a group of less than $k$ servers are
not able to reconstruct the secret by combining the secret shares they have
stored locally.
\begin{lemma}\label{thm:private}
   For $k\in\{1\ldots, N-2(f+e)\}$, Algorithm~\ref{BACKalg:cas} is $(k-1)$-private.
\end{lemma}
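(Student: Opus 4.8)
The plan is to show that the joint distribution of the secret shares held by any coalition of at most $k-1$ servers does not depend on the secret value, so the coalition gains no information about $s$. First I would fix the set $S\subseteq \sP$ of colluding servers and observe that the only secret-dependent datum any server $p_i$ stores is its coded element $w_i=\Phi_{p_i}(s)$: in each stored record $(t,w,d)\in\sT\times(\sW\cup\{\bot\})\times\sD$, the tag $t$ and the label $d$ are control variables chosen independently of the object value and hence leak nothing about $s$. Moreover, since the knowledge of a coalition is monotone in $S$, it suffices to treat the worst case $|S|=k-1$. Thus it is enough to reason about the projection $\pi_S(\Phi(\sigma_1,\ldots,\sigma_k))$, where $\sigma_1=s$ is the secret and each $\sigma_\ell$, $2\le\ell\le k$, is drawn uniformly and independently from the field $K$ (Section~\ref{sec:MDSuse}).

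I would then pass to the polynomial representation of the $(N,k)$-Reed-Solomon encoding: it evaluates $p(x)=\sigma_1+\sigma_2 x+\cdots+\sigma_k x^{k-1}$ at $N$ distinct nonzero points $\alpha_1,\ldots,\alpha_N\in K$, so that $w_i=p(\alpha_i)$ and the secret is the constant term $\sigma_1=p(0)$. The coalition observes $\{p(\alpha_i)\}_{i\in S}$, i.e.\ exactly $k-1$ evaluations at nonzero points. The key step is to prove that, for every fixed candidate secret $s'\in K$, the map sending the random coefficients $(\sigma_2,\ldots,\sigma_k)\in K^{k-1}$ to the observed shares $(p(\alpha_i))_{i\in S}\in K^{k-1}$ is a bijection. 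This follows from Lagrange interpolation: given any target values $(u_i)_{i\in S}$ and the single additional constraint $q(0)=s'$, we have $|S|+1=k$ evaluation constraints at $k$ distinct points (the $\alpha_i$ are distinct and nonzero, hence distinct from $0$), so there is a unique polynomial $q$ of degree $<k$ meeting them; its non-constant coefficients are the unique preimage. Equivalently, the relevant $(k-1)\times(k-1)$ coefficient matrix $(\alpha_i^{\,\ell})_{i\in S,\,1\le\ell\le k-1}$ is a scaled Vandermonde matrix and is therefore non-singular, which is precisely the MDS property of Definition~\ref{def:MDS} at work.

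Having established this bijection, I would conclude that, conditioned on the secret taking any fixed value $s'$, the shares seen by the coalition are uniformly distributed over $K^{k-1}$; in particular their distribution is identical for all $s'$. Hence the posterior distribution of the secret given the coalition's view equals its prior, so a set of at most $k-1$ servers cannot determine $s$, which is exactly the claimed $(k-1)$-privacy. Finally, since each write operation re-encodes its value with freshly and independently chosen random coefficients $\sigma_2,\ldots,\sigma_k$, the same argument applies verbatim and independently to every stored version, so no version is revealed even when the coalition pools all the shares it has accumulated over time.

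I expect the main obstacle to be the clean justification of the bijection/invertibility step: precisely arguing that appending the constraint $p(0)=s'$ to the $k-1$ observed evaluations yields a uniquely solvable interpolation problem for \emph{every} secret candidate is where the MDS structure does the real work. Once that is in place, the information-theoretic privacy conclusion is routine, and it is worth noting that malicious behavior is irrelevant here, since privacy concerns only what honest-but-curious colluding servers can infer from their stored shares.
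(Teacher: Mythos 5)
Your proposal is correct and follows essentially the same route as the paper: both reduce $(k-1)$-privacy to the fact that $(N,k)$-Reed--Solomon codes with $\sigma_1=s$ and uniformly random $\sigma_2,\ldots,\sigma_k$ form a $(N,k)$-threshold scheme, so fewer than $k$ shares reveal nothing about $s$. The only difference is that the paper disposes of this key fact by citing McEliece--Sarwate~\cite{McEliece:1981} (and separately notes that $k=1$ gives only $0$-privacy, which your argument covers implicitly since the coalition is then empty), whereas you prove it in full via the Lagrange-interpolation/Vandermonde bijection showing the coalition's view is uniform on $K^{k-1}$ independently of the secret.
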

\begin{proof}
   Let $t$ be a tag and $k>1$. A set of $k-1$ servers store together $k-1$
   secret shares associated with the tag $t$. Since the secret shares encode a
   secret using Reed-Solomon codes, it is impossible to compute the original
   secret with less than $k$ secret shares~\cite{McEliece:1981}. The case of
   $k=1$ implies that the secret shares are the secret itself and, thus,
   privacy is compromised, i.e., it is $0$-private. 
   It follows that Algorithm~\ref{BACKalg:cas} is $(k-1)$-private.
\end{proof}
Note that in the case of $k=1$, even if privacy is not protected, it is still
possible to decipher correctly corrupted secret shares. This holds because the reader
blocks until it reads at least $1+2e$ secret shares and, thus, the additional $2e$ secret shares
contain redundant information that allows the success of the Berlekamp-Welch code for error correction.

\section{Models}
\label{sec:sys}
We consider an asynchronous message-passing networks in which the nodes can be modeled as finite state-machines that exchange messages via communication links (with bounded capacity).
 
\subsection{Communication model}
\label{sec:commModel}
The network topology is of a fully-connected graph, $K_{N}$, and any pair of nodes has access to a bidirectional communication channel that, at any time, has at most $\capacity \in \N$ packets. Every two nodes exchange (low-level messages called) \emph{packets} to permit delivery of (high level) messages. When node $p_i \in \sP$ sends a packet, $m$, to node $p_j \in \sP\setminus \{p_i\}$, the operation ${send}$ inserts a copy of $m$ to $\mathit{channel}_{i,j}$, while respecting the upper bound $\capacity$ on the number of packets in the channel. In case $\mathit{channel}_{i,j}$ is full, i.e., $|\mathit{channel}_{i,j}|=\capacity$, the sending-side simply overwrites any message in $\mathit{channel}_{i,j}$. When $p_j$ receives $m$ from $p_i$, the system removes $m$ from $\mathit{channel}_{i,j}$. As long as $m \in \mathit{channel}_{i,j}$, we say that $m$'s message is in transit from $p_i$ to $p_j$. Recall that we assume access to a self-stabilizing end-to-end protocol~\cite{DBLP:journals/ipl/DolevDPT11,DBLP:conf/sss/DolevHSS12} that provides reliable (FIFO) message delivery (over unreliable non-FIFO channels that are subject to packet omissions, reordering and duplication).

\subsection{Execution model}
\label{sec:interModel}
Our analysis considers the \emph{interleaving model}~\cite{Dolev:2000}, in which the node's program is a sequence of \emph{(atomic) steps}. 
Each step starts with an internal computation and finishes with a single communication operation, i.e., message $send$ or $receive$. 

%
%
The {\em state}, $s_i$, of $p_i \in \sP$ includes all of $p_i$'s variables as well as the set of all incoming communication channels. Note that $p_i$'s step can change $s_i$ as well as remove a message from $channel_{j,i}$ (upon message arrival) or add a message in $channel_{i,j}$ (when a message is sent). The term {\em system state} refers to a tuple of the form $c = (s_1, s_2, \cdots, s_N)$ (system configuration), where each $s_i$ is $p_i$'s state (including messages in transit to $p_i$). We define an {\em execution (or run)} $R={c_0,a_0,c_1,a_1,\ldots}$ as an alternating sequence of system states $c_x$ and steps $a_x$, such that each system state $c_{x+1}$, except for the starting one, $c_0$, is obtained from the preceding system state $c_x$ by the execution of step $a_x$. 
%

Let $R'$ and $R''$ be a prefix, and respectively, a suffix of $R$, such that $R'$ is finite sequence, which starts with a system state and ends with a step $a_x \in R'$, and $R''$ is an unbounded sequence, which starts in the system state that immediately follows step $a_x$ in $R$. In this case, we can use $\circ$ as the operator to denote that $R=R' \circ R''$ concatenates $R'$ with $R''$. 

\subsection{Fault model}
%
%
We model a failure as a step that the environment takes rather than the algorithm. 
We consider failures that can and cannot cause the system to deviate from fulfilling its task (Figure~\ref{fig:fModel}).
The set of \emph{legal executions} ($LE$) refers to all the executions in which the requirements of the task $T$ hold. For example, $T_{\text{CAS}(k)}$ denotes our studied task of shared memory emulation and $LE_{\text{CAS}(k)}$ denotes the set of executions in which the system fulfills $T_{\text{CAS}(k)}$'s requirements. We say that a system state $c$ is {\em legitimate} when every execution $R$ that starts from $c$ is in $LE$. When a failure cannot cause the system execution (that starts in a legitimate state) to leave the set $LE$, we refer to that failure as a benign one. We consider failures that can cause the system execution to leave the set $LE$ as transient faults, which refer to any temporary violation of the assumptions according to which the system was designed to operate (as long as program code remains intact). Self-stabilizing algorithms deals with benign failures (while fulfilling the task requirements) and they can also recover after the occurrence of transient faults within a bounded period.  

\begin{figure*}[t!]
   \centering
   \includegraphics[clip=true,scale=0.75]{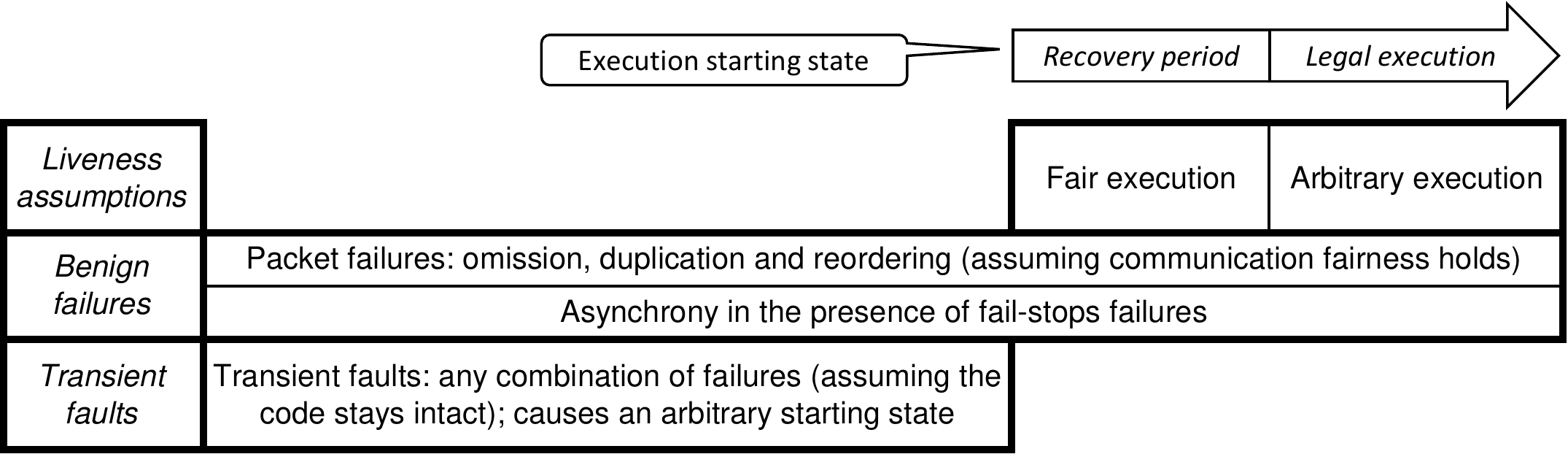}
\caption{The fault model and liveness assumptions during the system execution.}
   \label{fig:fModel}
\end{figure*}

\subsubsection{Benign failures}
\label{sec:benignFailures}
The algorithmic solutions that we consider are for asynchronous (message-passing) systems and thus they are oblivious to the time in which the packets arrive and departure (and require no explicit access to clock-based mechanisms, which may or may not be used by the system underlying mechanisms, say, for congestion control at the end-to-end protocol). 

\paragraph{Communication fairness.~~}
Recall that we assume that the communication channel handles packet failures, such as omission, duplication, reordering (Section~\ref{sec:commModel}). We assume that if $p_i$ sends a message infinitely often to $p_j$, node $p_j$ receives that message infinitely often. We call the latter the \textit{fair communication} assumption. Note that fair communication provides no bound on the channel communication delays. It merely says that a message is received within some finite time if its sender does not stop sending it (until it receives the acknowledgment message). 

\paragraph{Node failure.~~}
We assume that the failure of node $p_i \in \sP$ implies that its hosted client and server stop sending and receiving messages (and it also stops executing all other steps). We assume that the number of nodes that host servers and fail is bounded by $f$ and that $2f<N$ for the sake of guaranteeing correctness~\cite{DBLP:books/mk/Lynch96}. We bound only by $N$ the number of nodes that host clients and fail. Moreover, nodes that host servers resume within some unknown finite time and reset the server state machines by removing all stored records. However, nodes that host clients do not resume (or allow the invocation) any client process until they call a procedure that we name $\mathit{localReset}()$. (We specify how a global reset mechanism uses this procedure in Section~\ref{s:bld}. Moreover, Section~\ref{sec:incarnationNumber} provides an elegant extension that lets nodes to recycle their client identifiers and thus the above assumption is not restrictive.) 

\subsubsection{Transient faults}
We consider arbitrary violations of the assumptions according to which the system and the communication network design to operate. 
We refer to these violations and deviations as \textit{transient faults} and assume that they can corrupt the system state arbitrarily (while keeping the program code intact). We preserve the occurrence of transient faults as an extremely rare event. Our model assumes that the last transient fault occurred before the system execution started. Moreover, it left the system to start in an arbitrary state (while keeping the program code intact).

\subsection{Dijkstra's self-stabilization criterion}
\label{sec:Dijkstra}
An algorithm is \textit{self-stabilizing} with relation to the task $LE$, when every (unbounded) execution $R$ of the algorithm reaches within a bounded period a suffix $R_{legal} \in LE$ that is legal. That is, Dijkstra~\cite{DBLP:journals/cacm/Dijkstra74} requires that $\forall R:\exists R': R=R' \circ R_{legal} \land R_{legal} \in LE$, where the length of  $R'$ is polynomial in $n$. 
We say that a system \textit{execution is fair} when every step that is applicable infinitely often is executed infinitely often and fair communication is kept. Self-stabilizing algorithms often assume that $R$ is a fair execution. Wait-free algorithms guarantee that non-failing operations always become (within a finite number of steps) complete even in the presence of benign failures. Note that fair executions do not consider fail-stop failures (that were not detected by the system whom then excluded these failing nodes from reentering the system). Therefore, we cannot demonstrate that an algorithm is wait-free by assuming that the system execution is always fair. 

\subsection{Self-stabilization in the presence of seldom fairness}
As a variation of Dijkstra's self-stabilization criterion, we propose design criteria in which (i) any execution $R=R_{recoveryPeriod}\circ R': R' \in LE$, which starts in an arbitrary execution and has a prefix ($R_{recoveryPeriod}$) that is fair, reaches a legitimate system state within a bounded prefix $R_{recoveryPeriod}$. (Note that the legal suffix $R'$ is not required to be fair.) Moreover, (ii) any execution $R=R'' \circ R_{globalReset}\circ R''' \circ R_{globalReset}\circ \ldots: R'',R''' \in LE$ in which the prefix of $R$ is legal, and not necessarily fair but includes at most $\bigO(N \cdot z_{\max} )$ (Section~\ref{sec:resetServices}) write operations, has a suffix, $R_{globalReset}\circ R''' \circ R_{globalReset}\circ \ldots$, such that $R_{globalReset}$ is required to be fair and bounded in length but might permit the violation of liveness requirements, i.e., a bounded number of operations might be aborted (as long as the safety requirement holds). Furthermore, $R'''$ is legal and not necessarily fair but includes at least $z_{\max}$ write operations before the system reaches another $R_{globalReset}$. Since we can choose $z_{\max}\in \mathbb{Z}^+$ to be a very large value, say $2^{64}$, and the occurrence of transient faults is very rare, we refer to the proposed criteria as one for self-stabilizing systems that their executions fairness is unrequited except for seldom periods. Next, we define how we bound the length of $R_{recoveryPeriod}$ and $R_{globalReset}$, which are the complexity measures.

\subsection{Complexity Measures}
\label{sec:timeComplexity}
The main complexity measure of self-stabilizing systems is the time it takes the system to recover after the occurrence of a last transient fault. In detail, in the presence of seldom fairness this complexity measure considers the maximum of two values: (i) the maximum length of $R_{recoveryPeriod}$, which is the period during which the system recovers after the occurrence of transient failures, and (ii) the maximum length of $R_{globalReset}$. We consider systems that use of bounded memory and thus this is a secondary complexity measure we bound the memory that each node needs to have (after considering a version that does not consider such bounds, as in Cadambe et al.~\cite{DBLP:journals/dc/CadambeLMM17}). However, the number of messages sent during an execution does not have an immediate relevance in the context of self-stabilization, because self-stabilizing systems never stop sending messages~\cite[Chapter 3.3]{Dolev:2000}. Next, we present the definitions, notations and assumptions related to the main complexity measure. 

\subsubsection{Message round-trips}
\label{sec:communicationFairness}
Let $c \in R$ be a state, such that immediately after $c$, node $p_i$ sends a message $m$ to $p_j$. Moreover, immediately after $c'$ (that follows $c$), $p_j$ receives message $m$ (or a message that was sent from $p_i$ to $p_j$ after $m$) and sends a response message $r_m$ back to $p_i$. Then, immediately after state $c'' \in R$ (that appears after $c'$ in $R$), $p_i$ receives $p_j$'s response, $r_m$ (or a response that was sent from $p_j$ to $p_i$ after $r_m$). If $c$, $c'$ and $c''$ do appear in $R$, we say that $p_i$ has completed with $p_j$ a round-trip of message $m$.

\subsubsection{Completing client rounds}
A call to a client procedure results in a number of requests that the client sends to all servers and then waits for the server responses. The client may decide not to wait for responses from all servers and continue to the request for the next phase or reach the end of the procedure execution. We say that a client starts a new round when, after a finite period of internal computation, it sends the first request (of any phase) to the servers. Moreover,  this client ends this round when it finished waiting for the server responses (and perhaps also reaches the procedure end; regardless of whether it enters branches). The client at node $p_i$ performs a complete round when it starts a new round in $c_{start} \in R$ and ends it in $c_{end} \in R$. 

We are also interested in the cases of incomplete operations, which do not have necessarily a proper start to their first round. In this case, we say that the client at node $p_i$ completes a round when it reaches $c_{end} \in R$. Note that whenever $p_i$ does not fail, $c_{end}$ is well-defined, because it refers to the case in which $p_i$ stops waiting for the server responses and move on (to the next phase or reaching the end of the client procedure). For the case in which $p_i$ fails, we define $c_{end}$ to be the system state that immediately follows the step in which $p_i$ fails.  

\subsubsection{Complete node and server iterations}
\label{ss:completeIteration}
Recall the fact that self-stabilizing algorithms can never stop communicating~\cite[Chapter 3.3]{Dolev:2000}. The program of a self-stabilizing algorithm often includes a do-forever loop or, as in case of the proposed algorithm, a repeated gossip exchange among the servers. Next, we define the terms complete iterations, which refers to such gossip exchanges.

\paragraph{Node complete iterations.~~}
Let $P(i) \subseteq P$ be the set of nodes with whom $p_i$ completes a message round trip infinitely often in $R$.
Suppose that immediately after the system state $c_{start} \in R$, node $p_i$ takes a step that includes the execution of the first line of the do forever loop (or of the gossip procedure), and immediately after system state $c_{end} \in R$, it holds that: (i) $p_i$ had finished the iteration that it had started in $c_{begin}$ (regardless of whether it enters branches), and (ii) the message $m_j$ completes its round trip, where $m_j$ refer to any message that $p_i$ sends during that iteration to node $p_j \in P(i)$. In this case, we say that $p_i$'s iteration starts at $c_{begin}$ and ends at $c_{end}$. 
 
\paragraph{Server complete iterations.~~}
The servers repeatedly receive messages from all other non-failing servers and then, after some internal processing, send messages to all other servers. The successful arrival of such a message to any server results again in some internal processing and then sending messages to all other servers. We say that: (i) the iteration of the server at $p_i \in \sP$ starts when $p_i$ first gets a message from another server, (ii) after some internal processing, $p_i$ sends a message to every other server at $p_j$, (iii) this iteration continues toward letting $p_j$ to receive that message (or a later message) from $p_i$; at least once, and then (iv) letting $p_j$'s responds (or a later message from $p_j$) to arrive to $p_i$ and by that ending this server iteration. Given an execution $R$, we say that its prefix $R'$ includes a complete iteration of the server at $p_i \in \sP$ if $R'$ includes $p_i$'s iteration start and then (after that start) and $p_i$'s iteration end appears in $R'$.

\subsubsection{Asynchronous cycles}
\label{ss:asynchronousCycles}
We measure the time between two system states in a fair execution by the number of (asynchronous) cycles between them. The definition of (asynchronous) cycles considers the term of complete iterations.  
The first (asynchronous) cycle (with round-trips) of fair execution $R=R'' \circ R'''$ is the shortest prefix $R''$ of $R$, such that each non-failing node and server in the network executes at least one complete iteration in $R''$, where $\circ$ is the concatenation operator (Section~\ref{sec:interModel}). Moreover, each node that runs a client procedure during $R''$ must complete within $R''$ at least one client round. The second cycle in execution $R$ is the first cycle in execution $R''$, and so on.

\section{External Building Blocks}
\label{s:exbld}

The proposed algorithm uses a number of external building blocks, which we specify next.   

\subsection{Specifications of gossip and quorum services}
\label{sec:basic}
\label{sec:coomSafe}
We consider a gossip functionality that has the following interface. Servers can send gossip messages $msg$ by calling $\text{gossip}(msg)$. When a gossip message arrives, the receiving server raises the gossip arrival event with a set $\{gossip[k]\}_{p_k \in \sP}$ that includes the most recently received message from every server. For the sake of simple presentation, we allow the server at node $p_i$ to use the item $gossip[i]$ for aggregating the gossip information that it later gossips to all other servers. The gossip functionality that we consider guarantees the following: (a) every gossip message that the receiver delivers to its upper layer was indeed sent by the sender, and (b) such deliveries occur according to the communication fairness guarantees (Section~\ref{sec:benignFailures}). That is, our gossip service is unreliable, as opposed to the one used by Cadambe et al.~\cite{DBLP:journals/dc/CadambeLMM17}. 

We consider a system in which the clients and servers behave according to the following terms of service. At any time, any node runs only at most one client (that is either a writer or a reader). That client calls the function $\text{qrmAccess}()$ sequentially. Moreover, the server algorithm acknowledges (by calling $\text{reply}()$) every request. For this client-server behavior, the quorum-based communication functionality guarantees the following. (a) At least a quorum of servers receive, deliver and acknowledge every request. (b) The (non-failing) requesting client receives at least a quorum of these acknowledgments. (c) Immediately before the call to $\text{qrmAccess}()$ returns, the client-side of this service clears its state from information related to the request. For the sake of simple presentation, we allow the client to call $\text{qrmAccess}(msg)$ with two kinds of parameters; $msg$ is either a single message to be sent to all servers, such as in the case of a query request, or a vector that includes an individual message for each server, such as in the case of a prewrite request.  

We detail the above requirements in Definition~\ref{def:coomSafe} and use Corollary~\ref{thm:basicCorollary} in the correctness proof of the proposed algorithm. (Section~\ref{sec:basic:App} of the Appendix presents a self-stabilizing implementation of such services, which we do not include here since it is not a major contribution.)
 
\begin{definition}[\textbf{Legal execution of the gossip and quorum services}]
	\label{def:coomSafe}
	Let $R$ be an execution of the algorithm that provides gossip and quorum services in which there is a client at $p_i \in \sP$, a server at $p_j \in \sP$ and another server at $p_k \in \sP$. 
	
	\begin{itemize} 
		\item \textbf{\emph{Correct behavior of the gossip functionality.~~}} Suppose that (1) every message that $p_k$ delivers to the upper layer as a gossip from $p_j$ was indeed sent by $p_j$ earlier in $R$. Moreover, (2) such deliveries occur infinitely often in $R$. In this case, we say that the behavior of the gossip functionality from the server at $p_j$ to the one at $p_k$ is correct.
		
		\item \textbf{\emph{Terms of service for the quorum-based communication functionality.~~}}
		Suppose that in $R$, at any time, any node runs only at most one client (that is either a writer or a reader). Moreover, that client calls the function $\text{qrmAccess}()$ sequentially, i.e., only after the return from $\text{qrmAccess}()$ may the client call $\text{qrmAccess}()$ again. Furthermore, suppose that the server algorithm acknowledges (by calling $\text{reply}()$) every request that was delivered to it. In this case, we say that $R$ satisfies the terms of service of the quorum-based communication functionality.
		
		\item \textbf{\emph{Correct behavior of the quorum-based communication functionality.~~}}
		Suppose that the client at $p_i$ sends a request, i.e., $p_i$ calls the function $\text{qrmAccess}()$ in step $a_{qrmAccess} \in R$.
		Moreover, after $a_{qrmAccess}$, execution $R$ includes steps (i) to (v), where (i) refers to the steps in $R$ in which at least a quorum of servers receive $a_{qrmAccess}$'s request, (ii) refers to steps in $R$ in which at least a quorum delivers $a_{qrmAccess}$'s request, (iii) refers to steps in $R$ in which at least a quorum acknowledges $a_{qrmAccess}$'s request and (iv) refers to steps in $R$ in which the client at $p_i$ receives at least a quorum of these acknowledgments to $a_{qrmAccess}$'s request, which results in (v) a step in $R$ in which $p_i$ lets the function (which $p_i$ had previously called in step $a_{qrmAccess}$) to return. Furthermore, any such return is only the result of the above sequence of steps (i) to (v). In this case, we say that the functionality of quorum-based communication is correct. 
		In addition, immediately before the call to $\text{qrmAccess}()$ returns, the client-side of this service clear its state from information related to the request.
		
		\item \textbf{\emph{A legal execution of gossip and quorum services}.~~} 
		Let $R'$ and $R''$ be a prefix, and respectively, a suffix of $R$, such that $R=R' \circ R''$ is an execution of gossip and quorum services that satisfies the terms of service of the quorum-based communication functionality. We say that $R''$ is legal when it presents: (1) a correct gossip functionality from the server at $p_j$ to the server at $p_k$, and (2) a correct functionality of quorum-based communication with respect to the client at $p_j$.
	\end{itemize}
\end{definition}
 
\begin{corollary}[Self-stabilizing gossip and quorum-based communications]
\label{thm:basicCorollary}
Let $R$ be an Algorithm~\ref{alg:comm}'s (unbounded) execution that satisfies the terms of service of the quorum-based communication functionality. Suppose that $R$ is fair and its starting system state is arbitrary. Within $\bigO(1)$ asynchronous cycles, $R$ reaches a suffix $R'$ in which 
\textbf{\emph{(1)}} the gossip, and 
\textbf{\emph{(2)}} the quorum-based communication functionalities are correct.
\textbf{\emph{(3)}} During $R'$, the gossip and quorum-based communication complete correctly their operations within $\bigO(1)$ asynchronous cycles.
\end{corollary}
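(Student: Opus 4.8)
The plan is to build the argument in layers, reducing the correctness of the gossip and quorum services to the self-stabilization of the underlying end-to-end protocol. First I would invoke the assumed self-stabilizing end-to-end protocol~\cite{DBLP:journals/ipl/DolevDPT11,DBLP:conf/sss/DolevHSS12}: starting from the arbitrary system state and under the fairness and communication-fairness assumptions, this protocol reaches within $\bigO(1)$ asynchronous cycles a suffix in which it provides reliable FIFO delivery over the bounded-capacity channels. The key consequence is that, after this point, every message a receiver delivers to its upper layer was genuinely sent by the stated sender (the stale packets present in $\mathit{channel}_{i,j}$ at the start are flushed or discarded by the protocol), and every message a sender keeps retransmitting is eventually delivered. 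I would take the claimed suffix $R'$ to begin no earlier than this point.

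Next I would establish gossip correctness (part (1)). Authenticity, i.e.\ condition (1) of the gossip clause of Definition~\ref{def:coomSafe}, follows immediately from the faithful delivery guarantee of the stabilized end-to-end layer. For condition (2), the servers execute a repeated gossip exchange, so $p_j$ calls $\text{gossip}(\cdot)$ infinitely often; by communication fairness these messages are delivered to $p_k$ infinitely often. The only subtlety is the arbitrary initial content of the receiver arrays $\{gossip[k]\}_{p_k\in\sP}$, which may hold values that were never sent. Here I would use the overwrite semantics of the service: each newly delivered gossip from $p_j$ overwrites the entry for $p_j$, so after $p_j$'s first genuine post-stabilization delivery the entry is correct, and within $\bigO(1)$ cycles every entry reflects a genuinely sent message.

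The quorum functionality (parts (2) and (3)) is where I expect the main obstacle. The difficulty is that the arbitrary starting state may corrupt the client-side bookkeeping that $\text{qrmAccess}()$ uses to collect acknowledgments, so that an in-flight call could return \emph{without} the proper sequence of steps (i)--(v) of Definition~\ref{def:coomSafe}, or could fail to return at all. My plan is to bound the damage to at most the single call in progress at the start of $R$: the terms of service force the client to call $\text{qrmAccess}()$ sequentially, and the server algorithm acknowledges every delivered request, so even a corrupted in-flight call terminates under fairness; crucially, property (c)/(v) clears the client-side state before the call returns. Hence the \emph{next} invocation begins from a clean client state, and I would then show that for such an invocation the return is caused exactly by receiving a quorum of acknowledgments.

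Finally I would close parts (2) and (3) together. Part (2) of Lemma~\ref{thm:quor} guarantees that a quorum of non-failed servers always exists, so for a clean invocation the requests reach, are delivered by, and are acknowledged by at least a quorum, and the client collects at least a quorum of these acknowledgments, establishing the correct-behavior clause. Counting asynchronous cycles along the way (request dissemination, server reply, acknowledgment collection), each a constant number of message round-trips, yields the $\bigO(1)$-cycle completion bound of part (3), both for $\text{qrmAccess}()$ and for a gossip round, which completes the argument.
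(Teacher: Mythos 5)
Your proposal is correct and follows essentially the same route as the paper's proof (Theorem~\ref{thm:basic} in the Appendix): it layers the argument on the stabilized end-to-end token protocol, establishes gossip correctness by tracking the send/receive buffers, handles the quorum service by isolating the possibly-corrupted in-flight request (the paper's Claim~\ref{thm:quorumReceiveMore}) and then showing a clean invocation returns only upon a quorum of matching acknowledgments (Claim~\ref{thm:quorumReceive}), and closes with the same constant round-trip count for part (3).
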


\subsection{Self-stabilizing Global Reset}
\label{s:bld}
The proposed algorithm uses the reset mechanism for dealing with the case in which the system includes a tag of $(z_{\max},j):p_j \in \sP$ (Section~\ref{sec:resetServices}). We note that the reset mechanism requires the participation of all the nodes in the network, i.e., they require execution fairness (Section~\ref{sec:Dijkstra}). We specify the interface between the proposed algorithm and the self-stabilizing global reset mechanism.

\subsubsection{The \textit{localReset}() and \textit{globalReset}() functions}
During an execution that is legal (with respect to the reset mechanism), the self-stabilizing global reset process starts when any node, which we refer to as the \emph{(reset) initiator}, calls the $\mathit{globalReset}(t)$ function; concurrent calls are allowed. The reset mechanism lets every pair of nodes to exchange messages infinitely often so that it can make sure that all nodes complete the different phases of the reset process, which starts immediately after the first call to $\mathit{globalReset}()$. In the first phase, all client and server processes are disabled and each node calls the function $\mathit{localReset}()$. In the second phase, these processes are enabled, the reset process ends and the system resumes normal operation.

We assume that every machine, such as a server or a client, implements the function {\textit{localReset}}($t$). For the case of the servers, this local reset procedure removes any record from the server storage other than the ones with the tag $t$ and then replaces the tag $t=(z,k)$ in that record with the tag $(1,k)$. Note that when $t=t_0$ (Section~\ref{ss:emulatedObject}), no record is kept in the server storage. For the case clients, the call to ${\mathit{localReset}}(t)$ simply stops any client operation and the ignores the argument $t$.
The requirements below specify the set of legal executions. Note that the system has to reach a safe system state even when no global reset was (properly) initialized, e.g., no node has called $\mathit{globalReset}()$, but still, some nodes are performing reset due to transient faults. 

\subsubsection{Requirements}
\label{ss:BBBreq}
Within a bounded number of asynchronous cycles from the first step in $R$ that includes a call to ${\mathit{globalReset}}(t)$, the reset service disables all hosted processes, which are the servers and clients, and resets these processes by calling their ${\mathit{localReset}}(t)$ functions, which abort all read and write operations. Moreover, every node cleans its incoming and outgoing channels, e.g., it fills these channels with reset messages so that non-reset-related messages are absent from these channels, as in~\cite{Dolev:2000} Chapter 3.2. (By reset messages we mean messages that their type is only used by the reset mechanism.) Then, the reset mechanism enables every (local) machine. 

We further require the following.
We say that a given system state is \emph{reset-free} when all communication channels do not include reset-related messages, and all machines (clients and servers) are enabled. Given execution $R$ of the system, we say that $R$ \emph{does not include an explicit reset} when throughout $R$ no node $p_i \in \sP$ calls $\mathit{globalReset}()$. Suppose that execution $R$ does not include an explicit reset and that all of its system states are reset-free. In this case we say that $R$ \emph{does not include a spontaneous reset}. An execution $R$ that does not contain neither a spontaneous reset, nor an explicit reset, is \emph{reset-free}. When execution $R$ does include (a spontaneous or an explicit) reset, we require $R$ to \emph{be done with reset} within a bounded number of $\Psi$ asynchronous cycles.  Namely, (starting from an arbitrary system state) within $\Psi$ asynchronous cycles, the system reaches a system state after which the execution is reset-free.

\subsubsection{Possible implementations}
The proposed algorithm uses a self-stabilizing global reset mechanism that resembles the one in~\cite{DBLP:conf/wdag/AwerbuchPVD94}. Another way to go is to use a self-stabilizing consensus algorithm~\cite{DBLP:conf/netys/BlanchardDBD14}. Since similar mechanisms exist and they are not hard to extend so that the above specifications are met, we do not consider the algorithm for implementing the specified mechanism for self-stabilizing global reset to be within the scope of this work.

\section{An Unbounded Self-stabilizing CAS Algorithm}
\label{sec:unbAlg}
%
\begin{algorithm*}[t!]
\algoSize
\textbf{The client:~}\tcc{At any time, $p_i$'s client is a writer, a reader or none but not both}
\textbf{$writer(s)$:~}\tcc{Store the secret $s$ as a new version of the shared object}
\tcc{Query for finalized tags and after hearing \text{from} a quorum get the maximal tag}
\textbf{let} $(z,j) := \max (\{t' : ( t', \bullet ) \in \text{qrmAccess}((\bot,\bot,\text{`qry'}))\})$\nllabel{alg:w:SelMax}

$\text{qrmAccess}(((z+ 1,i),\{\Phi_{p_j}(s)\}_{p_j\in \sP},\text{`pre'}))$\tcc*{Prewrite and wait for a quorum of replies}\nllabel{ln:preWrite}
$\text{qrmAccess}(((z+ 1,i),\bot,\text{`fin'}))$\nllabel{ln:finalize}\tcc*{ Finalize and wait for a quorum of replies }
$\text{qrmAccess}(((z+ 1,i),\bot,\text{`FIN'}))$\nllabel{ln:FINALIZE}\tcc*{ FINALIZE and wait for a quorum of replies }
$\Return$\nllabel{alg:w:return}\;

\BlankLine

\textbf{$reader()$:~}\tcc*{The reader retrieves the current object version, or $\bot$ upon failure}
\textbf{let} $t := \max (\{t' : (t', \bullet ) \in \text{qrmAccess}((\bot,\bot,\text{`qry'}))\})$\nllabel{alg:r:SelMax}~\tcc*{Query as in line~\ref{alg:r:SelMax}}
\textbf{let} $Q:=\text{qrmAccess}((t,\bot,\text{`fin'}))$\nllabel{ln:finalizeR}\tcc*{ Ask and wait for finalized records from a quorum}
\lIf( \textbf{/*} \texttt{Test the number of responces} \textbf{*/}){$|\{(t,w,\text{`fin'})\in Q:w\neq\bot\}|\ngeq k_{threshold}$}{\Return $\bot$}\nllabel{alg:read:BWfail}
\lElse( \textbf{/*} \texttt{Use the retrived shares for decoding} \textbf{*/}){$\Return(\Phi^{-1}(w:\{(t,w,s)\in Q:w\neq\bot\}))$\nllabel{alg:read:BW}}

\BlankLine

\textbf{The server:}\\ 
$S\subset\sT\times(\sW\cup\{\bot\})\times \sD$ is a record set, where $\sT$ $=$ $\sZ \times P$ is the set of tags, $\sW$ the set of coded words and $\sD=\{\text{`pre'}, \text{`fin'}, \text{`FIN'} \}$ the set of phases. When $S=\emptyset$, we use the default triple $(t_0,w_{t_0,i}, \text{`fin'})$ when reporting on the triple with the highest locally known tag\nllabel{alg:srv:default}\;

\BlankLine

\noindent \textbf{Event handlers at the server:}\\
\Upon query arrival from $p_j$'s client to $p_i$'s server \Do{\label{alg:srv:queryInit}
      \lIf{$p_j$'s client is a reader}{$\text{reply}(j, (maxPhase(\sD \setminus \{\text{`pre'}\}), \bot, \text{`qry'}))$\label{alg:srv:queryInitIf}}
      \lElse{$\text{reply}(j, (maxPhase(\sD), \bot, \text{`qry'}))$\label{alg:srv:queryInitElse}}
}

   \Upon pre-write $(t, w, \text{`pre'})$ arrival from the $p_j$'s writer to $p_i$'s server 
   \Do{\label{alg:srv:pre-write}
      $updatePhase(t, w, \text{`pre'})$\label{alg:srv:pre-writeUp}\;
      $\text{reply}(j, (t, \bot, \text{`pre'}))$\label{alg:srv:pre-writeReply}\;
   }

\Upon finalize or FINALIZE $m:=( t, \bot, d):d\in (\sD \setminus \{\text{`pre'}\})$ arrival {from} $p_j$'s client to $p_i$'s server 
\label{alg:srv:fin}
\Do{

$updatePhase(t, \bot, d)$\nllabel{alg:srv:updatePhasetFin}\;
\lIf{$\exists s:=(t,w,d)\in S$ and $p_j$'s client is a reader}{${reply}(j,(t, w, d))$ \textbf{else} ${reply}(j,(t, \bot, d))$\label{alg:srv:finReply}}
}   

\Upon gossip $\{(pre[k],fin[k],FIN[k])=gossip[k]\}_{p_k \in \sP}$ arrival {from} $p_j$'s server to $p_i$'s server\label{alg:srv:uponGossip} \Do{
	$pre[i] \gets \max (\{pre[k],fin[k],FIN[k]\}_{p_k \in \sP} \cup \{maxPhase(\sD)\})$\nllabel{alg:srv:preMax}\; 
	$updatePhase(pre[i], \bot, \text{`pre'})$\nllabel{alg:srv:preMaxUpdt}\;
	$fin[i] \gets \max (\{fin[k],FIN[k]\}_{p_k \in \sP} \cup \{maxPhase(\sD \setminus \{\text{`pre'}\})\})$\nllabel{alg:srv:finMax}\; 
	$updatePhase(fin[i], \bot, \text{`fin'})$\nllabel{alg:srv:finMaxUpdt}\;
	$FIN[i] \gets \max (\{FIN[k]\}_{p_k \in \sP} \cup \{maxPhase(\{ \text{`FIN'}\})\}) \cup \{ t \in \sT: \{p_k \in \sP:fin[k]=t\}\in \sQ\}$\nllabel{alg:srv:FINMax}\;
	$updatePhase(FIN[i], \bot,  \text{`FIN'})$\nllabel{alg:srv:FINMaxUpdt}\;
	$\text{gossip}(tagTuple())$\nllabel{alg:srv:FINMaxGossipHelper}\;
}

\BlankLine

\noindent \textbf{Local functions at the server:}\\
\textbf{function} $maxPhase(phs)$ \lDo{~\textbf{return $\max(\{t : (t, \bullet, p) \in (S \cup \{(t_0,w_{0,i},\text{`fin'})\}) \land p \in phs \})$}\label{alg:srv:maxPhase}}

\textbf{function} $tagTuple()$ \lDo{~\textbf{return} $(maxPhase(\sD), maxPhase(\sD \setminus \{\text{`pre'}\}), maxPhase(\{ \text{`FIN'}\}))$\;\label{alg:srv:gossip}} 

\textbf{function} $updatePhase(t,w,u)$ \lDo{
\label{alg:srv:helper}
   \{\lIf {$\exists s := (t,w',c) \in S \land w' \neq \bot \land w = \bot$}{$S\gets (S \setminus \{s\}) \cup$ $\{(t,w',p)\})$, \textbf{where} $p:=upgradePhase(c,u)$ \textbf{else} $S\gets((S \setminus \{(t, \bullet )\})\cup\{(t, w, u)\})$\}\label{alg:srv:help_addEmpty}}}
   
\textbf{function} $upgradePhase(old, new)$~\lDo{~$\mathbf{switch}$ $(old, new):$\\ ~~~$\mathbf{case}(\text{`pre'},\text{`fin'})$: $\mathbf{return}~\text{`fin'}$; $\mathbf{case}(\text{`fin'}, \text{`FIN'})$: $\mathbf{return}~ \text{`FIN'}$; $\mathbf{default}$ $\mathbf{return}~old$;\label{alg:srv:upcu}}

\caption{\algoSize Private and Self-Stabilization CAS Algorithm, code for $p_i$'s client and server.}
\label{alg:cas}
\end{algorithm*}

This paper presents a self-stabilizing algorithm that uses a bounded amount of memory. For the sake of presentation simplicity, we start by presenting a self-stabilizing algorithm that has no such bounds as a `first attempt'. We then prove the correctness of the unbounded algorithm (Section~\ref{sec:proof}) before bounding the amount of storage needed (Section~\ref{sec:bound}) and as well as the number of possible tag values (Section~\ref{sec:bounded}).  

One of the key differences between self-stabilizing algorithms to non-self-stabilizing algorithms is that, due to a transient fault, the self-stabilizing system can start in a state $c$ that the non-self-stabilizing system can never reach. For example, in $c$ a single server may include a record with a finalized tag $t$ for which there is no quorum of servers that store records that include coded elements relevant to $t$. Due to the asynchronous nature of the system, we cannot bound the number of write operations that the system will take until at least one write operation install its records on all servers. Similar examples can be found when considering pre-write  records. We propose to overcome this challenge by letting the gossip server to exchanges message that includes that maximal tag values for each phase.

There is no self-stabilizing algorithm for end-to-end communication when there are no bound on the capacity of the communication channels~\citep[Chapter 3]{Dolev:2000}. Cadambe et al.~\cite{DBLP:journals/dc/CadambeLMM17} assume that all communication channels are reliable and cannot lose messages. We are not aware of a straightforward manner in which we can assume that these communication channels are both of bounded capacity and self-stabilizing, because the asynchronous nature of the system implies that there is no bound on the number of write operations that the system may finish before a given server receives a single gossip message. Therefore, we let the gossip service to repeatedly exchange among the servers their maximal tag values.  This way, the servers get to know eventually about the highest tag values.        

With these modifications in mind, we note that the client part of Algorithm~\ref{alg:cas} follows similar lines as the ones of Algorithm~\ref{BACKalg:cas} with the following notable differences. The prewrite phase (line~\ref{ln:preWrite}) associates the operation with the tag $(z+ 1,i)$, where $t=(z, \bullet)$ is the maximal prewrite tag returned from the query phase (line~\ref{alg:w:SelMax}). Moreover, Algorithm~\ref{alg:cas} uses an additional finalized phase (line~\ref{ln:FINALIZE}), which we refer to as FINALIZED. This phase helps the algorithm to assure that every complete write operation with tag $t$ has at least a quorum of servers with a finalized tag $t$.     

The server part of Algorithm~\ref{alg:cas} also implements the above modifications. The notable changes, with respect to Algorithm~\ref{BACKalg:cas}, include the following. Servers reply to queries from readers with the highest local finalized tag (line~\ref{alg:srv:queryInitElse}) whereas for the case of writers all local tags are considered (line~\ref{alg:srv:queryInitIf}). Also, upon gossip arrival (line~\ref{alg:srv:uponGossip}), the server processes all the gossip messages that have recently arrived for all servers. It first calculates the local maximal prewrite tag (lines~\ref{alg:srv:preMax} and~\ref{alg:srv:preMaxUpdt}), then the local maximal finalized tag (lines~\ref{alg:srv:finMax} and~\ref{alg:srv:finMaxUpdt}) before considering the FINALIZED one (lines~\ref{alg:srv:FINMax} and~\ref{alg:srv:FINMaxUpdt}) and then sending an updated gossip message (line~\ref{alg:srv:FINMaxGossipHelper}). Note that each server stores the highest tag that it has heard from each phase. Moreover, when a server discovers that it knows about a quorum of servers that store each a finalized record with tag $t$, it updates that record to have the FINALIZED phase (line~\ref{alg:srv:FINMax}). This way, an implicit FINALIZED record becomes explicitly FINALIZED.  

\section{Correctness Proof of our Self-stabilizing CAS Algorithm}
\label{sec:proof}
After the preliminaries (Section~\ref{sec:def}), we study basic properties of Algorithm~\ref{alg:cas} (Section~\ref{sec:basic}) before showing its ability to recover after the occurrence of transient-faults (Section~\ref{sec:convergence}). We then demonstrate the atomicity (Section~\ref{sec:atom}) and liveness (Section~\ref{sec:live}) of Algorithm~\ref{alg:cas}.
 
\subsection{Notation and definitions}
\label{sec:def}
We refer to the values of variable $X$ at node $p_i$ as $X_i$, i.e., the variable name with a subscript that indicates the node identifier. We denote to the storage variable, $S$, of $p_i$ as $S_{p_i}$ due to its centrality to the system state. Let $R$ be an execution, $c \in R$ a system state and $p_i \in \sP$ a node that executes the function $f()$ in a step $a \in R$ that appears in $R$ immediately after $c$. We denote by $f_a()$ the value that returns from $f()$'s execution during step $a$.

Each client procedure includes a finite sequence of requests that the client sends to the servers, where the responses received from one request to the servers are used for forming the next request to the servers. We associate each invocation of the client procedures with an operation $\pi$ that includes all of its steps in $R$ (either at the client or the servers) in which a node sends or receives messages due to $\pi$'s invocation of the client procedure. 
Definition~\ref{def:startEndSys} classifies operations by the way they start and end.

\begin{definition}[\textbf{Classifying operations by their start and end}]
\label{def:startEndSys}
Let $R$ be the algorithm execution with $\pi$ as a client operation. Denote by $c_{start}(\pi) \in R$ the system state that is followed immediately by step $a_{start}(\pi)$ that starts $\pi$. Moreover, $c_{end}(\pi)$ denotes the system state that follows immediately after a step $a_{end}(\pi)$ that ends $\pi$. We characterize $\pi$'s behavior in $R$ in the following manner.

\begin{itemize}
\item \textbf{\emph{Incomplete operations.}~~} Suppose that $\pi$'s first step, $a_{start}(\pi)$, does not include the execution of the first line of the $\pi$'s (write or read) procedure (lines~\ref{alg:w:SelMax}, and respectively,~\ref{alg:r:SelMax}). In this case, we say that $\pi$ is \emph{incomplete} in $R$. We say that a client request or a server reply is incomplete if it is part of an incomplete operation (due to stale information that appear in an arbitrary starting system state). Note that the expression ``operation $\pi$ is complete in suffix $R''$ of $R=R'\circ R''$ refers to the case in which $a_{start}(\pi) \in R'$ and $a_{end}(\pi) \in R''$.

\item \textbf{\emph{Failed operations.}~~} Suppose that $\pi$'s last step, $a_{end}(\pi)$, does not include the execution of the last line of the $\pi$'s (read or write) procedure. In this case, we say that $\pi$ \emph{fails} in $R$.

\item \textbf{\emph{Complete operations.}~~} Suppose that $\pi$ is eventually neither incomplete nor failed in $R$. In this case, we say that $\pi$ is \emph{complete} in $R$. Before the end of a given complete operation, we refer to it as an \emph{on-going} operation.
\end{itemize}
\end{definition}

\subsection{Basic Properties of Algorithm~\ref{alg:cas}}
\label{sec:basic}

\begin{lemma}[\textbf{Algorithm~\ref{alg:cas}'s progression}] 
\label{thm:progression}
Algorithm~\ref{alg:cas}'s operations, whether they are failed, incomplete or complete, end within $\bigO(1)$ asynchronous cycles in any fair execution of Algorithm~\ref{alg:cas}, which may start in any system state.
\end{lemma}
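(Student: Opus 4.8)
The plan is to reduce the progression of an entire client operation to the progression of the constant number of quorum-access phases it comprises, and then to invoke Corollary~\ref{thm:basicCorollary} once per phase. First I would observe that both client procedures of Algorithm~\ref{alg:cas} are straight-line code without loops: the $writer(s)$ procedure is the sequence of four $\text{qrmAccess}()$ calls (query, prewrite, finalize and FINALIZE, lines~\ref{alg:w:SelMax},~\ref{ln:preWrite},~\ref{ln:finalize} and~\ref{ln:FINALIZE}) followed by a return, and the $reader()$ procedure is the sequence of two $\text{qrmAccess}()$ calls (query and finalize, lines~\ref{alg:r:SelMax} and~\ref{ln:finalizeR}) followed by a return of either a decoded value or $\bot$. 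Hence every operation consists of at most four phases separated only by finite internal computation, so it suffices to bound the duration of a single phase and multiply by the constant phase count.

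Next I would apply Corollary~\ref{thm:basicCorollary}. Since the fair execution $R$ of Algorithm~\ref{alg:cas} satisfies the terms of service of the quorum-based communication functionality (each client calls $\text{qrmAccess}()$ sequentially and each server reply acknowledges the request), within $\bigO(1)$ asynchronous cycles $R$ reaches a suffix $R'$ in which the quorum-based communication is correct and, moreover, every invocation of $\text{qrmAccess}()$ returns within $\bigO(1)$ asynchronous cycles. In $R'$ this holds even for an invocation that was already in flight at the arbitrary starting state, because property (c) of the service (together with parts (2) and (3) of Corollary~\ref{thm:basicCorollary}) guarantees that the client-side state is cleared upon return and that a fresh request is issued and answered by a live quorum regardless of the stale contents of the channels and the client-side buffers.

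I would then combine the two observations. Consider any on-going operation $\pi$ once $R'$ has been reached, at most $\bigO(1)$ cycles into $R$. Whatever phase $\pi$ currently occupies --- including the case of an \emph{incomplete} $\pi$ whose $a_{start}(\pi)$ fell inside an arbitrary phase rather than at the first line --- at most a constant number of phases remain, each returning within $\bigO(1)$ cycles, and the finite internal computation between phases contributes nothing asymptotically; hence $\pi$ reaches $c_{end}(\pi)$ within $\bigO(1)$ asynchronous cycles. For a \emph{failed} $\pi$, the state $c_{end}(\pi)$ is by definition the one immediately following the step in which its client $p_i$ crashes; if that crash does not occur within the $\bigO(1)$ cycles just bounded, then the very same argument shows $\pi$ would instead have reached a return, contradicting that it is failed. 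Thus failed operations also terminate within $\bigO(1)$ cycles, and the claimed bound holds uniformly for all three operation types.

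The main obstacle I anticipate is the incomplete-operation case: arguing that the arbitrary starting state cannot trap a client inside a phase indefinitely. This is precisely the point where self-stabilization (as opposed to ordinary termination of correctly-initialized runs) is essential, and it is where the proof must lean on Corollary~\ref{thm:basicCorollary} to do the heavy lifting --- one must ensure that a $\text{qrmAccess}()$ begun from corrupted client-side state, with arbitrary stale packets in transit, still converges to issuing a clean request that a live quorum answers. I would therefore make explicit that the lemma's bound is measured from the system state at which $R'$ begins, so that the corruption present in the initial segment $R \setminus R'$ is absorbed into the $\bigO(1)$ convergence of the communication layer rather than into the per-operation phase count.
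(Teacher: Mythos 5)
Your proposal is correct and follows essentially the same route as the paper's proof: decompose each operation into its constant number of phases and invoke parts (2) and (3) of Corollary~\ref{thm:basicCorollary} to bound each phase by $\bigO(1)$ asynchronous cycles, after checking that the terms of service of the quorum-based communication hold (sequential $\text{qrmAccess}()$ calls and a server reply for every request). The only difference is one of emphasis --- the paper spends more effort explicitly verifying that every server always replies (including via the default triple when $S=\emptyset$), while you are more explicit about the failed and incomplete cases --- but the substance is the same.
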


\begin{proof}
The server part of Algorithm~\ref{alg:cas} includes only non-blocking responses to Algorithm~\ref{alg:cas}'s requests. Thus, Algorithm~\ref{alg:cas}'s termination depends on the termination of each client phase. We, therefore, prove that every client phase ends eventually.

\noindent \textbf{The query, pre-write and finalize phases (of readers and writers) terminate.~~} 
We start by showing that at least a quorum of query responses arrive to every non-failing client (lines~\ref{alg:w:SelMax} and~\ref{alg:r:SelMax}). The proof uses parts (2) and (3) of Corollary~\ref{thm:basicCorollary} for showing the correct functionality of quorum-based communication in $R$ (Section~\ref{sec:coomSafe}) as long as the system satisfies the terms of service of the quorum-based communication functionality. To that end, we need to show that: (i) only the client calls the function $\text{qrmAccess}()$ and it does so sequentially, as well as (ii) the server algorithm acknowledges (by calling $\text{reply}()$, Section~\ref{sec:coomSafe}) requests that were delivered to it. From Section~\ref{ss:emulatedObject} and Algorithm~\ref{alg:cas}, we observe that: (i.a) any node run at most one client that is either a writer or a reader, (i.b) only the clients call the function $\text{qrmAccess}()$, (i.c) there is only one client (either reader or writer) per node, and (i.d) that client does not calls $\text{qrmAccess}()$ before the previous call returns. We also note that (ii) the server pseudo-code (Algorithm~\ref{alg:cas}) includes a response for every client request. In detail, any non-failing server, say, the one at node $p_j \in \sP$, replies to queries that $p_i$ delivers to it with the message $((\bot, \bot, \text{`qry'}), (t, \bot, \text{`qry'}))$ (line~\ref{alg:srv:queryInit}), where $(t,\bullet,\text{`fin'}) \in S_{p_i}$ and $t$ is $S_{p_i}$'s highest finalized local tag. Note that whenever $S_{p_i}=\emptyset$, the server at $p_i$ considers the tuple $(t_0, \bot, \text{`qry'})$ (line~\ref{alg:srv:default}), and thus the server at $p_j$ always replies to queries. From (i.a), (i.b), (i.c), (i.d)  and (ii) we get that the functionality of quorum-based communication is correct eventually even when starting from any system state, because the conditions of parts (2) and (3) in Corollary~\ref{thm:basicCorollary} hold (terms of service for the quorum-based communication functionality, Section~\ref{sec:coomSafe}). Therefore, the client at $p_i$ receives eventually at least a quorum of server responses (Section~\ref{sec:coomSafe}). Using the same arguments as above, this proof shows that at least a quorum of pre-write and finalize responses arrive to every non-failing client (lines~\ref{ln:preWrite},~\ref{ln:finalize}~\ref{ln:FINALIZE} and~\ref{ln:finalizeR}). 

Finally, we show that the above happens within $\bigO(1)$ asynchronous cycles. This is because each client operation considers a constant number of phases. Recall that each phase is associated with a client round, which completes within $\bigO(1)$ asynchronous cycles (Corollary~\ref{thm:basicCorollary}).
\end{proof}

\begin{definition}[\textbf{Classifying local maximal tags by their phase}]
\label{def:startEnd}
Let $R$ be an Algorithm~\ref{alg:cas}'s execution, and $a_{i,k} \in R$ a step (that the server at $p_i \in \sP$ takes) in which $p_i$ executes the function $maxPhase(phs)$ (line~\ref{alg:srv:maxPhase}) for the $k$-th time in $R$. We characterize $maxPhase_{a_{i,k}}(phs)$'s behavior in $R$ according to its argument $phs$ (Lemma~\ref{thm:noRemoval}) and consider the set $\emph{tags}(C,D)=\{ t:(t,\bullet, d) \in (S_{p_j} \cup \{(t_0,w_{0,i},\text{`fin'})\}) \land d \in D \land p_j \in C\}$, where $D=phs$ and $C=\{p_i\}$ in the system state that immediately precedes $a_{i,k}$. 

\begin{itemize}
\item A \emph{write maximal tag} is the returned value from ${maxPhase_{a_{i,k}}(\sD)}$ (lines~\ref{alg:srv:queryInitElse} and~\ref{alg:srv:preMax}), where $\sD=\{\text{`pre'}, \text{`fin'}, \text{`FIN'} \}$, i.e., the tag in the maximal tuple in $\emph{tags}(\{p_i\},\{\text{`pre'}, \text{`fin'}, \text{`FIN'} \})$. 
\item A \emph{read maximal tag} is the returned value from ${maxPhase_{a_{i,k}}(\sD \setminus \{\text{`pre'}\})}$ (lines~\ref{alg:srv:queryInitIf} and~\ref{alg:srv:finMax}), i.e., the tag in the maximal tuple in $\emph{tags}(\{p_i\},\{\text{`fin'}, \text{`FIN'} \})$.
\item An \emph{anchor maximal tag} is the returned value from ${maxPhase_{a_{i,k}}(\{ \text{`FIN'}\})}$ (line~\ref{alg:srv:FINMax}), i.e., the tag in the maximal tuple in $\emph{tags}(\{p_i\},\{ \text{`FIN'}\}))$.
\end{itemize}
\end{definition}
  
\begin{lemma}[\textbf{Servers do not remove their maximal records}]
\label{thm:noRemoval}
Servers (Algorithm~\ref{alg:cas}) keep in their storage the currently maximal (1.1) write, (1.2) read, and (1.3) anchor records (or any record with a tag that is higher than the ones in these records). 
\end{lemma}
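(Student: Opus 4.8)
The plan is to reduce the whole statement to an analysis of the single place where the server storage is mutated. First I would observe that, across all server event handlers of Algorithm~\ref{alg:cas} (pre-write at line~\ref{alg:srv:pre-writeUp}, finalize/FINALIZE at line~\ref{alg:srv:updatePhasetFin}, and the gossip handler at lines~\ref{alg:srv:preMax}--\ref{alg:srv:FINMaxUpdt}), every change to $S_{p_i}$ is performed exclusively through the function $updatePhase$ (line~\ref{alg:srv:helper}). Hence it suffices to prove that a single invocation $updatePhase(t,w,u)$ never lowers the write, read, or anchor maximal tag of Definition~\ref{def:startEnd}; equivalently, that each of these maxima is non-decreasing across one step, where the ``or any record with a higher tag'' clause is exactly the non-decreasing formulation. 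The key structural fact is that both branches of $updatePhase$ touch \emph{only} the record whose tag equals the argument $t$: the \emph{then}-branch rewrites the phase of the existing tag-$t$ record while preserving its coded element $w'$, and the \emph{else}-branch deletes $(t,\bullet)$ and reinserts a record with the same tag $t$. Consequently every record whose tag is strictly larger than $t$ is left untouched, so any of the three maxima currently attained by a record of tag $>t$ is trivially preserved, and only the case $t$ equal to the relevant maximum needs work.

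For the write maximal tag (part (1.1)) this remaining case is immediate: both branches reinsert a record carrying the same tag $t$ with a phase in $\sD$, so the maximum over $\sD=\{\text{`pre'},\text{`fin'},\text{`FIN'}\}$ is unchanged. For the read and anchor maxima (parts (1.2) and (1.3)) I would invoke the monotonicity of $upgradePhase$ (line~\ref{alg:srv:upcu}) with respect to the order $\text{`pre'}<\text{`fin'}<\text{`FIN'}$: whenever the incoming $w=\bot$, which holds for every finalize, FINALIZE, and gossip-driven call, the \emph{then}-branch fires (provided the stored coded element is non-$\bot$) and $upgradePhase$ never demotes a record, so a $\text{`fin'}$- or $\text{`FIN'}$-record at tag $t$ stays inside the set $\{\text{`fin'},\text{`FIN'}\}$, respectively $\{\text{`FIN'}\}$. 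For the gossip handler I would additionally use that the three updates are applied in the fixed increasing order $\text{`pre'}$ (lines~\ref{alg:srv:preMax}--\ref{alg:srv:preMaxUpdt}), then $\text{`fin'}$ (lines~\ref{alg:srv:finMax}--\ref{alg:srv:finMaxUpdt}), then $\text{`FIN'}$ (lines~\ref{alg:srv:FINMax}--\ref{alg:srv:FINMaxUpdt}) within one \emph{atomic} step, so that any transient demotion caused by the earlier $\text{`pre'}$ update is re-promoted by the later $\text{`fin'}$/$\text{`FIN'}$ updates before the step completes; since the lemma constrains only system states between steps, these intra-step transients are immaterial.

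The main obstacle I expect is the pre-write handler (line~\ref{alg:srv:pre-writeUp}), which is the unique call to $updatePhase$ with $w\neq\bot$ and therefore always takes the \emph{else}-branch, overwriting the tag-$t$ record with $(t,w,\text{`pre'})$; in isolation this could demote a $\text{`fin'}$- or $\text{`FIN'}$-record sitting exactly at the read/anchor maximal tag. To close this gap I would argue from the structure of the writer protocol together with FIFO delivery: a tag $t=(z,i)$ is pre-written only by the writer at $p_i$ (tags are unique per writer), and that writer issues the pre-write for $t$ strictly before its finalize and FINALIZE for $t$ (lines~\ref{ln:preWrite}--\ref{ln:FINALIZE}); moreover every legitimately generated $\text{`fin'}$/$\text{`FIN'}$ record for $t$ is traceable, through reader-finalize, gossip, and the quorum rule of line~\ref{alg:srv:FINMax}, back to that writer's finalize of $t$. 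Hence, over the channel from $p_i$, the server processes the pre-write of $t$ before any such record can be generated, so on that path no demotion of a read/anchor maximum occurs. The residual case, in which a gossip-borne $\text{`fin'}$ tag $t$ reaches the server on a different channel before its own delayed pre-write of $t$, I would treat as a transient demotion that is corrected within $\bigO(1)$ asynchronous cycles by the subsequent gossip step and by the writer's own FIFO-ordered finalize, folding it into the convergence argument of Section~\ref{sec:convergence}; I would also note that part (1.1) is entirely unaffected by this case, since the pre-write preserves the tag.
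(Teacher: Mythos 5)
Your skeleton is the same as the paper's: reduce every mutation of $S_{p_i}$ to the single function $updatePhase()$, observe that only the tag-$t$ record is touched (so records with strictly larger tags are automatically preserved), and rely on the monotonicity of $upgradePhase()$ over $\text{`pre'}<\text{`fin'}<\text{`FIN'}$ plus a case analysis of the three call sites. Where you diverge is in how you read the guard of $updatePhase()$: the paper's proof treats the function as always routing through $upgradePhase()$ whenever a record with tag $t$ already exists (and as a pure insertion otherwise), under which reading no phase is ever demoted and all three parts are immediate; you instead read the literal guard $w'\neq\bot\land w=\bot$ and correctly notice that the \emph{else}-branch can overwrite an existing $\text{`fin'}$/$\text{`FIN'}$ record at tag $t$ with a $\text{`pre'}$ one (a delayed pre-write), and --- in the case you only flag parenthetically --- overwrite a stored $(t,\bot,\text{`FIN'})$ with $(t,\bot,\text{`fin'})$. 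Surfacing these branches is a real contribution; the paper's proof never confronts them.

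The gap is that your resolution of the pre-write obstacle does not prove the lemma as stated. The FIFO argument orders the pre-write of $t$ before the finalize of $t$ only on the channel from the writer to a given server; a $\text{`fin'}$ record for $t$ can reach a server outside the pre-write quorum via gossip or a reader's finalize on a \emph{different} channel, and you concede exactly this as a ``residual case.'' Treating the resulting demotion as a transient corrected within $\bigO(1)$ asynchronous cycles changes the claim: Lemma~\ref{thm:noRemoval} is an unconditional, per-state invariant and is consumed as such downstream --- Claim~\ref{thm:neverRemove} needs the gossiped maxima to be non-decreasing at \emph{every} step, and Lemma~\ref{thm:allServersStore} asserts the finalized record persists in \emph{every} state after $c_{end}(\pi)$ --- neither of which survives even a brief demotion. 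To close the argument you must either adopt the intended reading of $updatePhase()$ (the \emph{else}-branch, too, applies $upgradePhase()$ to any pre-existing tag-$t$ record rather than blindly overwriting its phase), or show the offending interleavings cannot occur, which your own analysis indicates is false in an asynchronous run. As written, your proposal establishes part (1.1) but leaves parts (1.2) and (1.3) conditional.
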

\begin{proof}
\noindent \textbf{Part (1.1).~~}
We show that the server (Algorithm~\ref{alg:cas}) at $p_i$ does not remove from $S_{p_i}$ the maximal anchor record (Definition~\ref{alg:cas}), i.e., the tuple with the maximal tag in $\emph{tags}(S_{p_i},\sD)$, cf. Definition~\ref{def:startEnd}. Note that the server at $p_i$ updates and inserts records $(t,\bullet)$ to $S_{p_i}$ only via the $updatePhase()$ function (line~\ref{alg:srv:helper}). In case that $\exists (t,\bullet) \in S_{p_i}$, the function $updatePhase()$ calls the function $upgradePhase()$ (line~\ref{alg:srv:upcu}), which transfers $(t,\bullet)$'s phase from $\text{`pre'}$ to $\text{`fin'}$, and $\text{`fin'}$ to $\text{`FIN'}$, but otherwise it does not change $(t,\bullet)$'s phase, e.g., when $p=\text{`pre'}$. Moreover, in case $\nexists (t,\bullet) \in S_{p_i}$, the server at $p_i$ merely adds $(t,\bullet)$ to $S_{p_i}$. We study each call to $updatePhase()$ in Algorithm~\ref{alg:cas} and show that it does not remove the currently maximal write record. 

\begin{itemize}
\item When $(t, \bot, d):d\in \sD$ (lines~\ref{alg:srv:pre-write} and~\ref{alg:srv:fin}) arrives to the server at $p_i$, that server uses the $updatePhase()$ (lines~\ref{alg:srv:pre-writeUp} and~\ref{alg:srv:updatePhasetFin}) for making sure that $(t, \bullet, d)$ exists in $S_{p_i}$ (line~\ref{alg:srv:helper}) in a way that can only add a missing record $(t, \bot, d)$ to $S_{p_i}$ (when $(t, \bot, d) \notin S_{p_i}$) or transfer the phase of an existing record in $S_{p_i}$ according to $upgradePhase()$, which does not remove the currently maximal write record. Moreover, when $(t, \bot, d)$ arrives either from a reader or a writer, $p_i$ updates $S_{p_i}$ in a manner that differ only by the response that $p_i$ sends to the client (lines~\ref{alg:srv:pre-writeReply} and~\ref{alg:srv:finReply}), i.e., irrelevant to $p_i$'s server state after that send.

\item When gossip arrives to $p_i$ (line~\ref{alg:srv:uponGossip}), $p_i$ calculates its new maximal write record in a way that includes both the records in its own storage $S_{p_i}$ and the maximal records reported recreantly from all servers including itself (lines~\ref{alg:srv:preMax},~\ref{alg:srv:finMax} and~\ref{alg:srv:FINMax}). 
Note that $p_i$ might add a new maximal anchor record with tag $t$ whenever it discovers that there is a quorum of servers that have reported about a finalized record with tag $t$ (line~\ref{alg:srv:FINMax}). 
After calculating these new maximal values, $p_i$ updates its storage $S_{p_i}$ via $updatePhase()$ (lines~\ref{alg:srv:preMaxUpdt},~\ref{alg:srv:finMaxUpdt} and~\ref{alg:srv:FINMaxUpdt}), in a way that we showed above that it does not remove the currently maximal write record.
\end{itemize}

\noindent \textbf{Parts (1.2) and (1.3).~~} The proofs here follow similar arguments to the ones of Part (1.1); it is even simpler because $phs$'s values are different and thus $\text{`pre'}$ is irrelevant to Part (1.2) and only $\text{`FIN'}$ is relevant to Part (1.3). 

We note that the same arguments hold also for any record that has a tag that is higher than these maximal tags.
\end{proof}

\begin{lemma}[\textbf{Maximal tags arrive to every server eventually}]
\label{thm:eventuallyAscendingSent}
Suppose that the server at $p_i \in \sP$ calls $\text{gossip}(T_k)$ (Section~\ref{sec:coomSafe}) for an unbounded number of times in  Algorithm~\ref{alg:cas}'s execution, $R$, such that $T_k:=(t_{k, write},t_{k, read},t_{k, anchor})$ is the $k$-th gossip that $p_i$ sends. The server at $p_j \in \sP$ receives eventually at least one gossip $(t_{k, write},t_{k, read},t_{k, anchor})$, such that each respective tag is not less than its correspondent  in $(t_{1, write},t_{1, read},t_{1, anchor})$. Moreover, if $R$ is fair, each gossip message arrives within $\bigO(1)$ asynchronous cycles. 
\end{lemma}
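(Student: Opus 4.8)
The plan is to split the statement into a \emph{monotonicity} claim about the triples that $p_i$ gossips and a \emph{delivery} claim about what $p_j$ eventually receives, and then to combine them. The monotonicity claim will rest entirely on Lemma~\ref{thm:noRemoval}, while the delivery claim will follow from communication fairness (Section~\ref{sec:benignFailures}) together with the gossip guarantees summarized in Corollary~\ref{thm:basicCorollary}.

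First I would unfold what $p_i$ actually gossips. Each triple $T_k$ is the value returned by $tagTuple()$ (line~\ref{alg:srv:gossip}) at $p_i$'s $k$-th gossip, namely $(maxPhase(\sD), maxPhase(\sD\setminus\{\text{`pre'}\}), maxPhase(\{\text{`FIN'}\}))$ evaluated against $S_{p_i}$ (line~\ref{alg:srv:maxPhase}); by Definition~\ref{def:startEnd} these are exactly the write, read and anchor maximal tags of $p_i$. Lemma~\ref{thm:noRemoval} states that $p_i$ never removes its currently maximal write, read or anchor record (nor any record carrying a strictly larger tag), so each of the three components is non-decreasing in the total order $<$ on $\sT$ as $R$ progresses. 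Consequently, for every $k\geq 1$ we have $t_{1,write}\leq t_{k,write}$, $t_{1,read}\leq t_{k,read}$ and $t_{1,anchor}\leq t_{k,anchor}$.

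Next I would establish delivery. Since $p_i$ calls $\text{gossip}()$ an unbounded number of times, communication fairness (Section~\ref{sec:benignFailures}) and the gossip delivery guarantee (Corollary~\ref{thm:basicCorollary}, part~(1)) ensure that $p_j$ delivers a (genuine) gossip from $p_i$ infinitely often. Here the overwriting semantics of the gossip buffer matters: the buffer retains only the most recently written message, so from the step at which $p_i$ sends $T_1$ onward the buffer always holds some $T_k$ with $k\geq 1$; hence any delivery occurring after that step hands $p_j$ a triple $T_k$ with $k\geq 1$. Combining this with the monotonicity established above, the delivered triple dominates $(t_{1,write}, t_{1,read}, t_{1,anchor})$ componentwise, which is precisely the first assertion. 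For the timing bound I would invoke part~(3) of Corollary~\ref{thm:basicCorollary}: in a fair execution the gossip operation completes within $\bigO(1)$ asynchronous cycles, and by the definition of an asynchronous cycle (Section~\ref{ss:asynchronousCycles}) every non-failing server performs at least one complete iteration, including the relevant round-trip, per cycle.

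I expect the overwriting behaviour of the gossip service to be the only genuine subtlety. Because a pending gossip can be overwritten before it is delivered, $p_j$ is \emph{not} guaranteed ever to receive the particular triple $T_1$, so the usual reliable-delivery reasoning does not apply directly. The observation that makes the argument go through is that this loss is harmless: whatever surviving triple $T_k$ (with $k\geq 1$) $p_j$ does deliver already dominates $T_1$ componentwise by Lemma~\ref{thm:noRemoval}. I would therefore be careful to phrase the conclusion as domination of $T_1$'s tags rather than as reception of $T_1$ itself, which is exactly how the statement is worded.
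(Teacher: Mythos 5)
Your proposal is correct and follows essentially the same route as the paper's proof: the paper likewise splits the argument into a monotonicity claim (its Claim~\ref{thm:neverRemove}, resting on Lemma~\ref{thm:noRemoval} and the definition of $maxPhase()$ in line~\ref{alg:srv:maxPhase}) and a delivery claim (its Claim~\ref{thm:neverRemoveAgain}, resting on Corollary~\ref{thm:basicCorollary}), concluding that whichever $T_{k'}$ is eventually delivered dominates $T_1$ componentwise, with Part~(3) of Corollary~\ref{thm:basicCorollary} giving the $\bigO(1)$ bound in the fair case. Your explicit discussion of the overwriting semantics makes visible the same point the paper handles implicitly by tracking only the first token-departure step between consecutive gossip calls.
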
 

\begin{proof}
Lemma~\ref{thm:noRemoval}, and Claim~\ref{thm:neverRemove} facilitate the proof of Claim~\ref{thm:neverRemoveAgain}, which implies that first part of this lemma. Claim~\ref{thm:neverRemove} considers the sequence $t_{k, type}$, where $p\in \{read, write, anchor \}$ and $T_k:=(t_{k, write},t_{k, read}$, $t_{k, anchor})$.

\begin{claim}
\label{thm:neverRemove}
The sequence $t_{k, type}$ is non-decreasing.
\end{claim}
\begin{claimproof}
The three parts of Lemma~\ref{thm:noRemoval} show that the server never remove its currently maximal write, read and anchor records. The rest of the proof is implied directly from the fact that line~\ref{alg:srv:maxPhase} merely calculates the currently maximal write, read and anchor records.
\end{claimproof}

\begin{claim}
\label{thm:neverRemoveAgain}
The server at $p_j \in \sP$ receives at least one gossip $(t_{k, write},t_{k, read},t_{k, anchor})$ eventually, such that each respective tag is not less than its correspondent in $(t_{1, write},t_{1, read},t_{1, anchor})$.
\end{claim}
\begin{claimproof}
Let us consider the sequence $t_{k, type}$, where $type\in \{read, write, anchor \}$ and $T_k:=(t_{k, write}$, $t_{k, read}, t_{k, anchor})$. Let $a_{k} \in R$ be a step in which $p_i$ calls \text{gossip}($T_k$) for the $k$-th time in $R$. Let $a_{depart,k'} \in R$ be the first step that appears after $a_{k'}:k'\in \{1,\ldots\}$ and before $a_{k'+1}$ in $R$, if there is any such step, in which $p_i$ executes the event of gossip token departure. Let $a_{arrival,k'} \in R$ be the first step that appears after $a_{depart,k'}$ in $R$, if there is any such step, in which the server at $p_j$ delivers the token that $a_{depart,k'}$ transmits. By the correctness of the gossip functionality (Corollary~\ref{thm:basicCorollary}), step $a_{arrival,k'}$ exists eventually. The proof is done, because $T_{k'}$ includes only tags, $t_{k'}$, that are no less than their corresponding elements in $T_1$ (Claim~\ref{thm:neverRemove}).
\end{claimproof}

We complete this proof by considering the case in which $R$ is fair. Corollary~\ref{thm:basicCorollary}, Part (3) implies that step $a_{arrival,k'}$ exists within $\bigO(1)$ asynchronous cycles.
\end{proof}

Corollary~\ref{thm:nonDecr} considers the calls to $maxPhase(\sD)$ (lines~\ref{alg:srv:queryInitElse},~\ref{alg:srv:preMax} and~\ref{alg:srv:gossip}), to $maxPhase(\sD \setminus \{\text{`pre'}\})$ (lines~\ref{alg:srv:queryInitIf},~\ref{alg:srv:finMax} and~\ref{alg:srv:gossip}) as well as the to $maxPhase(\{ \text{`FIN'}\})$ (lines~\ref{alg:srv:FINMax}, and~\ref{alg:srv:gossip}). The same arguments as in the proof of Claim~\ref{thm:neverRemove} imply Corollary~\ref{thm:nonDecr}. 

\begin{corollary}
\label{thm:nonDecr}
Let $phs \in \{\sD, \sD \setminus \{\text{`pre'}\}, \{ \text{`FIN'}\} \}$ and $a_{i,k} \in R$ be a step in which the server at $p_i \in \sP$ executes $maxPhase(phs)$ for the $k$-th time in $R$. The sequence of $maxPhase_{a_{i,k}}(phs)$'s returned values is non-decreasing. 
\end{corollary}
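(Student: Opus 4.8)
The plan is to reduce the corollary to the three parts of Lemma~\ref{thm:noRemoval}, mirroring exactly the argument already used for Claim~\ref{thm:neverRemove}. First I would observe that $maxPhase(phs)$ (line~\ref{alg:srv:maxPhase}) is a pure read-out of the current storage: at the step $a_{i,k}$ it returns $\max\{t : (t,\bullet,p)\in S_{p_i} \cup \{(t_0,w_{0,i},\text{`fin'})\} \land p \in phs\}$, evaluated in the system state immediately preceding $a_{i,k}$. Hence it suffices to show that, for each admissible $phs$, this maximum does not decrease between two consecutive calls $a_{i,k}$ and $a_{i,k+1}$, because no other line of Algorithm~\ref{alg:cas} computes the returned value.

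Second I would dispatch the three admissible choices of $phs$ by matching them to Definition~\ref{def:startEnd} and invoking the corresponding part of Lemma~\ref{thm:noRemoval}. For $phs=\sD$ the maximum is the \emph{write maximal tag} and part (1.1) applies; for $phs=\sD\setminus\{\text{`pre'}\}$ it is the \emph{read maximal tag} and part (1.2) applies; for $phs=\{\text{`FIN'}\}$ it is the \emph{anchor maximal tag} and part (1.3) applies. In each case the lemma guarantees that $p_i$ never removes the record attaining that maximum (or replaces it only by a record carrying a strictly higher tag). Since $maxPhase$ merely reports this maximum, the tag present at $a_{i,k}$ still witnesses a value at least as large at $a_{i,k+1}$, giving $maxPhase_{a_{i,k+1}}(phs)\geq maxPhase_{a_{i,k}}(phs)$.

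The one point that needs genuine care, and which I regard as the main obstacle, is that ``the record is never removed'' is not literally enough: between the two calls the \emph{label} of the maximizing record may change through $updatePhase()$ and $upgradePhase()$ (line~\ref{alg:srv:upcu}), and I must verify such a change never ejects that record from the set $phs$. For $phs=\sD$ this is vacuous, as every label lies in $phs$. For $phs=\sD\setminus\{\text{`pre'}\}$ the only relevant upgrade of the maximizer is $\text{`fin'}\to\text{`FIN'}$, which keeps it in $phs$, while a $\text{`pre'}\to\text{`fin'}$ upgrade can only enlarge the counted set and so cannot lower the maximum. For $phs=\{\text{`FIN'}\}$ I would note that $upgradePhase$ has no rule downgrading a $\text{`FIN'}$ label (its $\mathbf{default}$ branch returns $old$), so a $\text{`FIN'}$ record stays $\text{`FIN'}$, and an $\text{`fin'}\to\text{`FIN'}$ upgrade elsewhere can again only add to the counted set. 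Thus in every case the $a_{i,k}$-maximizer persists with a label in $phs$ at $a_{i,k+1}$, and non-decrease of the returned sequence follows.
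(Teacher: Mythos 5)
Your proposal is correct and follows essentially the same route as the paper, which likewise dispatches the three choices of $phs$ to the three parts of Lemma~\ref{thm:noRemoval} and notes that $maxPhase$ merely reads out the current maximum (the paper phrases this as ``the same arguments as in the proof of Claim~\ref{thm:neverRemove}''). Your explicit check that $upgradePhase()$ never ejects the maximizing record from the set $phs$ is a welcome extra level of care that the paper leaves implicit inside Lemma~\ref{thm:noRemoval}, but it does not change the structure of the argument.
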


\subsection{Recovery After the Occurrence of Transient-Faults}
\label{sec:convergence}

The correctness of Algorithm~\ref{alg:cas} assumes that the system execution is fair. That is, every node participates in the execution within a single asynchronous cycle. This way, the proof bounds the number of asynchronous cycles that it takes the system to remove stale information by receiving the largest tag values and then allowing the system to perform a valid write operation (Definition~\ref{def:invalid}).    

\begin{definition}[\textbf{Notation}] 
\label{def:Tpi}
Let $\pi$ be a (complete) operation in execution $R$. We denote by $\hat{Q}(\pi)$ the quorum of servers that $\pi$'s client receives their acknowledgments for $\pi$'s query. 
Suppose that $\pi$ is a write operation. 
Similarly, denote by $Q_{\text{pw}}(\pi)$ and $Q_{\text{fw}}(\pi)$ the quorums for $\pi$'s pre-write, and respectively, finalize phases. 
Let $\hat{T}(\pi)$ be the maximum arriving tag during $\pi$'s query (line~\ref{alg:r:SelMax}). Denote by $T(\pi)$ the tag of $\pi$, such that when $\pi$ is a write operation, $T(\pi)=\hat{T}(\pi)+1$ is the tag in use during $\pi$'s pre-write (line~\ref{ln:preWrite}) and when $\pi$ is a read operation, $T(\pi)=\hat{T}(\pi)$ is the maximum arriving tag during $\pi$'s query (line~\ref{alg:r:SelMax}). Denote by $c_{start}(R)$ the starting state of Algorithm~\ref{alg:cas}'s execution $R$. Let $T_{node}(R)$ be the set tags in the state of any node in $c_{start}(R)$. Let $T_{comm}(R)$ be the set tags in the payload of any message that is delivered during $R$ but it is never sent during $R$, because it was in transit in the communication channels in $R$'s starting system state, $c_{start}(R)$. We denote by $T(R)= T_{node}(R) \cup T_{comm}(R)$ the set that includes all the tags in $c_{start}(R)$.
\end{definition} 

For the sake of compatibility of our proposal with the one in~\cite{DBLP:journals/dc/CadambeLMM17}, we define the set of legal executions (Definition~\ref{def:safeSystemState}) in a way that considers a recovery period from arbitrary (transient) faults as well as the case in which the system starts from a well-initialized system state (Definition~\ref{def:safeStart}).

\begin{definition}[\textbf{A safe system start}]
\label{def:safeStart}
Let $c_{safe}$ be a system state in which: (1) no client nor server is executing any procedure, (2) the communication channels from the clients to the servers (servers to clients), $pingTx$ and every entry of $pingRx$ (respectively, $pongTx$ and every entry of $pongRx$) include the message $\langle \bot \rangle$ (respectively, $\langle \bot, \bot \rangle$), (3) the communication channels  between any two servers, $gossipTx$ and every entry of $gossipRx$ include the message $(t_0, t_0, t_0)$, and (4) the storage $S$ of every server is empty. In this case, we say that $c_{safe}$ is one of the safe system states. 
\end{definition} 

The definition of recovery Algorithm~\ref{alg:cas}'s period uses the term a valid client operation (Definition~\ref{def:invalid}).

\begin{definition}[\textbf{Valid client operations}]
\label{def:invalid}
Let $\pi$ be a complete operation in $R$. Suppose that there exists a system state $c \in R$, such that $c$ appears in $R$ before $\pi$'s start in $c_{start}(\pi)$ and $\pi$'s tag is greater than any tag that appears both in $c$ and $R$'s starting system state, i.e., $\max (T(c_{start}(R))\cap T(c)) < T(\pi)$ . In this case, we say that $\pi$ is \emph{valid}. 
\end{definition} 

Definition~\ref{def:safeSystemState} specifies legal executions as such that follow at least one complete and valid operation. 
 
\begin{definition}[\textbf{Recovery periods and legal executions}]
\label{def:safeSystemState}
Let $R = R_{recoveryPeriod} \circ R_{legalExecution}$ be an execution of Algorithm~\ref{alg:cas} that (is legal with respect to the external building blocks in Section~\ref{sec:coomSafe} and it) has an arbitrary starting system state, $c_{start}(R)$, with respect to Algorithm~\ref{alg:cas}. Suppose that within a finite number of steps the system reaches a state $c_{start}(\pi_{complete\&valid})$, such that (1) $c_{start}(\pi_{complete\&valid})$ is the starting system state of a complete and valid write operation $\pi_{complete\&valid}$, and (2) $R$'s suffix, $R_{legalExecution}$, starts at $c_{end}(\pi_{complete\&valid})$, where  atomicity and liveness hold with respect to any operation that is complete in suffix $R_{legalExecution}$ of $R$ (Definition~\ref{def:startEndSys}), which starts immediately after $c_{end}(\pi_{complete\&valid})$. 
In this case, we refer to $R_{recoveryPeriod}$ and $R_{legalExecution}$ as $R$'s recovery, and respectively, legal periods. We also consider any execution that starts from $c_{safe}$ (Definition~\ref{def:safeStart}) to be legal. Namely, $R_{legalExecution}$ is an asynchronous execution of Algorithm~\ref{alg:cas}.   
\end{definition}

Theorem~\ref{thm:convergence} shows that the system reaches a legal execution eventually. Theorems~\ref{thm:atomicity} and~\ref{thm:liveness} show that it takes merely a single complete and valid write operation $\pi_{complete\&valid}$  (Definition~\ref{def:safeSystemState}) to end the recovery period after which the system executes legally, because they demonstrate correct shared-memory emulation. Our proof shows that fair executions guarantee recovery. Once $\pi_{complete\&valid}$ had occurred, the correct system behavior no longer needs the above fairness assumption. Recall that, within $\bigO(1)$ asynchronous cycles, Algorithm~\ref{alg:cas}'s execution reaches a suffix in which the correctness of gossip and quorum-based communication is guaranteed (Corollary~\ref{thm:basicCorollary}). Therefore, Theorem~\ref{thm:convergence} considers an execution of algorithm~\ref{alg:cas} that (is legal with respect to the external building blocks in Section~\ref{sec:coomSafe}), because it demonstrates that the system reaches suffix $R_{no~incomplete}$ within $\bigO(1)$ asynchronous cycles. 

\begin{theorem}[\textbf{Recovery after the occurrence of transient-faults}]
\label{thm:convergence}
Let $R$ be a fair execution of algorithm~\ref{alg:cas} that (is legal with respect to the external building blocks in Section~\ref{sec:coomSafe} and it) has an arbitrary starting system state, $c_{start}(R)$, with respect to Algorithm~\ref{alg:cas}. Within $\bigO(1)$ asynchronous cycles, execution $R=R'\circ R_{no~incomplete}$ has a suffix $R_{no~incomplete}$ that does not include incomplete operations. Moreover, within $\bigO(1)$ asynchronous cycles, execution $R_{no~incomplete}$ reaches a suffix, $R_{completeNonStable}$, that does not include invalid operations.
\end{theorem}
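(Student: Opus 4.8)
The plan is to establish the two claims in sequence, reusing the progression lemma, the gossip-propagation lemma, and the quorum-intersection property. For the first claim, I would appeal to Lemma~\ref{thm:progression}: in any fair execution from an arbitrary state, every operation --- failed, incomplete, or complete --- ends within $\bigO(1)$ asynchronous cycles. An incomplete operation (Definition~\ref{def:startEndSys}) is precisely a leftover in-progress invocation of $c_{start}(R)$, i.e.\ one whose first step does not execute the first procedure line (line~\ref{alg:r:SelMax} or~\ref{alg:w:SelMax}). Since each such operation terminates within $\bigO(1)$ cycles, and since Corollary~\ref{thm:basicCorollary} guarantees that within $\bigO(1)$ cycles the client-side state is cleared on each $\text{qrmAccess}()$ return while the channels come to carry only freshly sent requests and replies, every invocation beginning after this point starts by executing line~\ref{alg:r:SelMax} or~\ref{alg:w:SelMax}. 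Hence the suffix $R_{no~incomplete}$ in $R=R'\circ R_{no~incomplete}$ contains no incomplete operation, with the stated $\bigO(1)$-cycle bound.

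For the second claim I would work inside $R_{no~incomplete}$ and first treat writes. Using Lemma~\ref{thm:eventuallyAscendingSent}, together with Lemma~\ref{thm:noRemoval} (servers never drop their maximal records) and Corollary~\ref{thm:nonDecr}, I would show that within $\bigO(1)$ cycles every server's maximal write tag is at least $\max T(R)$: any stale tag residing in a channel at $c_{start}(R)$ is delivered and absorbed into some server's storage within $\bigO(1)$ cycles, and the repeated gossip exchange then carries the global maximum to all servers. After this point a complete write $\pi$ queries (line~\ref{alg:w:SelMax}) and, by the correctness of quorum access (Corollary~\ref{thm:basicCorollary}), receives from a quorum a tag $\hat{T}(\pi)\ge\max T(R)$, so $T(\pi)=\hat{T}(\pi)+1>\max T(R)$ by Definition~\ref{def:Tpi}. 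Since $T(R)\cap T(c)\subseteq T(R)$ for every state $c$, we get $\max(T(R)\cap T(c))\le\max T(R)<T(\pi)$, so $\pi$ is valid by Definition~\ref{def:invalid}; by Lemma~\ref{thm:progression} such a write $\pi_w$ completes within a further $\bigO(1)$ cycles.

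The delicate case is reads, whose tag $T(\pi)=\hat{T}(\pi)$ is not incremented: a reader may still observe a stale finalized tag $t\in T(R)$ --- possibly backed only by the $\bot$-records that the gossip rule manufactures for it --- and then complete (returning $\bot$) with $T(\pi)=t\le\max T(R)$, which is \emph{invalid}. I would resolve this using the first complete valid write $\pi_w$ above. Its finalize phase (line~\ref{ln:finalize}) installs the finalized tag $T(\pi_w)>\max T(R)$ on a quorum $Q_{\text{fw}}(\pi_w)$. Any later read queries a quorum $\hat{Q}$, and by Lemma~\ref{thm:quor} the intersection $\hat{Q}\cap Q_{\text{fw}}(\pi_w)$ is nonempty, so the query (answered by $maxPhase(\sD\setminus\{\text{`pre'}\})$, line~\ref{alg:srv:queryInitIf}) returns a finalized tag $\ge T(\pi_w)>\max T(R)$. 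Hence $T(\pi)>\max T(R)\ge\max(T(R)\cap T(c))$ and the read is valid as well. Taking $R_{completeNonStable}$ to be the suffix beginning at $c_{end}(\pi_w)$ yields a suffix free of invalid operations, reached within $\bigO(1)$ cycles of $R_{no~incomplete}$.

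I expect the read case to be the main obstacle. Unlike writes, reads do not raise the tag, so a stale finalized record --- and the $\bot$-records the gossip rule spreads for it --- can keep a read invalid until some valid write overtakes the maximal finalized tag at every server. The argument must therefore (i) verify that stale channel tags are absorbed before gossip can certify the true maximum $\max T(R)$, and (ii) lean on quorum intersection (Lemma~\ref{thm:quor}) to guarantee that every query following $\pi_w$ observes $T(\pi_w)$ rather than a residual stale finalized tag.
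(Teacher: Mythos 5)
Your proposal follows the paper's own route for everything the paper actually proves: part one is exactly the paper's use of Lemma~\ref{thm:progression} together with Corollary~\ref{thm:basicCorollary} (the paper's Claim~\ref{thm:completeOperations}, parts (1)--(2)), and your write-validity argument --- gossip propagation of surviving stale tags to every server (Lemma~\ref{thm:eventuallyAscendingSent}, Lemma~\ref{thm:noRemoval}, Corollary~\ref{thm:nonDecr}), then query dominance plus the $+1$ increment --- is the paper's Claim~\ref{thm:completeOperations}~(3), Claim~\ref{thm:noOperation} and Claim~\ref{thm:noStaleWrite}. Where you genuinely diverge is the read case: the paper's proof simply stops after concluding that ``any \emph{write} operation in $R_{completeNonStable}$ is valid,'' and never argues validity of reads, whereas you add an argument via the first complete valid write's finalize quorum and the intersection property of Lemma~\ref{thm:quor}. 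That addition is sound as far as it goes and is arguably needed to match the literal wording of the theorem, but note two things. First, it is conditional on some write actually being invoked in $R$: the hypotheses do not guarantee this, and if no write ever occurs a read can keep returning a surviving stale finalized tag and remain invalid indefinitely --- so your stronger reading of the statement cannot be proved unconditionally; the paper sidesteps this because everything downstream (Definition~\ref{def:safeSystemState}, Theorems~\ref{thm:atomicity} and~\ref{thm:liveness}) only ever consumes the existence of a complete and valid \emph{write}. Second, a small imprecision: a stale tag sitting in a channel at $c_{start}(R)$ may be overwritten rather than ``delivered and absorbed''; this does not break the argument because Definition~\ref{def:invalid} intersects with $T(c)$, so a vanished tag is exempt from domination (the paper handles this by first establishing $T_{comm}(R_{no~incomplete})=\emptyset$), but your chain $\max T(R) < T(\pi)$ should be weakened to domination over the \emph{surviving} stale tags only.
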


\begin{proof}
The proof is implied by Claim~\ref{thm:noStaleWrite}, which uses claims~\ref{thm:completeOperations} and~\ref{thm:noOperation}. Leveraging Corollary~\ref{thm:basicCorollary}, Claim~\ref{thm:completeOperations} shows that Algorithm~\ref{alg:cas}'s executions stop having incomplete operations.
\begin{claim}
\label{thm:completeOperations}
Let $R$ be a fair execution of Algorithm~\ref{alg:cas} in which the gossip functionality behaves correctly. Within $\bigO(1)$ asynchronous cycles, $R$ includes a suffix, $R_{no~incomplete}$, that does not include: (1) operations that are incomplete in $R$, nor (2) incomplete client requests or server replies in $R$. Moreover, 
(3) $\exists c \in R_{no~incomplete}:\forall p_i \in \sP: \exists t \in \sT : t' \in (T(R_{no~incomplete})) \implies \exists (t, \bullet) \in S_{p_i}:t\geq t'$ in $c$.  
\end{claim}

\begin{claimproof}
\noindent \textbf{Part (1).~~} 
Lemma~\ref{thm:progression} implies that all incomplete operations end within $\bigO(1)$ asynchronous cycles.

\noindent \textbf{Part (2).~~} 
Suppose that all operations in $R_{no~incomplete}$ are complete, i.e., no incomplete request or replies enter the system throughout $R_{no~incomplete}$. (Due to Part (1) of this proof, we can make this assumption without losing generality.) Lemma~\ref{thm:progression} implies the correct behavior of the quorum-based communication functionality within $\bigO(1)$ asynchronous cycles, which implies Part (2). 

For the sake of simple presentation, the rest of this proof assumes that throughout $R_{no~incomplete}$, all of client requests and server replies were indeed (Section~\ref{sec:coomSafe}). 

\noindent \textbf{Part (3).~~} We start by showing that $T_{comm}(R_{no~incomplete})=\emptyset$ (Definition~\ref{def:Tpi}). Parts (1) and (2) of this proof says that $R_{no~incomplete}$ does not include the delivery of messages that were never sent in $R$. This implies $T_{comm}(R_{no~incomplete})=\emptyset$,  because $T_{comm}()$'s definition considers any message that is delivered but never sent during (due to the fact that they were in transit at the starting system state of $R_{no~incomplete}$).


Due to the above, we only show that within $\bigO(1)$ asynchronous cycles in $R_{no~incomplete}$, the system reaches a system state $c \in R_{no~incomplete}$, such that $\forall p_i \in \sP: \exists t \in \sT : t' \in (T_{node}(R_{no~incomplete}))$ $\implies \exists (t, \bullet) \in S_{p_i}:t\geq t'$. Let $t' \in (T_{node}(R_{no~incomplete}))$. Suppose that $t'$ appears in the client state at node $p_j \in \sP$. By the assumption that this theorem makes about $R$ fairness, we know that $p_j$'s client operation terminates within $\bigO(1)$ asynchronous cycles (Lemma~\ref{thm:progression}). Once that happen, the client state no longer includes any tag value (cf. part (c) of the quorum-based communication service and Corollary~\ref{thm:basicCorollary}). Suppose that $t'$ is part of the server state, i.e., $\exists p_j \in \sP : (t', \bullet) \in S_{p_j}$. Let us consider a choice of $p_j$ and $t'$, such that $t'$ is maximal. By Lemma~\ref{thm:eventuallyAscendingSent}, within $\bigO(1)$ asynchronous cycles, $p_i$'s server receives at least one gossip that includes a tag $t''\geq t'$ that is not smaller than $t'$. The proof is done by replacing $t$ with $t''$ in the invariant that we need to prove, i.e., $\forall p_i \in \sP: \exists t'' \in \sT : t' \in (T_{node}(R_{no~incomplete})) \implies \exists (t'', \bullet) \in S_{p_i}:t''\geq t'$.
\end{claimproof}

Claim~\ref{thm:noOperation} shows that an $R_{no~incomplete}$'s operation, $\pi$, uses a tag that is not smaller than any (maximal) tag $T_{maxQuery}(\pi)$ on the servers that participate in $\pi$'s query quorum, where $\emph{tags}(C,D)=\{ t:(t,\bullet, d) \in S_{p_j} \land d \in D \land p_j \in C\}$ (Definition~\ref{def:startEnd}) and $T_{maxQuery}(\pi)=\max \emph{tags}(\hat{Q}(\pi),\sD \setminus \{\text{`pre'}\})$ in $c_{start}(\pi) \in R_{no~incomplete}$.

\begin{claim}
\label{thm:noOperation}
Let $\pi$ be an $R_{no~incomplete}$'s operation. 
$T(\pi) \geq T_{maxQuery}(\pi)$ in $c_{start}(\pi) \in R_{no~incomplete}$. Moreover, $T(\pi) > T_{maxQuery}(\pi)$ when $\pi$ is a write operation.
\end{claim}

\begin{claimproof}
Due to the correctness of the quorum-based communication functionality during $R_{no~incomplete}$ (Claim~\ref{thm:completeOperations}), Corollary~\ref{thm:nonDecr} as well as lines~\ref{alg:w:SelMax},~\ref{alg:r:SelMax}, and~\ref{alg:srv:queryInit} to~\ref{alg:srv:queryInitElse}, it holds that $\hat{T}(\pi)$ is not smaller than any write or read tag in $S_{p_j}:p_j \in \hat{Q}(\pi)$ in $c_{start}(\pi)$. Moreover, $T(\pi)\geq\hat{T}(\pi)$ (Definition~\ref{def:Tpi}) and $T(\pi) > \hat{T}(\pi)$ when $\pi$ is a write operation. Thus, in $c_{start}(\pi)$, it holds that $T(\pi)$ is not smaller than any tag in $S_{p_j}:p_j \in \hat{Q}(\pi)$ (and it is actually greater when $\pi$ is a write operation). 
\end{claimproof}

Claim~\ref{thm:noStaleWrite} implies that any write operation $\pi_{write}$ in $R_{completeNonStable}$ is valid with respect to $R$ and by that we complete the proof.

\begin{claim}
\label{thm:noStaleWrite}
Within $\bigO(1)$ asynchronous cycles, execution $R_{no~incomplete}$ reaches a suffix, which we denote by $R_{completeNonStable}$, such that for any of $R_{completeNonStable}$'s write operations, $\pi_{write}$, it holds that $T(\pi_{write}) > \max (T(R_{no~incomplete}))$ in $c \in R_{completeNonStable}$.   
\end{claim}

\begin{claimproof}
The proof is implied by Part (3) of Claim~\ref{thm:completeOperations} and Claim~\ref{thm:noOperation}. 
\end{claimproof}\end{proof}

\subsection{Atomicity of Algorithm~\ref{alg:cas}}
\label{sec:atom}
We demonstrate that, after a recovery period (Definition~\ref{def:safeSystemState}), Algorithm~\ref{alg:cas} emulates shared atomic read/write memory. Some elements of the following proof are similar to arguments in~\citep[Theorem 1]{DBLP:journals/dc/CadambeLMM17}. Note that Theorem~\ref{thm:atomicity} considers $R_{legalExecution}$ but does not require fairness. By that it merely assumes that at least a single complete and valid write operation occurred during the recovery period (Definition~\ref{def:safeSystemState}) or that the system starts in a safe state (Definition~\ref{def:safeStart}).

\begin{theorem}[\textbf{Atomicity}]
\label{thm:atomicity}
Algorithm~\ref{alg:cas} is atomic in $R_{legalExecution}$.
\end{theorem}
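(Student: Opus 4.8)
The plan is to establish atomicity by constructing explicit serialization points for every complete operation in $R_{legalExecution}$ and then verifying that the induced ordering satisfies the standard partial-order conditions for atomic (linearizable) read/write memory, as in~\citep[Chapter 13]{DBLP:books/mk/Lynch96} and~\citep[Theorem 1]{DBLP:journals/dc/CadambeLMM17}. Concretely, I would order all complete operations by their associated tags $T(\pi)$ (Definition~\ref{def:Tpi}), breaking ties so that each read is placed immediately after the write whose tag it returns; write-write order is total because distinct writes get distinct tags (the tag's second coordinate is the writer's node identifier, and the query-then-increment discipline of lines~\ref{alg:w:SelMax}--\ref{ln:preWrite} forces strictly increasing first coordinates for a given writer). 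The reason I can reuse most of the Cadambe et al.\ argument is that, by the remark following Lemma~\ref{thm:quor}, the $(k+2e)$-overlap quorums preserve their atomicity analysis; the real work is confined to the self-stabilization boundary.

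First I would fix, by Definition~\ref{def:safeSystemState}, that $R_{legalExecution}$ either starts from $c_{safe}$ (Definition~\ref{def:safeStart}) or begins at $c_{end}(\pi_{complete\&valid})$ for a complete valid write $\pi_{complete\&valid}$. The point of invoking validity (Definition~\ref{def:invalid}) is that it lets me discard all stale tags inherited from the arbitrary starting state: every tag seen from $c_{end}(\pi_{complete\&valid})$ onward that is relevant to a complete operation is dominated by $T(\pi_{complete\&valid})$, so within $R_{legalExecution}$ the tag space behaves exactly as in a non-self-stabilizing execution that started cleanly. I would make this precise by showing that for any complete operation $\pi$ in $R_{legalExecution}$, its query phase (via Claim~\ref{thm:noOperation} and Corollary~\ref{thm:nonDecr}) returns a tag at least as large as that of any operation whose finalize phase completed before $\pi$'s query began. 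This is the monotonicity backbone of the proof.

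The key steps in order would be the four standard atomicity lemmas. \textbf{(P1)} The order is a partial order with no operation preceded by infinitely many others --- immediate from the well-foundedness of the bounded tag set $\sT$ under the total order $<$ on tags. \textbf{(P2)} If $\pi_1$ completes before $\pi_2$ begins (in real-time precedence within $R_{legalExecution}$), then $\pi_2 \not< \pi_1$ in the serialization --- here I would use the quorum intersection of Lemma~\ref{thm:quor} together with the extra FINALIZE phase (line~\ref{ln:FINALIZE}): a complete write's tag is finalized at a full quorum, and the anchor record mechanism (line~\ref{alg:srv:FINMax}) guarantees $\pi_2$'s query quorum intersects it, so $\pi_2$ reads a tag $\geq T(\pi_1)$. \textbf{(P3)} Every read returns the value of the write immediately preceding it in the order (or $v_0$) --- this follows because the reader's finalize phase (line~\ref{ln:finalizeR}) collects at least $k+2e$ shares tagged $T(\pi)$, and by Lemma~\ref{thm:robust} Berlekamp-Welch decodes the correct secret despite $\leq e$ malicious shares. \textbf{(P4)} Writes are totally ordered, already argued above.

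The main obstacle I expect is \textbf{(P2)} at the self-stabilization boundary, specifically guaranteeing that a complete write's finalized tag is genuinely \emph{visible} to every later query despite the unreliable, bounded-capacity gossip and the fact that writers complete faster than gossip propagates. The non-self-stabilizing proof relies on a reliable gossip that eventually informs all servers; here the novel FINALIZE phase and the anchor-tag promotion of line~\ref{alg:srv:FINMax} are exactly what replaces it, so I would have to argue that once a write reaches its FINALIZE quorum $Q_{fw}(\pi)$, the finalized-tag information is pinned at enough servers (via Lemma~\ref{thm:noRemoval}, which forbids servers from discarding maximal records) that any subsequent query quorum $\hat{Q}(\pi')$ must intersect the surviving witnesses in at least one server holding a tag $\geq T(\pi)$. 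Establishing that the anchor promotion does not race with the bounded channel capacity --- i.e.\ that visibility holds in arbitrary asynchronous (possibly unfair) suffixes, not merely fair ones --- is the delicate part, and it is precisely where this proof must go beyond the reuse of~\citep[Theorem 1]{DBLP:journals/dc/CadambeLMM17}.
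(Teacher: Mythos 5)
Your proposal is correct and follows essentially the same route as the paper: order operations by tag with the read-after-write tie-break (the paper's Definition~\ref{def:prec}), verify the three sufficient conditions of~\citep[Lemma 2]{DBLP:journals/dc/CadambeLMM17} (Corollary~\ref{thm:prec}), and discharge the real-time condition via the intersection of $Q_{\text{fw}}(\pi_1)$ with $\hat{Q}(\pi_2)$ plus Lemma~\ref{thm:noRemoval}, with the self-stabilization boundary handled by restricting attention to operations after $\pi_{complete\&valid}$. The ``delicate part'' you flag at the end is resolved exactly as you anticipate --- the paper's Lemma~\ref{thm:allServersStore} needs only the direct client-to-quorum finalize round trip and the no-removal property, so gossip reliability and the anchor-promotion mechanism play no role in the atomicity argument itself (they matter for storage bounding and liveness instead).
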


The $\prec$ order satisfies the sufficient conditions for atomicity (Corollary~\ref{thm:prec}), which we borrow from~\cite{DBLP:journals/dc/CadambeLMM17}.

\begin{corollary}[Lemma 2 in~\cite{DBLP:journals/dc/CadambeLMM17}] 
\label{thm:prec}
Let $\Pi$ be the set of all operations in $R$. Suppose that $\prec$ is an irreflexive partial ordering of all the operations in $\Pi$ that satisfies: (1) when $\pi_1$'s return precedes $\pi_2$'s start in R, $\pi_2 \prec \pi_1$ is false. (2) When $\pi_1 \in \Pi$ is a write operation and $\pi_2 \in \Pi$ is any client operation, either $\pi_1 \prec \pi_2$ or $\pi_2 \prec \pi_1$ holds (but not both). (3) The value returned by each read operation is the value written by the last preceding write operation according to $\prec$ (or $v_0$, which is the default object value in the absence of such write).
\end{corollary}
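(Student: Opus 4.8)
The plan is to prove the (implicit) conclusion of the corollary, namely that an execution $R$ equipped with such a $\prec$ is atomic, by exhibiting a \emph{linearization}: a total order $\Rightarrow$ on $\Pi$ that (i) extends the real-time ``response-precedes-invocation'' precedence of $R$ and (ii) makes each read return the value written by its immediate $\Rightarrow$-predecessor write. Exhibiting such a serialization is exactly the atomicity witness for a read/write object in the sense of~\citep[Chapter 13]{DBLP:books/mk/Lynch96}, so the whole task reduces to constructing $\Rightarrow$ from $\prec$ and verifying the two properties. The three hypotheses will be used as follows: condition~(2) anchors the writes, condition~(3) places the reads, and condition~(1) guarantees real-time consistency.

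For the construction, I would first note that by condition~(2) any two writes are $\prec$-comparable, so $\prec$ restricts to a total order on the write operations; I enumerate them $W_1 \prec W_2 \prec \cdots$ (a genuine sequence once one adds the standard finiteness assumption that every operation has only finitely many $\prec$-predecessors, which holds automatically when $\prec$ is induced by tags drawn from $\sT$, since each tag has a finite set of smaller tags). For each read $\rho$, condition~(3) supplies a unique \emph{source write} $W(\rho)$, the $\prec$-maximal write below $\rho$, with the convention $W(\rho)=W_0$ (a fictitious minimum write installing $v_0$) when no write $\prec$-precedes $\rho$; by condition~(3) the read returns $W(\rho)$'s value. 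I then build $\Rightarrow$ by keeping the writes in their $\prec$-order and inserting every read $\rho$ into the slot immediately after $W(\rho)$ and before the $\prec$-successor of $W(\rho)$. Reads that share a source write form a \emph{group}; within each group I order the reads by an arbitrary linear extension of the real-time precedence restricted to that group, which exists because ``response-precedes-invocation'' is an irreflexive and transitive relation.

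Next I would verify the two required properties. Sequential semantics: in $\Rightarrow$ the last write before any read $\rho$ is exactly $W(\rho)$, because any write $W$ with $W(\rho)\Rightarrow W\Rightarrow\rho$ is $\prec$-comparable to $\rho$ by condition~(2), the alternative $\rho\prec W$ is impossible (it would force $\rho\Rightarrow W$), and $W\prec\rho$ would contradict the $\prec$-maximality defining $W(\rho)$; hence each read returns the value of its immediate $\Rightarrow$-predecessor write. Real-time consistency: whenever $\pi$ returns before $\phi$ is invoked in $R$, condition~(1) gives $\neg(\phi\prec\pi)$, and I case-split on types. The three cases in which at least one of $\pi,\phi$ is a write all follow by using condition~(2) to force $\pi\prec\phi$ and then applying the placement rule. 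The read--read case is the crux: assuming for contradiction $\phi\Rightarrow\pi$, either $W(\phi)\prec W(\pi)$ or $W(\phi)=W(\pi)$ with $\phi$ before $\pi$ in the group. The latter contradicts the group's real-time linear extension. For the former I argue $W(\pi)\prec\phi$ (otherwise $\phi\prec W(\pi)\prec\pi$ would give $\phi\prec\pi$, contradicting condition~(1)), whence $W(\pi)\preceq W(\phi)$ by maximality of $W(\phi)$, contradicting $W(\phi)\prec W(\pi)$.

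I expect the main obstacle to be precisely this read--read real-time sub-argument: both ruling out the source-write inversion $W(\phi)\prec W(\pi)$ and discharging the same-source-write subcase via the group ordering require the combined use of conditions~(1) and~(2), and they are the only place where the interaction between $\prec$ and the real-time order is genuinely nontrivial. A secondary technical point to handle cleanly is making $\Rightarrow$ a bona fide sequence (order type at most $\omega$) so that it corresponds to an actual sequential execution, which is where the finiteness-of-predecessors assumption enters.
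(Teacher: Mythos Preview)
The paper does not prove this corollary at all: it is quoted verbatim as Lemma~2 of Cadambe et al.\ and used as a black-box sufficient condition for atomicity, so there is no in-paper argument to compare against. Your proposal therefore goes beyond what the paper does, supplying the actual linearization construction that the cited lemma encapsulates.

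Your argument is the standard one (essentially Lynch's Lemma~13.16): totally order the writes via condition~(2), attach each read to its $\prec$-maximal preceding write via condition~(3), and break ties within a read-group by any linear extension of real-time order. The verification of sequential semantics is immediate from the placement rule, and your case analysis for real-time consistency is sound; in particular the delicate read--read subcase is handled correctly by deriving $W(\pi)\prec\phi$ from $\neg(\phi\prec\pi)$ and then contradicting the assumed $W(\phi)\prec W(\pi)$ via the maximality of $W(\phi)$. The only places to tidy are cosmetic: the finiteness-of-predecessors assumption you flag is indeed needed to get order type at most $\omega$, and in the ``$\pi$ read, $\phi$ write'' case you should make explicit that $W(\pi)\prec\phi$ follows from $\neg(\phi\prec\pi)$ together with comparability of the two writes (you state this only implicitly under ``applying the placement rule''). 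None of this affects correctness.
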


\begin{definition}
\label{def:prec}
Define $\pi_1 \prec \pi_2$ if (i) $T(\pi_1) < T (\pi_2)$, or (ii) $T(\pi_1) = T(\pi_2)$, $\pi_1$ is a write and $\pi_2$ is a read. \end{definition} 

We show that $\prec$ satisfies the conditions of Corollary~\ref{thm:prec}. The proof of the closure property follows similar arguments to the ones made by Cadambe et al.~\cite{DBLP:journals/dc/CadambeLMM17}. 
It shows that by the time that operation $\pi$ ends, the tag $T(\pi)$ has finished propagating and installing the messages $\langle T(\pi), \bullet, \text{`fin'} \rangle$ in the storage of at least one quorum of servers (Lemma~\ref{thm:allServersStore}). It uses the visibility of $T(\pi)$ for claiming that the query phase of any operation that starts after $\pi$'s end, retrieve a tag that is at least as large as $T(\pi)$ (Lemma~\ref{thm:greaterT}). This is the basis of showing that each write operation has a unique tag (Lemma~\ref{thm:neqTag}). We complete the proof of Theorem~\ref{thm:atomicity} by demonstrating conditions (1) and (2) of Corollary~\ref{thm:prec} (using lemmas~\ref{thm:greaterT} and~\ref{thm:neqTag}) and well as condition (3) by considering read and write operations (Algorithm~\ref{alg:cas}) during $R_{legalExecution}$.

Lemma~\ref{thm:allServersStore} is a variation on Lemma 3 in~\cite{DBLP:journals/dc/CadambeLMM17}. We use Lemma~\ref{thm:noRemoval} for arguing that the servers (Algorithm~\ref{alg:cas}) store the currently maximal records (and any record with a higher tag). This variation is needed, because Lemma~\ref{thm:allServersStore} considers only operations that start after the (last) valid and complete write operation $\pi_{complete\&valid}$ (or a system that starts in a safe system state, cf. Definition~\ref{def:safeStart}).

\begin{lemma}[Storing the operation records] 
\label{thm:allServersStore}
Suppose that $\pi$ is a complete (read or write) operation in $R_{legalExecution}$. There is a quorum $Q_{\text{fw}}(\pi) \in \sQ$, such that all of its servers store the triple $(t, w, \text{`fin'})$, where $t = T(\pi)$ and $w \in \sW \cup \{ \bot \}$ in $c_{end}(\pi)$ and in every system state after $c_{end}(\pi)$.
\end{lemma}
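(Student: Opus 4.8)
The plan is to trace the record $(T(\pi), w, \text{`fin'})$ back to the finalize phase of $\pi$, argue that this phase installs such a record at an entire quorum, and then show the record is never subsequently discarded. First I would identify the relevant phase: whether $\pi$ is a write or a read, it reaches a finalize step whose client call is $\text{qrmAccess}((T(\pi), \bot, \text{`fin'}))$ --- line~\ref{ln:finalize} for a write and line~\ref{ln:finalizeR} for a read (recall $T(\pi)=\hat{T}(\pi)+1$ for a write and $T(\pi)=\hat{T}(\pi)$ for a read, Definition~\ref{def:Tpi}). Since $\pi$ is complete, this call returns before $c_{end}(\pi)$, and because $R_{legalExecution}$ is legal with respect to the quorum service, the correct-behavior clause of Definition~\ref{def:coomSafe} (summarized in Corollary~\ref{thm:basicCorollary}) guarantees that such a return occurs \emph{only} as the consequence of at least one quorum of servers receiving, delivering and acknowledging the request. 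I would name this quorum $Q_{\text{fw}}(\pi) \in \sQ$ and record that each of its servers executes the finalize handler (line~\ref{alg:srv:fin}) on tag $T(\pi)$ at some step preceding $c_{end}(\pi)$.

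Next I would show that this handler leaves a finalized record with tag $T(\pi)$ in the storage of every server of $Q_{\text{fw}}(\pi)$. Each such server runs $updatePhase(T(\pi), \bot, \text{`fin'})$ (line~\ref{alg:srv:updatePhasetFin}), and a case analysis of $updatePhase$ (line~\ref{alg:srv:helper}) together with $upgradePhase$ (line~\ref{alg:srv:upcu}) yields exactly three outcomes: if a preceding pre-write deposited $(T(\pi), w', \text{`pre'})$ with $w' \neq \bot$, the first branch preserves $w'$ and upgrades the phase to $\text{`fin'}$; if a record with tag $T(\pi)$ already carries label $\text{`fin'}$ or $\text{`FIN'}$, its phase is left unchanged (still finalized-or-higher); otherwise the second branch inserts a fresh $(T(\pi), \bot, \text{`fin'})$. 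In every outcome $S$ contains, at $c_{end}(\pi)$, a record whose tag is $T(\pi)$, whose coded element $w$ lies in $\sW \cup \{\bot\}$ (and may differ across servers), and whose label is finalized, which is what the statement asserts.

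The remaining and most delicate step is persistence in every system state after $c_{end}(\pi)$. Here I would lean on the mechanism already established in Lemma~\ref{thm:noRemoval}: the only way a server alters $S$ is through $updatePhase$, and every such call touches only records whose tag equals the argument tag. Consequently no later step at a server of $Q_{\text{fw}}(\pi)$ --- triggered by a pre-write, by a finalize/FINALIZE, or by a gossip exchange (lines~\ref{alg:srv:preMaxUpdt},~\ref{alg:srv:finMaxUpdt} and~\ref{alg:srv:FINMaxUpdt}) --- can remove a record bearing a tag other than its own argument, nor downgrade a phase, nor overwrite a non-$\bot$ coded element with $\bot$ (the guard $w' \neq \bot \land w = \bot$ in the first branch of $updatePhase$ forbids exactly this). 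Hence the record with tag $T(\pi)$ survives unchanged except for a possible $\text{`fin'} \to \text{`FIN'}$ upgrade, and remains a finalized record with tag $T(\pi)$ at every state after $c_{end}(\pi)$.

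I expect the main obstacle to be making this persistence argument airtight, since it requires ruling out every code path that writes to $S$ and verifying both the coded-element guard and the monotonicity of $upgradePhase$. The one cosmetic wrinkle to address is that the surviving label may be the stronger $\text{`FIN'}$ rather than literally $\text{`fin'}$; I would reconcile this by reading the statement as asserting a record with tag $T(\pi)$ and a finalized-or-higher label, which is precisely the property that the downstream atomicity arguments (Lemmas~\ref{thm:greaterT} and~\ref{thm:neqTag}) actually consume.
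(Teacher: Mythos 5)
Your proposal is correct and follows essentially the same route as the paper's proof: identify $Q_{\text{fw}}(\pi)$ from the finalize phase via the correctness of the quorum-based communication service (Corollary~\ref{thm:basicCorollary}), observe that each of its servers processes the finalize request before $c_{end}(\pi)$, and appeal to Lemma~\ref{thm:noRemoval} for persistence. Your version is merely more explicit than the paper's (the case analysis of $updatePhase$/$upgradePhase$ and the $\text{`fin'}$-versus-$\text{`FIN'}$ caveat are left implicit there), but the argument is the same.
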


\begin{proof}
Let $Q_{\text{fw}}(\pi)$ the quorum that $\pi$'s client (at node $p_i$) receives responses from $Q_{\text{fw}}(\pi)$'s servers during $\pi$'s finalize phase (lines~\ref{ln:finalize} and~\ref{ln:finalizeR}). Since $\pi$ is complete as well as the functionalities of gossip and quorum-based communication are correct in $R_{legalExecution}$ (Corollary~\ref{thm:basicCorollary}), it is true that the server at node $p_j \in Q_{\text{fw}}(\pi)$ responds to $\pi$'s finalize message (line~\ref{alg:srv:fin}) at some step $a_{\text{fw},j} \in R_{legalExecution}$.
Note that: (i) $p_j$'s response arrives eventually to $p_i$'s writer and that occurs before the system reaches $c_{end}(\pi)$, because $p_j \in Q_{\text{fw}}(\pi)$, as well as (ii) the servers (Algorithm~\ref{alg:cas}) keep in their storage the currently maximal (write, read, and anchor) records and any received record with a tag that is higher than the ones in these records (Lemma~\ref{thm:noRemoval}).
\end{proof}

\begin{lemma}[Similar to Lemma 4 in~\cite{DBLP:journals/dc/CadambeLMM17}] 
\label{thm:greaterT}
Let $\pi_i:i \in \{1,2\}$ be two complete operations in $R$, such that each $\pi_i$ starts immediately after the system states $c^s_i \in R:i \in \{1,2\}$ and returns immediately before $c^r_i \in R:i \in \{1,2\}$. Assume that $c^r_1$ appears before $c^s_2$ in $R$. (1) $T(\pi_2) \geq T(\pi_1)$ and (2) when $\pi_2$ is a write operation, $T(\pi_2) > T(\pi_1)$.
\end{lemma}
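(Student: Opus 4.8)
The plan is to run the classical ABD/CAS visibility argument, adapted to the self-stabilizing setting by leaning on the monotonicity machinery already in place. Since $\pi_1$ is complete, Lemma~\ref{thm:allServersStore} supplies a finalize quorum $Q_{\text{fw}}(\pi_1) \in \sQ$ all of whose servers store the finalized record $(T(\pi_1), \bullet, \text{`fin'})$ at $c_{end}(\pi_1)$ and at every later system state. The hypothesis places $c^r_1 = c_{end}(\pi_1)$ strictly before $c^s_2 = c_{start}(\pi_2)$, so this record is present at every server of $Q_{\text{fw}}(\pi_1)$ at $c_{start}(\pi_2)$ and throughout $\pi_2$'s query phase.

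Next I would invoke quorum intersection. Because $\pi_2$ is complete, its query collects responses from a quorum $\hat{Q}(\pi_2)$ (Definition~\ref{def:Tpi}), and by Lemma~\ref{thm:quor}(1) we have $|\hat{Q}(\pi_2) \cap Q_{\text{fw}}(\pi_1)| \geq k+2e > 0$, so I can fix a server $p_j$ in the intersection. The key local claim is that $p_j$'s reply to $\pi_2$'s query carries a tag at least $T(\pi_1)$: since $p_j \in Q_{\text{fw}}(\pi_1)$ holds $(T(\pi_1), \bullet, \text{`fin'})$ at $c_{start}(\pi_2)$, Lemma~\ref{thm:noRemoval} guarantees it keeps that record (or one with a higher tag) up to the step where it answers the query (line~\ref{alg:srv:queryInit}). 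As the stored record is labelled $\text{`fin'}$, it is counted both when $p_j$ answers a reader via $maxPhase(\sD \setminus \{\text{`pre'}\})$ (line~\ref{alg:srv:queryInitIf}) and when it answers a writer via $maxPhase(\sD)$ (line~\ref{alg:srv:queryInitElse}); by Corollary~\ref{thm:nonDecr} the returned value is non-decreasing, hence $\geq T(\pi_1)$.

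The conclusion is then immediate from Definition~\ref{def:Tpi}. The maximum arriving tag in $\pi_2$'s query satisfies $\hat{T}(\pi_2) \geq T(\pi_1)$, since $p_j$'s reply is one of the aggregated replies. If $\pi_2$ is a read then $T(\pi_2) = \hat{T}(\pi_2) \geq T(\pi_1)$, proving (1); if $\pi_2$ is a write then $T(\pi_2) = \hat{T}(\pi_2)+1 > \hat{T}(\pi_2) \geq T(\pi_1)$, proving (2) and also (1).

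I expect the main obstacle to be the visibility bookkeeping rather than any deep step: one must be careful that the finalized tag $T(\pi_1)$ is genuinely reflected in $p_j$'s query reply in both the reader and writer cases, and that the window between $c_{start}(\pi_2)$ and the step in which $p_j$ actually answers cannot let $p_j$'s reported tag fall below $T(\pi_1)$. Both points are settled by combining Lemma~\ref{thm:noRemoval} (maximal records are never removed) with the monotonicity of $maxPhase$ (Corollary~\ref{thm:nonDecr}). Finally, since the argument lives in $R_{legalExecution}$, I note that Lemma~\ref{thm:allServersStore} together with the correctness of quorum-based communication (Corollary~\ref{thm:basicCorollary}) applies here without invoking fairness, as required by Theorem~\ref{thm:atomicity}.
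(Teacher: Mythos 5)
Your proposal is correct and follows essentially the same route as the paper's proof: both reduce the statement to showing $\hat{T}(\pi_2)\geq T(\pi_1)$, use Lemma~\ref{thm:allServersStore} to place the finalized record $(T(\pi_1),\bullet,\text{`fin'})$ on all of $Q_{\text{fw}}(\pi_1)$ from $c_{end}(\pi_1)$ onward, pick a server in $\hat{Q}(\pi_2)\cap Q_{\text{fw}}(\pi_1)$ via Lemma~\ref{thm:quor}, and appeal to the retention/monotonicity of stored maximal records to conclude that its query reply carries a tag at least $T(\pi_1)$. The only cosmetic difference is that you explicitly check both the reader and writer query branches, which the paper handles implicitly.
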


\begin{proof} 
Let $\hat{T}(\pi)$ be the maximum arriving tag during $\pi$'s query (lines~\ref{alg:w:SelMax} and~\ref{alg:r:SelMax}). It is suffices to show that $\hat{T}(\pi_2)\geq T(\pi_1)$ (Claim~\ref{thm:hatTgeqTpi1}), because when $\pi_2$ is a read, $T(\pi_2) = \hat{T}(\pi_2)$, and when $\pi_2$ is a write, $T(\pi_2) > \hat{T}(\pi_2)$ (see the pseudo-code of the reader and writer in Algorithm~\ref{alg:cas}). 

\begin{claim}
\label{thm:hatTgeqTpi1}
$\mathbf{\hat{T}(\mathbf{\pi}_2) \geq T(\mathbf{\pi}_1)}$.
\end{claim}
\begin{claimproof}
Let $\hat{Q}(\pi_i)$ be the set of nodes that their servers respond to $\pi_i$'s query (lines~\ref{alg:w:SelMax} and~\ref{alg:r:SelMax}). Note the existence of node $p_j \in \hat{Q}(\pi_2) \cap Q_{\text{fw}}(\pi_1)$ (Lemma~\ref{thm:quor}) that its server responds to $\pi_2$'s query with $(t, \bullet, \text{`qry'})$ (line~\ref{alg:srv:queryInit}) immediately after some system state $\hat{c}_{2,j} \in R$, where $t$ is the highest tag of a finalized (or FINALIZED) record that $p_j$ stores in $S_{p_j}$. We argue that $t\geq T(\pi_1)$, because $( T(\pi_1), \bullet,d) \in S_{p_j}: d \in \sD \setminus \{\text{`pre'}\}$ in $\hat{c}_{2,j}$ and $t$ is $S_{p_j}$'s highest finalized tag in $\hat{c}_{2,j}$. In detail, we argue the following. 

\begin{enumerate}
\item The fact that $p_j \in Q_{\text{fw}}(\pi_1)$ implies $( T(\pi_1), \bullet,d) \in S_{p_j}: d \in \sD \setminus \{\text{`pre'}\}$ as long as $\hat{c}_{2,j}$ appears after $c_{end}(\pi_1)$ in $R$ (Lemma~\ref{thm:allServersStore}). Moreover, $\hat{c}_{2,j}$ indeed appears after $c_{end}(\pi_1)$ in $R$, since $c^r_1$ appears before $c^s_2$ in $R$ (by this lemma assumption) and $c^s_2$ cannot appear after $\hat{c}_{2,j}$ (by the fact that $\hat{c}_{2,j}$ appears immediately before the response to a query that is sent immediately after $c^s_2$). 

\item The fact that $p_j \in \hat{Q}(\pi_2)$ implies that $p_j$ responds to $\pi_2$'s query (by $\hat{Q}(\pi_2)$'s definition), 

\item The server at $p_j$ replies with $(t,\bullet,\text{`qry'})$ to $\pi_2$'s query, such that $(t,\bullet,d) \in S_{p_j}:d \in \sD \setminus \{\text{`pre'}\}$, where $t$ is $S_{p_j}$'s highest finalized (or FINALIZED) tag (line~\ref{alg:srv:queryInit}) in $\hat{c}_{2,j}$. 
\end{enumerate}

Since $t\geq T(\pi_1)$, it holds that $\pi_2$'s query phase includes the reception of a response with a tag that is no smaller than $T(\pi_1)$. Thus, $\hat{T}(\pi_2)\geq T(\pi_1)$.
\end{claimproof}
\end{proof}

\begin{lemma}[cf. Lemma 5 in~\cite{DBLP:journals/dc/CadambeLMM17}]
\label{thm:neqTag}
Let $\pi_1$ and $\pi_2$ be two write operations in $R$. $T(\pi_1)\neq T(\pi_2)$.
\end{lemma}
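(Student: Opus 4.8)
The plan is to exploit the two-coordinate structure of a write tag. Recall that a write operation $\pi$ performed by the client at node $p_i$ uses the tag $T(\pi)=(z+1,i)\in\mathbb{Z}^+\times P$, where the first coordinate is one more than the maximal $z$-value returned by $\pi$'s query and the second coordinate is the identifier $i$ of the hosting node (line~\ref{ln:preWrite}). Since two tags $(z_1,i_1)$ and $(z_2,i_2)$ are equal only when both coordinates agree, I would prove the claim by splitting on whether $\pi_1$ and $\pi_2$ are issued by the same node or by two distinct nodes.

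First, suppose $\pi_1$ and $\pi_2$ originate at distinct nodes $p_{i_1}\neq p_{i_2}$. Then the second coordinates of $T(\pi_1)$ and $T(\pi_2)$ are $i_1$ and $i_2$, and since $i_1\neq i_2$ the tags differ irrespective of their first coordinates. This case requires no further work and does not even use the ordering of the operations.

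Second, suppose $\pi_1$ and $\pi_2$ are both issued by the same node $p_i$. Here I would invoke the handshake discipline guaranteed by the terms of service of the quorum-based communication functionality (Section~\ref{sec:coomSafe}): each node runs at most one client at a time and calls $\text{qrmAccess}()$ only after the preceding call has returned, so the two write operations are sequential. Consequently one of them, say $\pi_1$, returns before the other starts, i.e.\ $c^r_1$ precedes $c^s_2$ in $R$. I can then apply part~(2) of Lemma~\ref{thm:greaterT} with $\pi_2$ as the later operation (a write) to obtain $T(\pi_2)>T(\pi_1)$, and hence $T(\pi_1)\neq T(\pi_2)$.

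The only delicate point, and thus the main obstacle, is the same-writer case: I must ensure that Lemma~\ref{thm:greaterT} actually applies, which presupposes that $\pi_1$ and $\pi_2$ are complete and genuinely ordered (return-before-start). I would therefore make explicit that the argument concerns operations that are complete in the legal suffix $R_{legalExecution}$, where the sequential-access terms of service hold; this is precisely what yields the return-before-start relation and, via Lemma~\ref{thm:greaterT}(2), the strict inequality $T(\pi_2)>T(\pi_1)$ needed to close the argument.
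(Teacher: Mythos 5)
Your proof is correct and follows essentially the same route as the paper: case-split on whether the two writes share the invoking node's identifier (the second tag coordinate settles the distinct-node case), then use the sequential handshake discipline plus Lemma~\ref{thm:greaterT}(2) to get the strict inequality $T(\pi_2)>T(\pi_1)$ in the same-node case. Your additional care about the operations being complete so that Lemma~\ref{thm:greaterT} applies is a sensible refinement that the paper leaves implicit.
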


\begin{proof} 
Denote by $id_i:\in\{1,2\}$ the identifier of the node that invokes operation $\pi_i$. 
Note that $id_1 \neq id_2$ implies $T(\pi_1)\neq T(\pi_2)$, because $T(\pi_i) = (z_i, id_i)$ (lines~\ref{ln:preWrite},~\ref{ln:finalize} and~\ref{ln:FINALIZE}, Algorithm~\ref{alg:cas}). 
Thus, until the end of this proof, we focus on the case in which $id_1 = id_2$. The client (at node $p_i$) performs sequentially the operations $\pi_1$ and $\pi_2$ (Section~\ref{sec:sys}), i.e., one of them ends before the other starts. Let us assume, without loss of generality, that $\pi_1$ ends before $\pi_2$ starts. $T(\pi_2) > T(\pi_1)$ (Lemma~\ref{thm:greaterT}) implies that $T(\pi_2) \neq T(\pi_1)$.
\end{proof}

\begin{proof}[\textbf{\emph{Proof of Theorem~\ref{thm:atomicity}}}] 
For any two operations $\pi_1$, $\pi_2$, the definition of $\prec$ (Corollary~\ref{thm:prec}) says $\pi_1 \prec \pi_2$ when: (i) $T(\pi_1) < T (\pi_2)$, or (ii) $T(\pi_1) = T(\pi_2)$ as long as $\pi_1$ is a write and $\pi_2$ is a read. Suppose that operations $\pi_1$ and $\pi_2$ occur in Algorithm~\ref{alg:cas}'s legal execution $R_{legalExecution}$. After verifying that ${\prec}$ is indeed a partial order, we show the three properties of Corollary~\ref{thm:prec}.

\noindent \textbf{The relation $\mathbf{\prec}$ is a partial order.~~} We demonstrate that ${\pi_1 \prec \pi_2} \implies \pi_2 \nprec \pi_1$ by assuming that this statement is false, i.e., $\pi_1 \prec \pi_2 \land \pi_2 \prec \pi_1$, and then show a contradiction. Note that $(T(\pi_1) \leq T(\pi_2)) \land (T(\pi_2) \leq T(\pi_1)) \implies T(\pi_1) = T(\pi_2)$ ($\leq$'s definition). 
Therefore, $\pi_1$ is a write and $\pi_2$ is a read (Part (ii), Definition~\ref{def:prec}). Using symmetrical arguments, $\pi_2$ is a write and $\pi_1$ is a read. A contradiction.

\noindent \textbf{Property (1) of Corollary~\ref{thm:prec}.~~} 
Assume that $\pi_1$ returns before $\pi_2$ starts in $R$. 
We show that whether $\pi_2$ is a read or a write, it holds that $\pi_2 \prec \pi_1$ is false. 

\begin{itemize}

\item When $\pi_2$ is a read, $T (\pi_2) \geq T(\pi_1)$ (Lemma~\ref{thm:greaterT} as well as the assumption that $\pi_1$ returns before $\pi_2$ starts). Thus, $\pi_2 \prec \pi_1$ is false, because otherwise, by Definition~\ref{def:prec} of the order $\prec$, it holds that: (i) $T (\pi_1) > T(\pi_2)$, which contradicts the above, or (ii) $\pi_2$ is a write (Definition~\ref{def:prec} of the order $\prec$). Moreover, with respect to case (ii), if $\pi_2$ is a write, $T(\pi_2) > T(\pi_1)$ (Lemma~\ref{thm:greaterT} as well as the assumption that $\pi_1$ returns before $\pi_2$ starts). Thus, $\pi_1 \prec \pi_2$ is true (case (i), Definition~\ref{def:prec} of the order $\prec$). Moreover, $\pi_2 \prec \pi_1$ is false ($\prec$ is a partial order). 

\item When $\pi_2$ is a write $T(\pi_2) > T(\pi_1)$ (Lemma~\ref{thm:greaterT} as well as the assumption that $\pi_1$ returns before $\pi_2$ starts). Thus, $\pi_1 \prec \pi_2$ is true (case (i), Definition~\ref{def:prec} of the order $\prec$). Moreover, $\pi_2 \prec \pi_1$ is false ($\prec$ is a partial order) 
\end{itemize}

\noindent \textbf{Property (2) of Corollary~\ref{thm:prec}.~~} 
Lemma~\ref{thm:neqTag} implies that only case (i) of Definition~\ref{def:prec} holds. This implies Property (2), i.e., either  $\pi_1 \prec \pi_2$ or  $\pi_2 \prec \pi_1$ (but not both) hold. 

\noindent \textbf{Property (3) of Corollary~\ref{thm:prec}.~~} 
We show that every read operation $\pi$ in a legal execution $R_{legalExecution}$ returns a value that a preceding, according to $\prec$, write operation writes. (In the absence of such write operations, the read operation $\pi$ returns $v_0$, which is the default object value, line~\ref{alg:srv:default}).) To that end, we argue that: (i) there is a unique coupling between object version values and tag values and (ii) the read operation $\pi$ returns the value associated with $T(\pi)$. 

\noindent \textbf{(i) Unique coupling between object version values and tag values.~~} 
Recall that the system reaches $R_{legalExecution}$ after the system has performed at least one complete and valid write operation $\pi_{greatFIN} \in R_{completeNonStable}$ (Definition~\ref{def:safeSystemState} and Theorem~\ref{thm:convergence}).
After $\pi_{greatFIN}$, any succeeding write operation $\pi_{furtherWrite}$ in $R_{legalExecution}$ couples uniquely between versions of the data object and write operations in $R$ (Section~\ref{sec:sys}). We know that all written versions are uniquely associated with tag values (Lemma~\ref{thm:neqTag}). We note that even when starting the system in a state that includes no written object values, the servers reply with $(t_0,w_{0,i}, \text{`fin'})$ (line~\ref{alg:srv:default}) and the reader returns the decoding of that value (line~\ref{alg:read:BW}).

\noindent \textbf{(ii) The read operation $\pi$ returns the value associated with $T(\pi)$.~~} 
The complete read operation $\pi_{legitimateRead} \in R_{legalExecution}$ returns a value that is the result of retrieving and inverting the MDS code $\Phi$ using $k$ coded elements (line~\ref{alg:read:BW} and Definition~\ref{def:startEnd}). These $k$ coded elements were obtained at some previous point by applying $\Phi$ to the value associated with $T(\pi)$, where $\pi \in \{\pi_{greatFIN}, \pi_{furtherWrite}\}$ (line~\ref{ln:preWrite}). Therefore, the read operation $\pi$ returns the value associated with $T(\pi)$ due to the correctness of $\Phi$ (Section~\ref{sec:sys}).
\end{proof}

\subsection{Liveness of Algorithm~\ref{alg:cas}}
\label{sec:live}
\begin{definition}[\textbf{Liveness criteria}]
\label{def:liveness}
Suppose that there are no more than $f$ server failures and that $1 \leq k \leq N-2f$. In any fair and legal execution of Algorithm~\ref{alg:cas}, it holds that: (1) every operation terminates, and (2) the server replies to a reader's finalize phase includes at least $k$ (different) coded elements (and thus read operations can decode the retrieved values). 
\end{definition}

\begin{theorem}[\textbf{Liveness}] 
\label{thm:liveness}
The liveness criteria (Definition~\ref{def:liveness}) hold in Algorithm~\ref{alg:cas}'s fair and legal executions.
\end{theorem}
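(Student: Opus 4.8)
The plan is to prove the two liveness properties separately, leveraging the machinery already established. For Property (1) — termination of every operation — I would invoke Lemma~\ref{thm:progression} directly. That lemma already shows that every operation (failed, incomplete, or complete) ends within $\bigO(1)$ asynchronous cycles in any fair execution, and its proof established that each client phase (query, pre-write, finalize, FINALIZE) terminates because the quorum-based communication functionality is correct (Corollary~\ref{thm:basicCorollary}) under the terms of service that Algorithm~\ref{alg:cas} satisfies. The only additional remark needed here is that a \emph{legal} fair execution (in the sense of Definition~\ref{def:safeSystemState}) is a special case to which Lemma~\ref{thm:progression} applies, so termination follows immediately.

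The substance of the theorem is Property (2): that in a fair and legal execution, the quorum of servers replying to a reader's finalize phase (line~\ref{ln:finalizeR}) collectively return at least $k$ distinct coded elements, so that the decode on line~\ref{alg:read:BW} succeeds and the reader does not return $\bot$. First I would fix a complete read operation $\pi$ in $R_{legalExecution}$ and set $t := T(\pi)$, the tag selected during $\pi$'s query (line~\ref{alg:r:SelMax}). The key is to show that the tag $t$ is \emph{backed by genuine pre-write data} at a full quorum. Here I would argue that because we are in $R_{legalExecution}$, the query returned a finalized (or FINALIZED) tag $t$, and by the atomicity analysis (specifically Lemma~\ref{thm:allServersStore} and the fact that a finalized tag originates from a completed write's finalize phase), the write operation carrying tag $t$ had earlier completed its pre-write phase on some quorum $Q_{\text{pw}}$, installing a genuine coded element $(t, w_j, \cdot)$ with $w_j \neq \bot$ at every server in $Q_{\text{pw}}$. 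Crucially, by Lemma~\ref{thm:noRemoval} these servers never discard that maximal record, so the coded elements persist through $\pi$'s finalize phase.

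Next I would combine the quorum-intersection guarantee with the persistence of coded elements. The reader's finalize phase (line~\ref{ln:finalizeR}) collects replies from some quorum $Q$, and servers holding a record $(t, w_i, \cdot)$ with $w_i \neq \bot$ reply with the coded element itself (line~\ref{alg:srv:finReply}). By Lemma~\ref{thm:quor}(1), $|Q \cap Q_{\text{pw}}| \geq k + 2e \geq k$, so at least $k$ of the servers in the reader's quorum hold genuine coded elements associated with $t$ and return them. This gives at least $k$ distinct coded elements, meeting the threshold $k_{threshold}$ and allowing $\Phi^{-1}$ to decode. I would also note that the liveness criterion (Definition~\ref{def:liveness}) assumes at most $f$ server failures with $1 \le k \le N - 2f$, so by Lemma~\ref{thm:quor}(2) a non-failed quorum always exists, guaranteeing $\pi$ actually receives a quorum of replies.

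The main obstacle I anticipate is rigorously establishing that the finalized tag $t$ retrieved by the reader's query is always \emph{backed} by a pre-write quorum of genuine coded elements in $R_{legalExecution}$. In an arbitrary starting state a finalized tag could exist at a server with \emph{no} corresponding pre-write data anywhere — precisely the pathology flagged in the discussion before Algorithm~\ref{alg:cas}. The resolution relies entirely on restricting attention to the legal suffix: after the complete-and-valid write $\pi_{complete\&valid}$ (Definition~\ref{def:safeSystemState}, Theorem~\ref{thm:convergence}), every finalized tag visible to a reader either equals or exceeds $T(\pi_{complete\&valid})$ and therefore originates from a \emph{genuinely completed} write whose pre-write phase deposited real coded elements on a quorum. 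I would make this dependence on the recovery guarantee explicit, tying Property~(2) back to Theorem~\ref{thm:convergence} and Lemma~\ref{thm:allServersStore}, since this is the step where self-stabilization (rather than the classical CAS argument) does the real work.
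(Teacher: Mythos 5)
Your proposal is correct and follows essentially the same route as the paper's proof: Property (1) is discharged by Lemma~\ref{thm:progression}, and Property (2) is established by showing that the reader's query tag is backed by a write whose pre-write phase completed on a quorum $Q_{\text{pw}}$ (the paper's Claim on query tags, resting on Definition~\ref{def:safeSystemState} and Lemma~\ref{thm:noRemoval}), so that the intersection bound of Lemma~\ref{thm:quor}(1) yields at least $k$ genuine coded elements in the finalize replies. Your explicit handling of the stale-finalized-tag pathology via the legal-suffix restriction matches the role that Claim~\ref{thm:yesh} plays in the paper.
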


\begin{proof}
Note that Lemma~\ref{thm:progression} implies Part (1) of the liveness criteria (Definition~\ref{def:liveness}). Therefore, we focus on proving that read operations can decode the retrieved values (Part (2) of Definition~\ref{def:liveness}). I.e., at least $k$ servers include coded elements in their replies to a reader's finalize phase. The proof is implied from claims~\ref{thm:yesh} and~\ref{thm:besh} and the fact that Algorithm~\ref{alg:cas}'s servers do not remove records from their storage. 

\begin{claim}
\label{thm:yesh}
The query of a read operation $\pi_r$ in $R_{legalExecution}$ always returns a tag $t$ that is either $t_0$ or refers to the tag of a write operation $\pi_w$ that had a complete pre-write phase in $R$.
\end{claim}

\begin{claimproof}
Definition~\ref{def:safeSystemState} implies that $\pi_w$ always occurred before the legal execution (or the servers only consider the default tuple with the tag $t_0$). Lemma~\ref{thm:noRemoval} says that the servers do not remove their maximal records. Upon the arrival of $\pi_r$'s query message, the server reply with $\pi_w$'s tag (line~\ref{alg:srv:queryInitIf}), which is $t$. 
\end{claimproof}

\begin{claim}
\label{thm:besh}
As long the no server removes the record $(t,\bullet)$ from its storage, if it had any such record in $c_{start}(\pi_r)$, at least $k$ servers include coded elements in their replies to $\pi_r$'s finalize phase.  
\end{claim}

\begin{claimproof}
Let $Q_{\text{pw}}(t)$ denote the set of nodes that their servers acknowledge the pre-write phase of the write operation $\pi_w$ for which $t=T(\pi_w)$. Let $c_i$ be the system state that occurs immediately before the server at $p_i$ acknowledges $\pi_r$'s finalize message (line~\ref{alg:srv:finReply}). We show that the storage $S_{p_i}$ of every node $p_i \in Q_{\text{pw}}(t)\cap Q_{\text{fw}}(t)$ includes a coded element in $c_i$. Since $p_i \in Q_{\text{pw}}(t)$, it holds that $(t, w_{t,i}, \bullet) \in S_{p_i}$ in any system state that follows the step in which $p_i$ received $\pi_w$'s pre-write message (line~\ref{alg:srv:pre-write} and by the assumption of this claim that no server removes the record $(t,\bullet)$ from its storage). Note that $p_i \in Q_{\text{fw}}(t)$ indeed acknowledges the reader's finalize message, because of Claim~\ref{thm:yesh} and the fact that $c_i$ appears in $R$ after $p_i$ acknowledges that pre-write message. Therefore, $p_i$ includes in its reply the coded element $w_{t,i}$. By the correctness of the quorum-based communication during legal executions (Theorem~\ref{thm:convergence}, Claim~\ref{thm:completeOperations}), $\pi_r$ receives at least $k$ coded elements in its finalize phase, because $|Q_{\text{pw}}(t) \cap Q_{\text{fw}}(t)| \geq k$ (Part(1) of Lemma~\ref{thm:quor}).
\end{claimproof}
\end{proof}

\section{A Bounded Set of Relevant Server Records}
\label{sec:bound}
Algorithm~\ref{alg:cas}'s servers store the entire set of records that have arrived from the clients and the gossip service. This is in addition to the records that originated from the system starting state. To the end of bounding the number records that each server needs to store, we consider the relevance of a record with respect to the way that the servers use it after any point of time, i.e., a record is irrelevant in system state $c \in R_{legalExecution}$ if the server at $p_i \in \sP$ never use it after $c$ for responding to a client request. Theorem~\ref{thm:implicitHereIsExplicitThereMaybe} and Corollary~\ref{thm:noRemoved} point out a set that includes all relevant records and bound it by $N + \delta + 3$ during executions $R_{legalExecution}$ in which there are no more than $\delta$ write operations that occur concurrently with any read operation.

\begin{definition}[Tag visibility] 
\label{def:visRes}
Let $R$ be an execution of Algorithm~\ref{alg:cas}, $\pi_r$ be a read operation and $\pi_w$ be a write operation in $R$. Denote by $c_{visibility}(\pi_r)=c_{end}(\pi_r)$, which refers to $\pi_r$'s ending system state. We say that $\pi_r$ has visibility in $R$ starting from $c_{visibility}(\pi_r)$. Moreover, denote by $c_{visibility}(\pi_w) \in R$ either:
(i) the first system state, if such a state exists, for which a quorum $Q \in \sQ$ of non-failing nodes that their servers store the finalized record $(T(\pi_w), \bullet, d) \in S_{p_j \in Q} : d \in \sD \setminus \{ \text{`pre'} \}$, or
(ii) when case (i) does not hold in $R$ (because operation $\pi_w$ fails in $R$), $c_{visibility}(\pi_w)=c_{end}(\pi_w)$, which refers to $\pi_w$'s ending system state. When case (i) holds for $\pi_w$, we say that $\pi_w$ has visibility in $R$ starting from $c_{visibility}(\pi_w)$. Otherwise, $\pi_w$'s visibility is not guaranteed in $R$.
\end{definition}

\begin{definition}[Explicit and implicit FINALIZED tags and records] 
\label{def:explicitImplicitPhases}
Suppose that the server at node $p_i$ stores a finalized (or FINALIZED) record $r=(t,\bullet,d) \in S_{p_i}: t \in \sT \land d \in \sD \setminus \{\text{`pre'} \}$ in system state $c \in R$. In this case, we say that tag $t$ and record $r$ are explicitly finalized (with respect to the server) at $p_i$. Moreover, we say that tag $t$ and record $r$ are explicitly FINALIZED at $p_i$ when $(t,\bullet,\text{`FIN'}) \in S_{p_i}: t \in \sT$ in system state $c \in R$. 

Suppose that the server at node $p_i$ stores two records $r_1,r_2 \in S_{p_i}: \exists_{p_j \in \sP} \forall_{ x \in \{1,2\}}  t_x=(z_x,j) \land r_x=(t_x,\bullet)$ in system state $c \in R$ that their tags, $t_x=(z_1,j)$, and respectively, $t_x=(z_2,j)$, are associated with the client at $p_j$. Moreover, suppose that $t_1<t_2$. In this case, we say that tag $t_1$ and record $r_1$ are implicitly FINALIZED (in with respect to the server) at $p_i$. We denote $S_{p_i}$'s explicit FINALIZED records in $c$ by $S_{i,\textnormal{expFIN}} := \{ (t,\bullet,\text{`FIN'}) \in S_{p_i} \}$ and $S_{p_i}$'s implicitly FINALIZED records in $c$ by $S_{i,\textnormal{impFIN}} := \{ ((z_1,j), \bullet) \in S_{p_i} : \exists ((z_2,j), \bullet) \in S_{p_i} \land z_1 < z_2 \}$.
\end{definition}

Claim~\ref{thm:implicitHereIsExplicitThere} shows that an implicitly FINALIZED record at a server implies explicitly FINALIZED records at a server quorum.

\begin{claim}
\label{thm:implicitHereIsExplicitThere}
Suppose that $R_{legalExecution}$ includes a write operation $\pi$, such that in system state $c \in R_{legalExecution}$ it holds that $T(\pi)$ is implicitly FINALIZED at $p_i$. (1) $\pi$ is visible in $c$. (2) There is a quorum $Q \in \sQ$ of nodes that their servers store the FINALIZED record $(T(\pi), \bullet, \text{`FIN'}) \in S_{p_j \in Q}$. Suppose that in $c$ it holds that $T(\pi)$ is explicitly FINALIZED at $p_i$, i.e., $(T(\pi), \bullet, \text{`FIN'}) \in S_{p_i}$. (3) $\pi$ is visible in $c$.
\end{claim}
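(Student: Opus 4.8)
The plan is to reduce all three parts to one structural fact: whenever $T(\pi)$ carries the $\text{`FIN'}$ label somewhere (explicitly at a single server, or implicitly via a larger same-writer tag), the finalize phase of $\pi$ must already have completed at a full quorum, and Algorithm~\ref{alg:cas}'s servers never delete a record (they only upgrade its phase through $upgradePhase$, line~\ref{alg:srv:upcu}). I would first record this non-removal property (Lemma~\ref{thm:noRemoval}) and the validity property of $R_{legalExecution}$ (Definition~\ref{def:invalid}, Theorem~\ref{thm:convergence}), since both are used repeatedly. Observe that (1) is immediate once (2) holds: a quorum storing $(T(\pi),\bullet,\text{`FIN'})$ stores in particular finalized records (as $\text{`FIN'}\in\sD\setminus\{\text{`pre'}\}$), which is exactly Definition~\ref{def:visRes}(i); hence I would prove (2) and read off (1).

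For (2) I would start from the implicit-$\text{`FIN'}$ hypothesis: $S_{p_i}$ contains both $T(\pi)=(z_1,j)$ and some $(z_2,j)$ with $z_1<z_2$ (Definition~\ref{def:explicitImplicitPhases}). The key move is to argue that $(z_2,j)$ is genuine rather than stale starting-state data: since $\pi$ is an operation of $R_{legalExecution}$ its tag is valid, so $T(\pi)$ already exceeds every tag present at the start of the execution, whence $(z_2,j)>T(\pi)$ does too. Thus $(z_2,j)$ was produced \emph{during} the legal execution by writer $p_j$ in some write $\pi'$ with $T(\pi')=(z_2,j)$. Because $p_j$ issues operations sequentially and a writer's successive write tags strictly increase (Lemma~\ref{thm:greaterT}, Lemma~\ref{thm:neqTag}), the smaller-tagged write $\pi$ must have returned before $\pi'$ was invoked; in particular $\pi$ is complete and ran its finalize (line~\ref{ln:finalize}) and FINALIZE (line~\ref{ln:FINALIZE}) phases to quorum completion. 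The FINALIZE phase drives a quorum of servers, through $updatePhase(T(\pi),\bot,\text{`FIN'})$ (line~\ref{alg:srv:updatePhasetFin} with $upgradePhase(\text{`fin'},\text{`FIN'})$), to hold $(T(\pi),\bullet,\text{`FIN'})$, and by non-removal they still hold it in $c$; this gives (2), and (1) follows as noted.

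For (3) I would trace the origin of the explicit record $(T(\pi),\bullet,\text{`FIN'})\in S_{p_i}$. The $\text{`FIN'}$ label is produced only at line~\ref{alg:srv:updatePhasetFin} (handling $p_j$'s FINALIZE message, which $p_j$ sends only after its finalize phase has completed at a quorum) or at line~\ref{alg:srv:FINMax}, whose clauses either detect a quorum reporting $fin[k]=T(\pi)$ (so that quorum stores finalized records for $T(\pi)$) or merely propagate another server's $\text{`FIN'}$ label. Choosing the first server in the finite prefix up to $c$ to label $T(\pi)$ as $\text{`FIN'}$ eliminates the pure-propagation case, leaving the two producers that each exhibit a quorum storing finalized $T(\pi)$-records; validity again guarantees these messages and reports are not stale. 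Non-removal then keeps that quorum's finalized records in place through $c$, so $\pi$ is visible by Definition~\ref{def:visRes}(i).

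The main obstacle I anticipate is pinning down (2) at the granularity of a \emph{full} quorum rather than just the $(k+2e)$-sized intersection guaranteed by Lemma~\ref{thm:quor}, and maintaining it across the (possibly unfair) suffix. Two pseudocode subtleties must be handled with care: the FINALIZE handler (line~\ref{alg:srv:updatePhasetFin}) does not upgrade a record still in phase $\text{`pre'}$ (since $upgradePhase(\text{`pre'},\text{`FIN'})$ returns the old phase), so I must argue that the servers reached by $\pi$'s FINALIZE were already finalized — using that $\pi$'s finalize phase preceded it — or else invoke the gossip-based promotion of line~\ref{alg:srv:FINMax}; and the persistence argument must reconcile server fail-and-reset (which clears storage) with the non-removal property. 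Isolating exactly which servers hold $\text{`fin'}$ versus $\text{`pre'}$ at the instant the FINALIZE phase returns, and showing this survives to $c$, is the step I would nail down first.
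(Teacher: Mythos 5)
Your proposal follows essentially the same route as the paper's proof: both deduce from the coexistence of two same-writer tags, the sequential one-client-per-node discipline, and the fact that failing clients do not resume, that $\pi$ must be a complete write whose finalize and FINALIZE phases reached a quorum before $c$, and then obtain parts (1)--(3) from that plus the servers' non-removal of records (Lemma~\ref{thm:noRemoval}). The $upgradePhase(\text{`pre'},\text{`FIN'})$ subtlety you flag at the end is a genuine observation about the pseudo-code, but the paper's own proof does not treat it either -- it asserts directly that the completed FINALIZE phase yields the quorum of $\text{`FIN'}$ records -- so your argument is no less complete than the one in the paper.
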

\begin{claimproof}
We start the proof by showing that $\pi$ includes the entire execution of the FINALIZED phase before $R_{legalExecution}$ reaches the system state $c$. We do that by demonstrating that $\pi$ is not an incomplete operation nor a failed one. Recall that Claim~\ref{thm:completeOperations} implies that $R_{legalExecution}$ does not include (write) operations that are incomplete and thus $\pi$ is not an incomplete operation. This claim assumes that in system state $c$, it holds that $T(\pi)$ is implicitly FINALIZED at $p_i$. This means that, in $c$, the server at node $p_i$ stores two records $r_1,r_2 \in S_{p_i}: r_x=(t_x,\bullet), t_x \in \sT \land t_x=(z_x,j) \land p_j \in \sP$, such that $T(\pi)=t_1<t_2$ (Definition~\ref{def:explicitImplicitPhases}). By the assumption that each node $p_j \in \sP$ lets its client to run just one procedure at a time, by the assumption that failing clients do not resume (Section~\ref{sec:benignFailures}), and by the writer code (lines~\ref{alg:w:SelMax} and~\ref{alg:w:return}), we have that $\pi$ is not a failed operation. Therefore, $\pi$ is a complete write operation that ends before $c$. In particular, $\pi$'s finalized and FINALIZED phases are done before $R$ reaches $c$ and thus parts (1) and (2) are correct (by Definition~\ref{def:visRes} and the correct operation of the quorum-based communications Corollary~\ref{thm:basicCorollary}). To show that part (3) also holds, we note that during $R_{legalExecution}$, any write operation $\pi$, which is after $\pi_{complete\&valid}$, updates to the record $(T(\pi), \bullet, \text{`FIN'}) \in S_{p_i}$ occurs only after the completion of the finalized phase (line~\ref{ln:finalize} and~\ref{ln:FINALIZE}). Thus, visibility is implied (Definition~\ref{def:visRes}).    
\end{claimproof}

\begin{definition}[The done system state $c_{done}(\pi)$] 
Let $R$ be an execution of Algorithm~\ref{alg:cas} and $\pi$ be a client (read or write) operation in $R$. Let $a_{k}(\pi) \in R$ be the step in which a server (at node $p_i \in \sP$) adds or updates the record $(T(\pi), \bullet)$ to its server storage, $S_{p_i}$ for the $k$-th time. This update could be due to the $\pi$ operation itself, another read operation $\pi_r \neq \pi$ for which $T(\pi_r)=T(\pi)$, or the arrival of a gossip message $(\bullet, T(\pi), \bullet)$. Denote $c_{0}(\pi):=c_{start}(\pi)$, $c_{k}(\pi) $ is the system state that immediately follows $a_{i,k}(\pi)$ and $c_{last}(\pi)=c_{\ell}(\pi)$, where $\ell$ is the maximum value of $\ell$ for which $\exists c_{\ell}(\pi) \in R$. We denote by $c_{done}(\pi) \in \{c_{last}(\pi), c_{end}(\pi)\}$ the system state that appears latest in $R$ between $c_{last}(\pi)$ and $c_{end}(\pi)$.
\end{definition}

\begin{definition}[Concurrent operations] 
\label{def:concurrent}
Let $\pi_1$ and $\pi_2$ be two operations in $R$. Suppose that $\nexists x,y \in \{1,2\}: x\neq y$, such that $c_{done}(\pi_x)$ appears before $c_{start}(\pi_y)$ in $R$. In this case, we say that $\pi_1$ and $\pi_2$ appear to be concurrent in $R$.
\end{definition}

We note that one way to explain Definition~\ref{def:concurrent}, is to say the following. When $c_{done}(\pi_x)$ appears before $c_{start}(\pi_y)$ in $R$, we can say that $R$ orders $\pi_x$ before $\pi_y$ sequentially. Moreover, $\pi_1$ and $\pi_2$ appears to be concurrent in $R$ if, and only if, $R$ neither orders $\pi_x$ before $\pi_y$ nor $\pi_y$ before $\pi_x$.

\begin{definition}[$\delta$-bounded concurrent write operations during any read in $R$] 
%
%
\label{def:deltaEpsilon}
Suppose that for every read operation $\pi_r$ in $R$, it holds that there are at most $\delta$ write operations in $R$ that are concurrent with $\pi_r$. In this case, we say that the number of concurrent write operations that occur in $R$ during any read operation is bounded by $\delta$ in $R$. 
\end{definition}

\begin{definition}[Record relevance] 
\label{def:recRelevance}
Let $r=(t,\bullet) \in S_{p_i}: t \in \sT$ be a record that the server at node $p_i \in \sP$ stores in system state $c \in R$. Suppose that there is a step $a_i$ that appears in $R$ after $c$ and in which the server at node $p_i$ responses to a (1) writer query request (line~\ref{alg:srv:queryInitElse}), (2) reader query request (line~\ref{alg:srv:queryInitIf}) or (3) reader finalized request (lines~\ref{alg:srv:fin}) with a message that includes tag $t' \leq t$. In this case, we say that tag $t$ and record $r$ are of relevance to $c$ with respect to a (1) writer query request, (2) reader query request, and respectively, (3) reader finalized request.
\end{definition}

\begin{definition}[The $T_{i,writeQuery}$, $T_{i,readQuery}$ and $T_{i,readFinalized}$ sets] 
\label{def:recRelevanceT}
Let $p_i \in \sP$ be a node with a server. Let $t_{i,FINALIZED}=t_{i,1},t_{i,2},\ldots :(t_{i,k},\bullet) \in (S_{i,\textnormal{expFIN}} \cup S_{i,\textnormal{impFIN}})$ (Definition~\ref{def:recRelevance}) be a (possibly empty) sequence tags in a descending order that are explicitly or implicitly FINALIZED at $p_i$ in system state $c$. Let $maxT_{i,FINALIZED} = \max \{t_{i,x} \in t_{i,FINALIZED} \}$ and $T_{i,FINALIZED}= \{t_{i,x} \in t_{i,FINALIZED}: x \leq \delta+1\}$. Let $T_{i,notYetFIN}= \{ t : (t,\bullet) \in S_{p_i} \setminus (S_{i,\textnormal{expFIN}} \cup S_{i,\textnormal{impFIN}})\}$ be a (possibly empty) set of tags that are at $p_i$'s  record storage and are not in $
T_{i,FINALIZED}$ in system state $c$. Let $T_{i,writeQuery}= \{\max \{ t: (t, \bullet) \in S_{p_i} \}\}$, $T_{i,readQuery}= \{\max \{ t: (t, \bullet,d) \in S_{p_i} : d \in (\sD \setminus \{ \text{`pre'} \}) \}\}$ and $T_{i,readFinalized}=T_{i,notYetFIN}\cup T_{i,FINALIZED}$ in system state $c$.
\end{definition}

\begin{lemma}
\label{thm:cannotBeMoreThanDelta}
Suppose that during any read operation in $R_{legalExecution}$ there are at most $\delta$ concurrent write operations. Suppose that $R_{legalExecution}$ includes a read operation $\pi_r$ and a step $a_i \in R_{legalExecution}$ in which the server at $p_i$ responds with $(T(\pi_r),\bullet)$ to $\pi_r$'s finalize  request (line~\ref{alg:srv:fin}), such that $R_{legalExecution}$ includes a write operation $\pi_w$ for which $T(\pi_w)=T(\pi_r)$. (If there is more than just one such operation, we select the latest one that appears before $\pi_r$ and note that by Theorem~\ref{thm:atomicity} these operations cannot be concurrent.)
It holds that $T(\pi_r) \in T_{i,readFinalized}$ in any system state $c \in R_{legalExecution}$ that is between $c_{i,in} \in R_{legalExecution}$ and $c_{i,out} \in R_{legalExecution}$, where $c_{i,in}$ is $R_{legalExecution}$'s first system state for which $(T(\pi_r),\bullet) \in S_{p_i}$ holds and $c_{i,out}$ is the system state that immediately precedes $a_i$. 
\end{lemma}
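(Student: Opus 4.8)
The plan is to fix an arbitrary system state $c$ in the interval $[c_{i,in}, c_{i,out}]$ and prove $T(\pi_r) \in T_{i,readFinalized}$ by a case analysis on whether $T(\pi_r)$ is FINALIZED (explicitly or implicitly) at $p_i$ in $c$. First I would record the standing fact that Algorithm~\ref{alg:cas}'s servers never delete records, so $(T(\pi_r),\bullet)\in S_{p_i}$ holds in every state of the interval: the record is present at $c_{i,in}$ by definition of $c_{i,in}$ and therefore persists through $c_{i,out}$. Consequently $T(\pi_r)$ is a tag of a record in $S_{p_i}$ at $c$, and by Definition~\ref{def:recRelevanceT} the two cases are exhaustive. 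In the easy case, if $T(\pi_r)$ is not FINALIZED at $p_i$ in $c$, then $(T(\pi_r),\bullet)\in S_{p_i}\setminus(S_{i,\textnormal{expFIN}}\cup S_{i,\textnormal{impFIN}})$, so $T(\pi_r)\in T_{i,notYetFIN}\subseteq T_{i,readFinalized}$ and we are done.

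The hard case is where $T(\pi_r)$ is FINALIZED; here I argue by contradiction, assuming $T(\pi_r)\notin T_{i,FINALIZED}$. Since $T_{i,FINALIZED}$ retains only the $\delta+1$ largest FINALIZED tags, there must exist $\delta+1$ FINALIZED tags $t_1>\cdots>t_{\delta+1}>T(\pi_r)$ at $p_i$ in $c$. To each $t_j$ I would associate a distinct complete write operation $\pi_j$ with $T(\pi_j)=t_j$: the write exists because $R_{legalExecution}$ contains no incomplete operations (Claim~\ref{thm:completeOperations}) and every finalized record in $R_{legalExecution}$ traces back to a write that executed its finalize phase, distinctness follows from the uniqueness of write tags (Lemma~\ref{thm:neqTag}), and Claim~\ref{thm:implicitHereIsExplicitThere} certifies that $\pi_j$ is complete and visible in $c$.

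The crux is establishing that each $\pi_j$ is concurrent with $\pi_r$ in the sense of Definition~\ref{def:concurrent}, which I would do by ruling out both sequential orderings. For one direction, were $c_{end}(\pi_j)$ to appear before $c_{start}(\pi_r)$, Lemma~\ref{thm:greaterT} would force $T(\pi_r)\geq T(\pi_j)=t_j$, contradicting $t_j>T(\pi_r)$; hence $c_{done}(\pi_j)$, which appears no earlier than $c_{end}(\pi_j)$, does not precede $c_{start}(\pi_r)$. For the other direction I would exploit the location of $c$: because $a_i$ lies inside $\pi_r$'s finalize phase, $c$ appears no later than $c_{i,out}$, which precedes $a_i$ and hence precedes $c_{end}(\pi_r)$ and $c_{done}(\pi_r)$; and since $t_j$ is already FINALIZED at $p_i$ in $c$, the write $\pi_j$ must have started before $c$, so $c_{start}(\pi_j)$ precedes $c_{done}(\pi_r)$, i.e.\ $c_{done}(\pi_r)$ does not precede $c_{start}(\pi_j)$. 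Together these two facts give concurrency, so $\pi_1,\ldots,\pi_{\delta+1}$ are $\delta+1$ distinct write operations concurrent with the read $\pi_r$, contradicting the $\delta$-bound of Definition~\ref{def:deltaEpsilon}. Therefore $T(\pi_r)\in T_{i,FINALIZED}\subseteq T_{i,readFinalized}$, and since $c$ was arbitrary in the interval the lemma follows.

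I expect the concurrency step to be the main obstacle, in particular the careful bookkeeping needed to place $c$ strictly before $c_{end}(\pi_r)$ yet strictly after each $\pi_j$'s start; this demands pinning down the relative order of $c_{i,in}$, $c$, $c_{i,out}$, $a_i$, and the $c_{start}/c_{done}$ markers of the operations involved, and verifying that $a_i$ really does fall within $\pi_r$'s finalize phase (so that $c$ indeed precedes $\pi_r$'s completion) while the FINALIZED status of each $t_j$ at $c$ genuinely forces $\pi_j$ to have begun beforehand.
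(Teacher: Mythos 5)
Your proof is correct in its overall strategy but takes a genuinely different route from the paper's. The paper anchors the argument at $c_{start}(\pi_r)$: it first shows via visibility arguments (claims~\ref{thm:theMaxVisable} and~\ref{thm:theMaxVisableI}) that every FINALIZED tag at $p_i$ in $c_{start}(\pi_r)$ is at most $T(\pi_r)$, then bounds by $\delta$ the number of records that can be \emph{added} between $c_{start}(\pi_r)$ and $c_{i,out}$ (Claim~\ref{thm:lemmaDelta} and Claim~\ref{thm:lemmaDeltaSequence}), and finally tracks in Claim~\ref{thm:lemmaEnd} how $T(\pi_r)$ migrates between $T_{i,notYetFIN}$ and the top of the sequence $t_{i,FINALIZED}$, with a case split on whether $c_{i,in}$ precedes $c_{start}(\pi_r)$. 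You instead work at an arbitrary state $c$ of the window and derive a contradiction by exhibiting $\delta+1$ writes concurrent with $\pi_r$, using Lemma~\ref{thm:greaterT} to rule out one sequential ordering and the position of $c$ to rule out the other. Your route is shorter and avoids the paper's migration bookkeeping, at the price of having to argue concurrency directly.

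Two steps need repair. First, you assert that every FINALIZED tag $t_j > T(\pi_r)$ at $p_i$ ``traces back to a write''; Claim~\ref{thm:completeOperations} alone does not give this, since an explicitly FINALIZED record could be stale information surviving from the arbitrary starting state, with no associated operation in $R$ to be concurrent with $\pi_r$. You must add that all stale tags are smaller than $T(\pi_r)$ --- this follows because $T(\pi_r) \geq T(\pi_{complete\&valid})$ (Lemma~\ref{thm:greaterT}) and $\pi_{complete\&valid}$'s tag exceeds every stale tag (Claim~\ref{thm:noStaleWrite}) --- so that a FINALIZED tag above $T(\pi_r)$ is necessarily the tag of a genuine write operation. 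Second, your placement of $c$ before $c_{done}(\pi_r)$ goes through the claim that $a_i$ precedes $c_{end}(\pi_r)$, which is false in general: the client waits only for a quorum of responses, so the server at $p_i$ may send its finalize response after $\pi_r$ has already returned. The conclusion you need still holds, but via $c_{last}(\pi_r)$ rather than $c_{end}(\pi_r)$: step $a_i$ itself updates the record $(T(\pi_r),\bullet)$ (line~\ref{alg:srv:updatePhasetFin}), so the state following $a_i$ is no later than $c_{last}(\pi_r)$ and hence no later than $c_{done}(\pi_r)$, which places $c \leq c_{i,out}$ strictly before $c_{done}(\pi_r)$. This is exactly the situation the definition of $c_{done}$ is designed to capture, so the fix is local.
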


\begin{proof}
\begin{claim}
\label{thm:noMoreThanN}
Let $c \in R_{legalExecution}$ be a system state. $|S_{i,notYetFIN}| \leq N$ holds in $c$ (Definition~\ref{def:recRelevanceT}).
\end{claim}
\begin{claimproof}
By the definition of $S_{i,\textnormal{impFIN}} := \{ ((z_1,j), \bullet) \in S_{p_i} : \exists ((z_2,j), \bullet) \in S_{p_i} \land z_1 < z_2 \}$ (Definition~\ref{def:explicitImplicitPhases}), it holds that $S_{i,notYetFIN}:=S \setminus (S_{i,\textnormal{expFIN}} \cup S_{i,\textnormal{impFIN}})$ does not include any record $((z_1,j), \bullet)$ for which $((z_2,j), \bullet) \in S_{i,notYetFIN}$ and $z_1<z_2$. Therefore, every client can have at most one tag that appear in a record that belongs to $S_{i,notYetFIN}$. The proof of this claim is implied by the upper bound on the number of clients, which is $N$ (Section~\ref{ss:emulatedObject}).
\end{claimproof}

\begin{claim}
\label{thm:theMaxVisable}
Let $t_{start}$ be the maximum visible tag in $c_{start}(\pi_r) \in R$. It holds that $t_{start} \leq T(\pi_r)$.
\end{claim}

\begin{claimproof}
By the assumption that $t_{start}$ is the maximum visible tag in $c_{start}(\pi_r)$, it holds that there is a quorum $Q \in \sQ$ of nodes that their servers store the finalized record $(T(\pi), \bullet, d) \in S_{p_j \in Q} : d \in \sD \setminus \{ \text{`pre'} \}$ (Definition~\ref{def:visRes}), such that $\pi$ is a write operation in $R$ and $T(\pi)=t_{start}$. Let $\hat{Q}(\pi_r)$ be the set of nodes that $\pi_r$'s client receives their query responses (Definition~\ref{def:Tpi}). Note the existence of node $p_j \in \hat{Q}(\pi) \cap Q$ (Lemma~\ref{thm:quor}) that its server responds to $\pi_r$'s query with a tag that is at least $t_{start}$ (line~\ref{alg:srv:finReply}). The rest of the proof is implied by line~\ref{alg:r:SelMax} and Part (2) of Corollary~\ref{thm:basicCorollary}.
\end{claimproof}

\begin{claim}
\label{thm:theMaxVisableI}
Let $t_{i,start}:=\max T_{i,FINALIZED}$ be $T_{i,FINALIZED}$'s the maximum tag in $c_{start}(\pi_r) \in R$. It holds that $t_{i,start} \leq T(\pi_r)$.
\end{claim}

\begin{claimproof}
Part (1) of Claim~\ref{thm:implicitHereIsExplicitThere} implies that tag $t_{i,start}$ is visible in $c_{start}(\pi_r)$. Let $t_{start}$ be the maximal tag that has visibility in $c_{start}(\pi_r) \in R$, i.e., $t_{i,start} \leq t_{start}$. By Claim~\ref{thm:theMaxVisable}, we have that $t_{i,start} \leq t_{start} \leq T(\pi_r)$, which implies this claim.
\end{claimproof}

\begin{claim}
\label{thm:inToNotYet}
Let $t_{visibility}(c) \in \sT$ be the maximum explicitly visible tag in system state $c \in R_{legalExecution}$.
Suppose that $T(\pi_r) \geq t_{visibility}(c)$ in system state $c \in R_{legalExecution}$ that is between $c_{j,in} \in R_{legalExecution}$ and $c_{j,out} \in R_{legalExecution}$, where $p_j \in \sP$. In $c$, it holds that: (1) $T(\pi_r) \geq maxT_{i,FINALIZED}$, and (2) $(T(\pi_r),\bullet) \in S_{p_j}$ implies $T(\pi_r) \in \{t\in T_{j,notYetFIN} : t\geq maxT_{j,FINALIZED} \} \cup \{maxT_{j,FINALIZED}\}$.
\end{claim}
\begin{claimproof}
\noindent \textbf{Part (1).~~} 
Recall that $maxT_{j,FINALIZED} = \max \{t_{j,x} \in t_{j,FINALIZED} \}$ (Definition~\ref{def:recRelevanceT}), where $t_{j,FINALIZED}=t_{j,1},t_{j,2},\ldots :(t_{j,k},\bullet) \in (S_{j,\textnormal{impFIN}} \cup S_{j,\textnormal{expFIN}})$ (Definition~\ref{def:recRelevance}).
Let us consider any tag that is either in $S_{j,\textnormal{impFIN}}$ or $S_{j,\textnormal{expFIN}}$, i.e., any tag that is FINALIZED either (i) implicitly or (ii) explicitly. That is, we look at the cases in which (i) $(T(\pi_r),\bullet) \in S_{p_j}: p_k \in \sP \land T(\pi_r) = (z_1,k) \land \exists ((z_2,k),\bullet) \in S_{p_j}: z_1<z_2$ in $c$, or (ii) $(T(\pi_r),\bullet,  \text{`FIN'}) \in S_{p_j}$ in $c$. Parts (1), and respectively, (3) of Claim~\ref{thm:implicitHereIsExplicitThere} imply that $T(\pi_r)$ has visibility in $c$. This claim assumption says that $T(\pi_r) \geq t_{visibility}(c)$. Therefore, $T(\pi_r) \geq maxT_{j,FINALIZED}$ (Claim~\ref{thm:theMaxVisableI}). 

\noindent \textbf{Part (2).~~} 
By definitions~\ref{def:explicitImplicitPhases} and~\ref{def:recRelevanceT}, $(T(\pi_r),\bullet) \in S_{p_j}$ implies that either $(T(\pi_r),\bullet) \in (S_{j,\textnormal{impFIN}} \cup S_{j,\textnormal{expFIN}})$ or $(T(\pi_r),\bullet) \in T_{j,notYetFIN}$. Part (1) of this proof consider the former case and implies that $T(\pi_r) \in \{maxT_{j,FINALIZED}\}$ in $c$. The latter refers to the cases that Part (1) of this proof do not consider. That is, $(T(\pi_r),\bullet,d) \in S_{p_j}: d \in \sD \setminus \{ \text{`FIN'}\} \land p_k \in \sP \land T(\pi_r) = (z_1,k) \land \nexists ((z_2,k),\bullet) \in S_{p_j}: z_1<z_2$, which implies $(T(\pi_r),\bullet,d) \in T_{j,notYetFIN}$. 
\end{claimproof}
\begin{claim}
\label{thm:inToNotYetBetter}
Let $t_{visibility}(c) \in \sT$ be the maximum explicitly visible tag in system state $c \in R_{legalExecution}$ and $c_1,c_2,\ldots$ be a sequence of all system states in $R_{legalExecution}$ (in the order that they appear in $R_{legalExecution}$). 
(1) $(t_{visibility}(c),\bullet) \in S_{p_j}$ implies $t_{visibility}(c) \in \{t\in T_{j,notYetFIN} : t\geq maxT_{j,FINALIZED} \} \cup \{maxT_{j,FINALIZED}\}$ in $c$.
(2) The sequence $t_{visibility}(c_1), t_{visibility}(c_2), \ldots$ is monotonically increasing, i.e., $t_{visibility}(c_k) \leq t_{visibility}(c_{k+1})$. 
\end{claim}
\begin{claimproof}
\noindent \textbf{Part (1).~~} 
This is implied by Part (1) of Claim~\ref{thm:inToNotYet} and the definition of $T_{j,notYetFIN}$.
\noindent \textbf{Part (2).~~} 
According to Algorithm~\ref{alg:cas}, the server at $p_j \in \sP$ does not remove the records in $\{t\in T_{j,notYetFIN} : t\geq maxT_{j,FINALIZED} \} \cup \{maxT_{j,FINALIZED}\}$. The $\max$ function properties imply this part.
\end{claimproof}

\begin{claim}
\label{thm:lemmaDelta}
Let $c_1,c_2 \in R_{legalExecution}$ be two system states that appear between $c_{j,in} \in R_{legalExecution}$ and $c_{j,out} \in R_{legalExecution}$, where $p_j \in \sP$. It holds that $|(S_1 \cup S_2) \setminus (S_1 \cap S_2)| \leq \delta$, where $S_x\in\{1,2\}=S_{p_i}$ in $c_x$.
\end{claim}

\begin{claimproof}
By this lemma assumption, any read operation $\pi$ in $R_{legalExecution}$ has at most $\delta$ concurrent write operations (Definition~\ref{def:deltaEpsilon}). Recall that $R_{legalExecution}$ does not include incomplete operations (Claim~\ref{thm:completeOperations}). Therefore, an update or an addition of the record $(t,\bullet)$ to $S_{p_i}$ (between $c_{i,in}$ and $c_{i,out}$) implies that there is write operation $\pi_{w'}$ that is concurrent (Definition~\ref{def:concurrent}) with the read operation $\pi$. Thus, this claim. (Note that the same holds for this lemma's read operation, $\pi_r$.)
\end{claimproof}

\begin{claim}
\label{thm:lemmaDeltaSequence}
Let $t_{i,FINALIZED,c'}=t_{i,FINALIZED}$ denote the value of the sequence $t_{i,FINALIZED}$ in $c' \in R$. 
Let $c \in R_{legalExecution}$ be a system state that is between $c_{i,in}$ and $c_{i,out}$. 
The sequence $t_{i,FINALIZED,c}$ includes at most $\delta$ tags that are greater than $T(\pi_r)$, which are not in $t_{i,FINALIZED,c_{start}(\pi_r)}$.
\end{claim}

\begin{claimproof}
Claim~\ref{thm:theMaxVisableI} implies that $T(\pi_r)$ is greater than any element in $t_{i,FINALIZED,c_{start}(\pi_r)}$. From Claim~\ref{thm:lemmaDelta}, we get that, between ${c_{start}(\pi_r)}$ and ${c_{i,out}}$, Algorithm~\ref{alg:cas} may add to the sequence $t_{i,FINALIZED}$ at most $\delta$ records. Hence, the claim.
\end{claimproof}

\begin{claim}
\label{thm:lemmaEnd}
Let $c \in R_{legalExecution}$ be a system state that is between $c_{i,in}$ and $c_{i,out}$. It holds that $T(\pi_r) \in T_{i,readFinalized}$ (Definition~\ref{def:recRelevanceT}) in $c$.
\end{claim}

\begin{claimproof}
\paragraph{Suppose that $\mathbf{c_{i,in}}$ appears before $\mathbf{c_{start}(\pi_r)}$ in $\mathbf{R_{legalExecution}}$.} 

We show that the conditions of Claim~\ref{thm:inToNotYet} hold in $c$ and thus $T(\pi_r) \in T_{i,readFinalized}$. Specifically, we show that $T(\pi_r) \geq t_{visibility}(c)$ in $c$ and that $(T(\pi_r),\bullet) \in S_{p_i}$ in $c$, because then we can complete the proof by using $T_{i,readFinalized}=T_{i,notYetFIN}\cup T_{i,FINALIZED}$ (Definition~\ref{def:recRelevanceT}).

\subparagraph{Let us look at the case in which $c$ appears between $\mathit{c_{i,in}}$ and $\mathbf{c_{start}(\pi_r)}$ in $\mathbf{R_{legalExecution}}$ (including both system states $\mathbf{c_{i,in}}$ and $\mathbf{c_{start}(\pi_r)}$ as possible values of $\mathbf{c}$).} 

Recall that $T(\pi_r) \geq t_{visibility}(c_{start}(\pi_r))$ (Claim~\ref{thm:theMaxVisable}) and that $t_{visibility}(c_{start}(\pi_r)) \geq t_{visibility}(c)$ (Part (2) of Claim~\ref{thm:inToNotYetBetter} and this case assumption that  $c$ appears no later than ${c_{start}(\pi_r)}$ in $R$). Thus, $T(\pi_r) \geq t_{visibility}(c)$. 

To the end of showing that $(T(\pi_r),\bullet) \in S_{p_i}$ in $c$, we start by assuming that $c=c_{i,in}$ and then consider every system state $c$ that appears between ${c_{i,in}}$ and ${c_{start}(\pi_r)}$ (including the latter state). Recall that $c_{i,in}$ is $R_{legalExecution}$'s first system state for which $(T(\pi_r),\bullet) \in S_{p_i}$ holds (cf. this lemma's statement). Therefore, $(T(\pi_r),\bullet) \in S_{p_j}$ implies $T(\pi_r) \in \{t\in T_{j,notYetFIN} : t\geq maxT_{j,FINALIZED} \} \cup \{maxT_{j,FINALIZED}\} \subseteq T_{i,readFinalized}$ in $c=c_{i,in}$ (Part (2) of Claim~\ref{thm:inToNotYet}). 

Now, let us continue by assuming that $c$ is the state in $R_{legalExecution}$ that immediately follows $c_{i,in}$ (and yet $c$ does not appear in $R_{legalExecution}$ after $c_{start}(\pi_r)$). By the same arguments as above, it holds that $T(\pi_r) \geq t_{visibility}(c)$. Algorithm~\ref{alg:cas} does not include a line in which a server removes a record from its storage. Thus, we only need to show that $T(\pi_r)$ does not leave the set $T_{i,readFinalized}$ in the transition from $c_{i,in}$ to $c$. We show more than that, i.e., $T(\pi_r)$ does not leave the set $\{t\in T_{j,notYetFIN} : t\geq maxT_{j,FINALIZED} \} \cup \{maxT_{j,FINALIZED}\} \subseteq T_{i,readFinalized}$ in the transition from $c_{i,in}$ to $c$. 

We note that it cannot be the case that in $c_{i,in}$ we have $T(\pi_r) \in \{maxT_{j,FINALIZED}\}$ and $T(\pi_r) \notin \{maxT_{j,FINALIZED}\}$ in $c$. The reason is that $T(\pi_r) \in \{maxT_{j,FINALIZED}\}$ in $c_{i,in}$ says that $T(\pi_r)$ is (either explicitly or implicitly) FINALIZED in $c_{i,in}$ and that status filed in the record cannot change to a status that is not (either explicitly or implicitly) FINALIZED (Algorithm~\ref{alg:cas} and the way that Definition~\ref{def:recRelevanceT} constructs $t_{i,FINALIZED}$). 

Suppose that in $c_{i,in}$ it holds that $T(\pi_r) \in \{t\in T_{j,notYetFIN} : t\geq maxT_{j,FINALIZED} \}$ and in $c$ it holds that $T(\pi_r) \notin \{t\in T_{j,notYetFIN} : t\geq maxT_{j,FINALIZED} \}$. This implies that tag $T(\pi_r)$ becomes (either explicitly or implicitly) FINALIZED during that transition (Definition~\ref{def:recRelevanceT}), That is, $T(\pi_r) \in \{maxT_{j,FINALIZED}\}$ in $c$ and the proof is done.    

The rest of the proof of this part is followed by repeating the same arguments for every two consecrative system states $c'$ and $c''$ that are between $c_{i,in}$ and $c_{start}(\pi_r)$.   

\subparagraph{Let us look at the case in which $\mathbf{c}$ appears between $\mathbf{c_{start}(\pi_r)}$ and $\mathbf{c_{i,out}}$ in $\mathbf{R_{legalExecution}}$ (including both system states $\mathbf{c_{start}(\pi_r)}$ and $\mathbf{c_{i,out}}$ as possible values of $\mathbf{c}$).} 

From the proof of the previous case, when $c = c_{start}(\pi_r)$, it holds that $T(\pi_r) \in \{t\in T_{j,notYetFIN} : t\geq maxT_{j,FINALIZED} \} \cup \{maxT_{j,FINALIZED}\} \subseteq T_{i,readFinalized}$. Recall also from the previous case that if Algorithm~\ref{alg:cas} causes $T(\pi_r)$ to leave the set $\{t\in T_{j,notYetFIN} : t\geq maxT_{j,FINALIZED} \}$, then $T(\pi_r)$ becomes a member of the sequence $t_{i,FINALIZED}$ (Algorithm~\ref{alg:cas} and the way that Definition~\ref{def:recRelevanceT} constructs $t_{i,FINALIZED}$). From Claim~\ref{thm:lemmaDeltaSequence}, we get that Algorithm~\ref{alg:cas} may move $T(\pi_r)$ down the sequence $t_{i,FINALIZED}$, by including other  (either explicitly or implicitly) FINALIZED records with higher tags, at most $\delta$ times between ${c_{start}(\pi_r)}$ and ${c_{i,out}}$ but still include $T(\pi_r)$ in $T_{i,FINALIZED}$. This implies $T(\pi_r) \in T_{i,readFinalized}$ (Definition~\ref{def:recRelevanceT}) for the case in which $c$ appears between $\mathit{c_{start}(\pi_r)}$ and $\mathit{c_{i,out}}$ in $\mathit{R_{legalExecution}}$ as well as the case in which ${c_{i,in}}$ appears before ${c_{start}(\pi_r)}$ in ${R_{legalExecution}}$.    

\paragraph{Suppose that $\mathbf{c_{i,in}}$ appears after $\mathbf{c_{start}(\pi_r)}$ in $\mathbf{R_{legalExecution}}$.} 

By this case assumption, it holds that the tag $T(\pi_r)$ does not appear in the sequence $t_{i,FINALIZED}$ in $c_{start}(\pi_r)$. 
From Claim~\ref{thm:lemmaDeltaSequence}, we get that Algorithm~\ref{alg:cas} may include in the sequence $t_{i,FINALIZED}$ at most $\delta$ (either explicitly or implicitly) FINALIZED records with higher tags than $T(\pi_r)$ during the period that is between ${c_{start}(\pi_r)}$ and ${c_{i,out}}$. 
During this period, the record $(T(\pi_r), \bullet) \in S_{p_i}$ does appear in the storage of the server at $p_i$. 
By the arguments above, it appears either in $\{t\in T_{j,notYetFIN} : t\geq maxT_{j,FINALIZED} \}$ or in the top $\delta+1$ tags of $t_{i,FINALIZED}$. Therefore, $T(\pi_r) \in T_{i,FINALIZED}$ (Definition~\ref{def:recRelevanceT}) and we can complete the proof by using $T_{i,readFinalized}=T_{i,notYetFIN}\cup T_{i,FINALIZED}$ (Definition~\ref{def:recRelevanceT}).
\end{claimproof}
\end{proof}

\begin{theorem}[Only $T_{i,writeQuery}$, $T_{i,readQuery}$ and $T_{i,readFinalized}$ are relevant and they are bounded]
\label{thm:implicitHereIsExplicitThereMaybe}
Let $r=(t,\bullet) \in S_{p_i}: t \in \sT$ be a record that the server at node $p_i \in \sP$ stores in system state $c \in  R_{legalExecution}$. Suppose that tag $t$ is of relevance to $c$ with respect to a (1) writer query request, (2) reader query request or (3) reader finalized request. The server at $p_i$ stores the record $r=(t, w_j,\bullet) \in S_{p_j \in Q}$ and $r\in relevant(S_i)$, such that (1) $r \in T_{i,writeQuery}$, (2) $r \in T_{i,readQuery}$, and respectively, (3) $r \in T_{i,readFinalized}$ in $c$. Moreover, $|relevant(S_i)| \leq N +\delta+3$, where $relevant(S_i) := T_{i,writeQuery} \cup T_{i,readQuery} \cup T_{i,readFinalized}$.
\end{theorem}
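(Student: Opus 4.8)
The plan is to separate the statement into a \emph{membership} part (every relevant record lies in $relevant(S_i)$, and in fact in the set named for its request type) and a \emph{cardinality} part ($|relevant(S_i)|\le N+\delta+3$), treating the three request types independently and then summing the component bounds. The key structural facts I would lean on are already available: servers never remove a record once stored (a direct reading of Algorithm~\ref{alg:cas}, since $updatePhase$ only upgrades phases), the maximal records are kept (Lemma~\ref{thm:noRemoval}), and each $maxPhase(\cdot)$ sequence is non-decreasing (Corollary~\ref{thm:nonDecr}).

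For the two query cases I would fix a relevant $r=(t,\bullet)\in S_{p_i}$, let $a_i$ be the future step whose response carries a tag $t'\le t$, and write $c_a$ for the state just before $a_i$. In case (1) the response is $maxPhase(\sD)$ evaluated at $c_a$. Since $r$ is still in $S_{p_i}$ at $c_a$ with the same tag $t$, we get $t'=maxPhase_{c_a}(\sD)\ge t$, hence $t'=t$; monotonicity together with $maxPhase_{c}(\sD)\ge t$ then forces $maxPhase_{c}(\sD)=t$, i.e.\ $t\in T_{i,writeQuery}$. Case (2) is identical, using $maxPhase(\sD\setminus\{\text{`pre'}\})$, whenever $r$ is finalized or FINALIZED at $c$, and yields $t\in T_{i,readQuery}$. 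The only wrinkle is a record that is still merely pre-write at $c$: its tag is not counted in the finalized maximum, so it need not reach $T_{i,readQuery}$; but then either $t\in T_{i,notYetFIN}$, or $t$ is dominated by a larger same-client tag and so $t\in S_{i,\textnormal{impFIN}}$, so in both cases $t\in T_{i,readFinalized}\subseteq relevant(S_i)$. I would record this so the global conclusion $r\in relevant(S_i)$ survives even for such records.

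Case (3) is the substantive one, and here I would simply invoke Lemma~\ref{thm:cannotBeMoreThanDelta}: whenever $p_i$ answers a reader finalize request with $(T(\pi_r),\bullet)$, the tag $T(\pi_r)$ belongs to $T_{i,readFinalized}$ throughout the window $[c_{i,in},c_{i,out}]$ in which $p_i$ holds the matching record, which is exactly the interval over which such a record can be relevant to a reader finalize request. The coded-element/quorum clause ``$r=(t,w_j,\bullet)\in S_{p_j\in Q}$'' I would extract from the visibility machinery: when $t$ is explicitly or implicitly FINALIZED, Claim~\ref{thm:implicitHereIsExplicitThere} together with Lemma~\ref{thm:quor} places the finalized record, and hence the coded elements, at a full quorum $Q$; when $t$ is only pre-written, the pre-write quorum serves the same role.

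Finally, the bound is a counting step: $T_{i,writeQuery}$ and $T_{i,readQuery}$ are singletons, $|T_{i,notYetFIN}|\le N$ because at most one not-yet-FINALIZED tag can survive per client and there are at most $N$ clients (Claim~\ref{thm:noMoreThanN}), and $|T_{i,FINALIZED}|\le \delta+1$ straight from its definition as the top $\delta+1$ FINALIZED tags; summing gives $|relevant(S_i)|\le 1+1+(N+\delta+1)=N+\delta+3$. I expect the main obstacle to be bookkeeping rather than a new idea: precisely matching the relevance quantifier (\emph{some} future response with tag $t'\le t$, evaluated at a later state where tags have grown but no record was removed) to membership in sets that are defined at the current state $c$, and, in case (3), checking that the relevant window coincides with the $[c_{i,in},c_{i,out}]$ interval of Lemma~\ref{thm:cannotBeMoreThanDelta}. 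All the genuinely hard work---the $\delta$-tracking and the per-client bound---is already discharged in that lemma and in Claim~\ref{thm:noMoreThanN}, so the theorem itself is largely an assembly of the preceding results.
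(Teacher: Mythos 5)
Your proposal is correct and follows essentially the same route as the paper: the same three-way case split, the same appeal to Lemma~\ref{thm:cannotBeMoreThanDelta} for the reader-finalize case, and the same count $1+1+(N+\delta+1)$ via Claim~\ref{thm:noMoreThanN} and the top-$(\delta+1)$ definition of $T_{i,FINALIZED}$. If anything you are more careful than the paper's own proof, which handles the two query cases in one terse sentence each and silently skips both the monotonicity argument you spell out and the pre-write wrinkle you patch via $T_{i,readFinalized}$.
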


\begin{proof}~~

\textbf{Showing that $\mathbf{r \in T_{i,writeQuery}}$.~~}
By this lemma assumption and Definition~\ref{def:recRelevance} it implies that $T_{i,writeQuery}= \{\max \{ t: (t, \bullet) \in S_{p_i} \}\}$ in $c$ (Definition~\ref{def:recRelevanceT}).
The server at node $p_i$ replies to a reader by returning the maximal tag $t$ in any record stored in $S_{p_i}$ (line~\ref{alg:srv:queryInitIf}). Therefore, the server at $p_i$ and $T_{i,writeQuery}$ store in system state $c$ any record $(t',\bullet)$ that is relevant with respect to a writer query request.

\textbf{Showing that $\mathbf{r \in T_{i,readQuery}}$.~~}
By this lemma assumption and Definition~\ref{def:recRelevance} it implies that $T_{i,readQuery}= \{ \max \{ t: (t, \bullet,d) \in S_{p_i} : d \in (\sD \setminus \{ \text{`pre'} \}) \}\}$ in $c$ (Definition~\ref{def:recRelevanceT}).
The server at node $p_i$ replies to a reader by returning the tag $t$ in any finalized or FINALIZED record stored in $S_{p_i}$ (line~\ref{alg:srv:queryInitIf}). Therefore, the server at $p_i$  and $T_{i,readQuery}$ store in system state $c$ any record $(t',\bullet)$ that is relevant with respect to a reader query request.

\textbf{Showing that $\mathbf{r \in T_{i,readFinalized}}$.~~}
The proof of this case is implied by Lemma~\ref{thm:cannotBeMoreThanDelta}.

\textbf{The bound $\mathbf{|relevant(S_i)| \leq N +\delta+3}$.~~} This bound comes from the Definition~\ref{def:recRelevanceT}, which implies $|T_{i,writeQuery}|\leq 1$ and $|T_{i,readQuery}|\leq 1$ as well as the definition of $T_{i,readFinalized}$ and Claim~\ref{thm:lemmaDelta}, which implies that during executions that have at most $\delta$ concurrent write operations, it holds that $|T_{i,readFinalized}|\leq N +\delta+1$.
\end{proof}

Corollary~\ref{thm:noRemoved} is implied directly from the definition of the set $relevant(S_i)$ (Theorem~\ref{thm:implicitHereIsExplicitThereMaybe}), Theorem~\ref{thm:implicitHereIsExplicitThereMaybe} and line~\ref{alg:srv:uponGossip} to~\ref{alg:srv:FINMaxGossipHelper} of Algorithm~\ref{alg:cas}.

\begin{corollary}
\label{thm:noRemoved}
Let $p_i \in \sP$ be a node that hosts a server. The set $relevant(S_i)$ (Theorem~\ref{thm:implicitHereIsExplicitThereMaybe}) always includes the records $(t_{write},\bullet)$, $(t_{read},\bullet)$ and $(t_{anchor},\bullet)$, which $p_i$ gossips their tags in the triple $(t_{write},t_{read},t_{anchor})$ (line~\ref{alg:srv:FINMaxGossipHelper}).
\end{corollary}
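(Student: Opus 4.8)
The plan is to match each of the three tags that line~\ref{alg:srv:FINMaxGossipHelper} gossips---namely the components of $tagTuple()$ (line~\ref{alg:srv:gossip}), $t_{write}=maxPhase(\sD)$, $t_{read}=maxPhase(\sD\setminus\{\text{`pre'}\})$ and $t_{anchor}=maxPhase(\{\text{`FIN'}\})$---against the three syntactic subsets whose union is $relevant(S_i)$ (Definition~\ref{def:recRelevanceT}), and then read off the conclusion from the fact that the bounded server retains exactly the records whose tags lie in $relevant(S_i)$ (Theorem~\ref{thm:implicitHereIsExplicitThereMaybe}). So the whole corollary reduces to three membership checks against Definition~\ref{def:recRelevanceT}.

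First I would dispatch the two easy components directly from the definition of $maxPhase$ (line~\ref{alg:srv:maxPhase}). We have $t_{write}=\max\{t:(t,\bullet)\in S_{p_i}\}$, which is precisely the single element of $T_{i,writeQuery}$, and $t_{read}=\max\{t:(t,\bullet,d)\in S_{p_i} \land d\in\sD\setminus\{\text{`pre'}\}\}$, which is precisely the single element of $T_{i,readQuery}$. Hence the records $(t_{write},\bullet)$ and $(t_{read},\bullet)$ are witnessed in $S_{p_i}$ and lie in $T_{i,writeQuery}$ and $T_{i,readQuery}$ respectively, so both are in $relevant(S_i)$ with no further work.

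The substantive step is the anchor tag. Since $t_{anchor}=maxPhase(\{\text{`FIN'}\})$ selects a record carrying the label $\text{`FIN'}$, the record $(t_{anchor},\bullet,\text{`FIN'})\in S_{p_i}$ is explicitly FINALIZED (Definition~\ref{def:explicitImplicitPhases}); therefore $t_{anchor}\notin T_{i,notYetFIN}$, and to place $(t_{anchor},\bullet)$ in $T_{i,readFinalized}=T_{i,notYetFIN}\cup T_{i,FINALIZED}$ it suffices to show $t_{anchor}\in T_{i,FINALIZED}$, i.e.\ that $t_{anchor}$ sits among the top $\delta+1$ entries of the descending FINALIZED sequence $t_{i,FINALIZED}$. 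I would observe that every FINALIZED tag strictly larger than $t_{anchor}$ must be implicitly (but not explicitly) FINALIZED, because $t_{anchor}$ is by construction the maximal explicitly FINALIZED tag. By Claim~\ref{thm:implicitHereIsExplicitThere} the write carrying $t_{anchor}$ is already visible, so, reusing the counting behind Lemma~\ref{thm:cannotBeMoreThanDelta} (Claim~\ref{thm:lemmaDelta} and Claim~\ref{thm:lemmaDeltaSequence}), the $\delta$-bounded concurrency of $R_{legalExecution}$ (Definition~\ref{def:deltaEpsilon}) caps the number of such strictly-larger FINALIZED tags by $\delta$. Consequently $t_{anchor}$ occupies one of the top $\delta+1$ positions, so $t_{anchor}\in T_{i,FINALIZED}\subseteq T_{i,readFinalized}\subseteq relevant(S_i)$.

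Combining the three checks yields that all gossiped records belong to $relevant(S_i)$, and since Theorem~\ref{thm:implicitHereIsExplicitThereMaybe} shows $relevant(S_i)$ is exactly the set the bounded server keeps, these records are never discarded, which is what the corollary asserts. I expect the anchor membership to be the only genuine obstacle: unlike $t_{write}$ and $t_{read}$, whose defining sets are literal singletons matching $maxPhase$, the anchor tag need \emph{not} equal $maxT_{i,FINALIZED}$ (a server may hold a higher tag that is only implicitly FINALIZED, e.g.\ a $\text{`fin'}$ record dominated by a later $\text{`pre'}$ record of the same client), so the argument cannot be purely definitional and must invoke the $\delta$-concurrency bound to control how far $t_{anchor}$ can lie below the head of $t_{i,FINALIZED}$.
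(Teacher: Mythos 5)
Your handling of $t_{write}$ and $t_{read}$ coincides with what the paper does: the paper's entire proof of this corollary is a one-sentence remark that it ``is implied directly'' from the definition of $relevant(S_i)$, Theorem~\ref{thm:implicitHereIsExplicitThereMaybe}, and lines~\ref{alg:srv:uponGossip}--\ref{alg:srv:FINMaxGossipHelper}, and for those two tags the literal identification with the singletons $T_{i,writeQuery}$ and $T_{i,readQuery}$ is indeed all there is. You depart from the paper in refusing to treat the anchor tag as equally immediate, and you are right to: $t_{anchor}=maxPhase(\{\text{`FIN'}\})$ can sit strictly below the head of the sequence $t_{i,FINALIZED}$ whenever some implicitly FINALIZED tag (a $\text{`fin'}$ record dominated by a later record of the same client) exceeds the largest explicitly $\text{`FIN'}$ tag, so its membership in $T_{i,FINALIZED}$ is a positional claim, not a definitional one. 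This is a real subtlety that the paper's one-liner glosses over.

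The gap is in how you close that case. Claims~\ref{thm:lemmaDelta} and~\ref{thm:lemmaDeltaSequence} count storage changes only inside the window between $c_{i,in}$ and $c_{i,out}$ attached to one \emph{specific} read operation, and Definition~\ref{def:deltaEpsilon} bounds only writes that are concurrent with some read; in a stretch of $R_{legalExecution}$ that contains no read operation, these statements impose no bound at all on how many implicitly FINALIZED tags can accumulate above $t_{anchor}$. Concretely, a single client may complete writes $(1,j),\ldots,(m,j)$ with $p_i$ in the FINALIZE quorum only for the first and in the finalize quorum for the rest; until the gossip of lines~\ref{alg:srv:FINMax}--\ref{alg:srv:FINMaxUpdt} catches up, $S_{p_i}$ holds $(1,j)$ at $\text{`FIN'}$ underneath $m-2$ dominated $\text{`fin'}$ records, and nothing in the $\delta$-concurrency hypothesis caps $m$. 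So ``reusing the counting behind Lemma~\ref{thm:cannotBeMoreThanDelta}'' does not transfer to an arbitrary system state as the word ``always'' in the corollary requires. The anchor case needs a different argument --- for instance, that the upgrade in line~\ref{alg:srv:FINMaxUpdt} keeps $maxPhase(\{\text{`FIN'}\})$ within $\delta+1$ positions of $maxT_{i,FINALIZED}$ in every reachable state, or an explicit stipulation that the record attaining $maxPhase(\{\text{`FIN'}\})$ is retained alongside $relevant(S_i)$. As written, the one component of your proof that is not purely definitional does not go through.
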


\section{A Bounded Variation on Algorithm~\ref{alg:cas}}
\label{sec:bounded}

\begin{figure}[t!]

\fbox{
\begin{minipage}{0.975\linewidth}

\begin{enumerate}

\item\label{ln:stop} Once the server at $p_i \in \sP$ stores in $S_{p_i}$ a record with a tag that is at least $t_{top}=(MAXINT,minID)$, where $minID:=\max \{k : p_k \in \sP\}$, the server at $p_i$ suspends all responses to new write operations (at their query phase) while allowing the completion of the existing ones (until all servers agree on the highest finalized tag, cf. item~\ref{ln:restart}), where $MAXINT \in \mathbb{Z}^+$, say, $MAXINT = 2^{64}-1$. To that end, the server program tests whether $(maxPhase(\sD)\leq t_{top})$ and respond to the writer only when the tests passes. That is, we modify line~\ref{alg:srv:queryInitElse} to ``{else if} $(maxPhase(\sD)\leq t_{top})$ {then} $\text{reply}(j, (maxPhase(\sD), \bot, \text{`qry'}))$.''

\item\label{ln:restart} While the invocation of new write operations is suspended by the modified line~\ref{alg:srv:queryInitElse} (item~\ref{ln:stop}), the gossip procedure keeps on propagating the maximal tags, which is $tagTuple()$'s returned value in lines~\ref{alg:srv:uponGossip} and~\ref{alg:srv:FINMaxGossipHelper} (as we show in Claim~\ref{thm:convergenceMaxTagE}). Eventually, the servers at all nodes share the same triple of maximal tags. At that point in time, this Algorithm~\ref{alg:cas}'s variation uses the global reset procedure $\mathit{globalReset}(t)$ (Section~\ref{s:bld}) that (i) removes any record for all the server storages other than the ones with the tag $t=maxPhase(\sD \setminus \{\text{`pre'}\})$, (ii) replaces the tag $t=(z,k)$ in that record with the tag $(1,k)$ and (iii) stops forever all on-going client operations. To that end, between lines~\ref{alg:srv:FINMaxUpdt} and~\ref{alg:srv:FINMaxGossipHelper}, the server program also includes the following if-statement: ``if $(maxPhase(\sD)\geq t_{top}) \land (\forall p_k \in \sP : gossip[k] = tagTuple()) \land (tagTuple() = (t,t',t') \land t\geq t')$ then $\mathit{globalReset}(maxPhase(\sD \setminus \{\text{`pre'}\}))$ else $S \gets relevant(S)$,'' where $relevant(S)$ is taken from Theorem~\ref{thm:implicitHereIsExplicitThereMaybe}.

\end{enumerate}

\end{minipage}
}

\caption{A bounded extension of Algorithm~\ref{alg:cas}.}
\label{fig:boundCAS}
\end{figure}

We present a variation of Algorithm~\ref{alg:cas} that has bounded message and state size. Figure~\ref{fig:boundCAS} adds a couple of lines to the code of Algorithm~\ref{alg:cas} and uses the external building block $\mathit{globalReset}()$ (Section~\ref{s:bld}). Theorem~\ref{thm:bounded} demonstrates the correctness of the proposed variation. Note that the proof assumes the execution to be fair eventually in the manner of self-stabilizing systems in the presence of seldom fairness (Section~\ref{sec:sys}). Namely, once the storage of at least one server includes at least one record with a tag $t$ that is at least $t_{top}$ (Figure~\ref{fig:boundCAS}), we require the system execution to eventually be fair until all nodes return from the call to $\mathit{globalReset}()$. This requirement is indeed seldom, because such fair executions are needed only once in every $\bigO(z_{\max})$ write operations and during the recovery from rare transient faults  (Theorem~\ref{thm:atomicity}). After the recovery period and during the periods in which no server stores tag $t\geq t_{top}$, the execution is not required to be fair.

\begin{definition}[Legitimate overflows]
\label{def:wellStarted}
We say that system state $c$ is overflow-free when every tag $t < t_{top}$ in $c$ is smaller than the one that would trigger an overflow event. We say that execution $R$ has a legitimate overflow event if $R$'s starting system state $c$ is (i) both overflow-free and reset-free (Section~\ref{ss:BBBreq}) as well as (ii) the first step that immediately follows $c$ includes the start (the first sent request to the server) of a pre-write phase that has the tag $t \geq t_{top}$. Let $R''$ be a suffix of $R=R'\circ R''$ that (a) includes a starting system state in which any $p_i \in \sP$ (that hosts a server) stores a record $(t,\bullet) \in S_{p_i}$ with tag $t \geq t_{top}$ and (b) $R'$ is the shortest matching prefix of $R''$ in $R$. In this case, we say that $R''$ is an execution with a legitimate overflow record. (Note that $R''$ may have system states, including the starting one, with tags $t' \geq t_{top}$, such that $t\neq t'$.)    
\end{definition}

\begin{lemma}[Eventual recovery of Algorithm~\ref{alg:cas}'s variation in Figure~\ref{fig:boundCAS}]
\label{thm:boundedRecovery}
Let $R$ be a fair execution of the bounded variation of Algorithm~\ref{alg:cas} (Figure~\ref{fig:boundCAS}). Suppose that in $R$'s starting system state, $c$, it holds that there is a node $p_i \in \sP$ (that hosts a server) stores a record $(t,\bullet) \in S_{p_i}$ with tag $t \geq t_{top}$ (but $R$ is not necessarily an execution with a legitimate overflow record). Within $\bigO(\Psi)$ asynchronous cycles, $R$ reaches a system state that is reset- and overflow-free. 
\end{lemma}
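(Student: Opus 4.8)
The plan is to show that the overflow--suspension--reset machinery of Figure~\ref{fig:boundCAS} is guaranteed to fire within $\bigO(1)$ asynchronous cycles, after which the self-stabilizing global reset of Section~\ref{s:bld} finishes the job within its specified bound $\Psi$. I would therefore split the argument into two phases: (I) reaching, within $\bigO(1)$ cycles, a system state in which either a (spontaneous) reset is already underway or every server satisfies the guard that triggers $\mathit{globalReset}()$; and (II) invoking the reset-service specification (Section~\ref{ss:BBBreq}) to conclude that $R$ is done with reset, hence reset- and overflow-free, within $\Psi$ further cycles. Since $\bigO(1)+\Psi=\bigO(\Psi)$, this yields the bound.

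For Phase (I), I first combine the hypothesis that $p_i$ stores a tag $t\geq t_{top}$ in the starting state with Lemma~\ref{thm:noRemoval} (servers never discard their maximal records) to argue that this overflow tag persists. Because $R$ is fair, Lemma~\ref{thm:eventuallyAscendingSent} gives that within $\bigO(1)$ cycles every server's gossip has carried a tag no smaller than $t$ to every other server, so $maxPhase(\sD)\geq t_{top}$ holds at all servers. At this point the amended query handler (item~\ref{ln:stop} of Figure~\ref{fig:boundCAS}, modifying line~\ref{alg:srv:queryInitElse}) prevents any quorum from answering a new writer query, so no new write can complete its query phase and no new tag can be generated; the finitely many writes already past their query phase terminate within $\bigO(1)$ cycles by Lemma~\ref{thm:progression}. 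Consequently the write, read and anchor maxima stop changing, and I appeal to Claim~\ref{thm:convergenceMaxTagE} (referenced in item~\ref{ln:restart}): continued gossip propagates these stabilized maxima until, within $\bigO(1)$ cycles, all servers hold the same $tagTuple()$. Crucially, line~\ref{alg:srv:finMax} installs the global maximal finalized tag $t'$ as a finalized record at every server, so once $fin[k]=t'$ for a quorum (indeed for all of $\sP$), line~\ref{alg:srv:FINMax} upgrades $t'$ to FINALIZED everywhere, yielding $tagTuple()=(t,t',t')$ with $t\geq t'$. Thus every conjunct of the reset guard --- $maxPhase(\sD)\geq t_{top}$, $\forall p_k:\ gossip[k]=tagTuple()$, and $tagTuple()=(t,t',t')\wedge t\geq t'$ --- holds at some server, which therefore calls $\mathit{globalReset}(maxPhase(\sD\setminus\{\text{`pre'}\}))$.

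For Phase (II), I invoke the requirements of Section~\ref{ss:BBBreq}: starting from an arbitrary state, whenever $R$ contains a (spontaneous or explicit) reset it is done with reset within $\Psi$ asynchronous cycles, reaching a state whose channels are free of reset-related messages and all of whose machines are enabled (reset-free). This also absorbs the ``not necessarily legitimate'' hypothesis: if the arbitrary start already carries reset-related messages, it counts as a spontaneous reset and the same $\Psi$ bound applies. It remains to verify overflow-freeness: each invoked $\mathit{localReset}(t'')$ keeps only the single record with tag $t''=(z,k)$ and rewrites it to $(1,k)$, discarding all other records; since $(1,k) < t_{top}$, the resulting state contains no tag $\geq t_{top}$ and is overflow-free in the sense of Definition~\ref{def:wellStarted}. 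Combining the $\bigO(1)$ cycles of Phase (I) with the $\Psi$ cycles of Phase (II) gives the claimed $\bigO(\Psi)$ bound.

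The main obstacle I anticipate is Phase (I), specifically establishing the anchor conjunct $tagTuple()=(t,t',t')$ from a fully arbitrary start: a priori the maximal finalized tag may reside at a single (possibly corrupted) server and need not be FINALIZED at any quorum, so I must argue carefully that the gossip rule of line~\ref{alg:srv:finMax} first replicates it as a finalized record to all servers and that line~\ref{alg:srv:FINMax} then promotes it, while simultaneously showing --- via the suspension of item~\ref{ln:stop} and Lemma~\ref{thm:progression} --- that the maxima cannot keep drifting upward, so that the agreement on a common $tagTuple()$ of Claim~\ref{thm:convergenceMaxTagE} is actually \emph{attained} rather than perpetually chased. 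Handling the coexistence of a possibly pre-existing spontaneous reset with the newly triggered explicit reset (concurrent $\mathit{globalReset}()$ calls are permitted, Section~\ref{s:bld}) is a secondary point that the reset specification already subsumes.
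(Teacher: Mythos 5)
Your proposal is correct and follows essentially the same route as the paper: propagate the overflow tag by gossip (Lemma~\ref{thm:eventuallyAscendingSent}), use the suspension of item~\ref{ln:stop} plus Lemma~\ref{thm:progression} to stop the maxima from drifting, invoke Claim~\ref{thm:convergenceMaxTagE} to fire the reset guard within $\bigO(1)$ cycles, and then apply the $\Psi$-cycle bound of Section~\ref{ss:BBBreq}. The only difference is sequencing: the paper first spends $\Psi$ cycles letting any pre-existing (spontaneous) reset terminate and only then runs the $\bigO(1)$-cycle argument from a reset-free state (re-triggering an explicit reset if the overflow persists), which sidesteps your worry about a concurrent spontaneous reset disabling servers during Phase~(I); the total remains $\bigO(\Psi)$ either way.
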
 

\begin{proof}
Recall that the reset procedure has a termination period within $\Psi$ asynchronous cycles (Section~\ref{ss:BBBreq}).
Thus, within $\Psi$ asynchronous cycles, the system reaches a state $c' \in R$ that is reset-free. Note that if $c'$ is also overflow-free, the proof is done. Therefore, we consider the complementary case and assume that $c$ is reset-free but not necessarily overflow-free, i.e., $(t,\bullet) \in S_{p_i}$ with tag $t \geq t_{top}$. Claim~\ref{thm:convergenceMaxTagE} shows that within $\bigO(1)$ asynchronous cycles, the overflow handling proceeds to the invocation of the reset procedure (item~\ref{ln:restart} of Figure~\ref{fig:boundCAS}), which in turn brings the system to a reset- and overflow-free state within $\Psi$ asynchronous cycles. Therefore, the proof is done, because we showed that within $\bigO(\Psi)$ asynchronous cycles, the system reaches a state in $R$  that is both reset- and overflow-free. 

\begin{claim}
\label{thm:convergenceMaxTagE}
Within $\bigO(1)$ asynchronous cycles, the system reaches a state, $c'$, in which Condition~$(\ref{eq:ready})$ holds, where $t, t', t'' \in \sT: t \leq t' \land t' \geq t''$.
\begin{equation}
\label{eq:ready}
\begin{split}
\exists p_i \in \sP : (maxPhase_i(\sD)=t'\geq t_{top}) ~\land ~\\ (\forall p_k \in \sP : gossip_i[k] = tagTuple_i()) ~\land ~\\ (tagTuple_i() = (t',t'',t''))
\end{split}
\end{equation}
\end{claim}

\begin{claimproof}
Suppose that this claim is false and $R$ includes a prefix $R'$ with more than $\bigO(1)$ asynchronous cycles in which Condition~$(\ref{eq:ready})$ does not hold 
in every $c'' \in R'$. 

\noindent \textbf{We show that $\mathbf{\exists c_{stop} \in R':\forall p_j \in \sP:(t,\bullet) \in S_{p_j}:t \geq t_{top}}$.~~}
Note that, within $\bigO(1)$ asynchronous cycles, the gossip protocol works correctly (Part (3) of Corollary~\ref{thm:basicCorollary}). Moreover, the function $tagTuple()$ returns $(maxPhase(\sD)$, $maxPhase(\sD \setminus \{\text{`pre'}\}), maxPhase(\{ \text{`FIN'}\}))$ (line~\ref{alg:srv:gossip}) and this triple is sent by the gossip service. This claim assumes that $(t,\bullet) \in S_{p_i}$, which implies that within $\bigO(1)$ asynchronous cycles of $R'$, the system reaches a system state $c_{stop} \in R'$ for which $\forall p_j \in \sP:(t,\bullet) \in S_{p_j}:t \geq t_{top}$ holds (Lemma~\ref{thm:eventuallyAscendingSent}).

\noindent \textbf{We show that there is no step that follows immediately after $\mathbf{c_{stop}}$ in which any server responds to a query request of a write operation.~~}
No server responds to a query request due to the fact that $\forall p_j \in \sP:(t,\bullet) \in S_{p_j}$ in ${c_{stop}}$ and item~\ref{ln:stop} of Figure~\ref{fig:boundCAS}.

\noindent \textbf{We show that Condition~$(\ref{eq:ready})$ holds in $\mathbf{c'' \in R'}$.~~}
Every write operation that has started before $c_{stop}$ terminates eventually (Theorem~\ref{thm:liveness} with respect to Part (1) of Definition~\ref{def:liveness}) or they cannot proceed beyond the pre-write phase. (This is because $R$ is a fair execution and each write operation occurs within a constant number of phases and gossip rounds, we note that termination occurs within $\bigO(1)$ asynchronous cycles, because each phase occurs within $\bigO(1)$ asynchronous cycles, as we show in Part (3) of Corollary~\ref{thm:basicCorollary}.) Let $c'''$ be the first system state in which all of these write operations have terminated (or have stopped forever to proceed beyond the pre-write phase). Let $t'=\max_{p_j \in \sP} maxPhase_j(\sD):t'\geq t \geq t_{top}$ and $t''=\max_{p_k \in \sP} maxPhase_k(\sD \setminus \{\text{`pre'}\})$ in $c'''$. Recall that the server at $p_j$ gossips $(t', \bullet)$ and the server at $p_k$ gossips $(\bullet, t'', \bullet)$. Lemma~\ref{thm:eventuallyAscendingSent} implies that within $\bigO(1)$ asynchronous cycles in $R'$, the system reaches a state $c''''\in R'$ in which $\forall p_\ell \in \sP: tagTuple_\ell()=(t',t'',\bullet)$. By line~\ref{alg:srv:FINMax}, we have that Condition~$(\ref{eq:ready})$ holds in $c''''$ and so does this claim, because we have reached a contradiction with the assumption at the beginning of this proof.
\end{claimproof}
\end{proof}

\begin{lemma}
\label{thm:bounded}
Let $R$ be a fair execution of the bounded variation of Algorithm~\ref{alg:cas} (Figure~\ref{fig:boundCAS}) with a legitimate overflow record. (1) Within $\bigO(1)$ asynchronous cycles, the system reaches the first system state $c \in R$ in which it holds that there is a node $p_i \in \sP$ (that hosts a server that) stores a record $(t,\bullet) \in S_{p_i}$ with tag $t \geq t_{top}$. Also, we can write $R=R' \circ R_{sameTagTuple} \circ R''$, such that (2) within a prefix $R'$ of $\bigO(1)$ asynchronous cycles, the system reaches an unbounded suffix, $R_{sameTagTuple} \circ R''$, that has a prefix $R_{sameTagTuple}$ of $\bigO(1)$ asynchronous cycles, such that Condition~$(\ref{eq:ready})$ holds in its starting system state, $c' \in R_{sameTagTuple}$. Moreover, (3) only then at least one node calls $\mathit{globalReset}(t'')$, all nodes participate in that procedure and within $\bigO(\Psi)$ asynchronous cycles they resume, which leads to the end of $R_{sameTagTuple}$. Furthermore, (4) suppose that in $c' \in R_{sameTagTuple}$ it holds that $\exists p_j \in \sP : \{(t''=(z,k),\bullet, d) : d \in \sD \setminus \{\text{`pre'}\} \} \subseteq S_{p_j} $, where $t''$ is the tag value taken from Condition~$(\ref{eq:ready})$. Then, there is system state $c'' \in R_{sameTagTuple}$ that follows $c''$ and in which $\forall p_\ell \in \sP : S_{p_\ell} = \{((1,k),\bullet, \text{`FIN'})\}$. Otherwise, (5) $t''=t_0$ (line~\ref{alg:srv:default}) and $\forall p_\ell \in \sP : S_{p_\ell} = \emptyset$ in $c'$ and $c''$. 
\end{lemma}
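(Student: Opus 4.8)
The plan is to dispatch the five parts in the stated order, reusing the machinery already built for the recovery lemma and the global-reset specification of Section~\ref{ss:BBBreq}. For Part (1), I would argue straight from Definition~\ref{def:wellStarted}: the triggering pre-write carries a tag $t \geq t_{top}$, and by the progression guarantee (Lemma~\ref{thm:progression}) together with the correctness of quorum-based communication (Corollary~\ref{thm:basicCorollary}) this pre-write deposits a record $(t,\bullet,\text{`pre'})$ on at least a quorum of servers within $\bigO(1)$ asynchronous cycles. The first state in which some $p_i$ holds such a record is the state $c$; this fixes the suffix of $R$ to which items~\ref{ln:stop} and~\ref{ln:restart} of Figure~\ref{fig:boundCAS} apply.

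For Parts (2) and (3), I would first note that once $c$ is reached, item~\ref{ln:stop} forces every server holding a tag $\geq t_{top}$ to stop answering the query phase of new write operations, so no fresh write tags can enter the system. The central step is then to invoke Claim~\ref{thm:convergenceMaxTagE} essentially verbatim: within $\bigO(1)$ cycles the system reaches the state $c'$ in which Condition~(\ref{eq:ready}) holds, namely some $p_i$ has $maxPhase(\sD)=t' \geq t_{top}$, every received gossip equals $tagTuple_i()=(t',t'',t'')$, and $t' \geq t''$. This delimits the prefix $R'$ of $\bigO(1)$ cycles and the start $c'$ of $R_{sameTagTuple}$. Since Condition~(\ref{eq:ready}) is exactly the guard of the if-statement added by item~\ref{ln:restart}, $p_i$ calls $\mathit{globalReset}(t'')$ with $t''=maxPhase(\sD\setminus\{\text{`pre'}\})$; the reset specification of Section~\ref{ss:BBBreq} then guarantees that all nodes participate and resume within $\bigO(\Psi)$ asynchronous cycles, and I would take that resumption point as the end of $R_{sameTagTuple}$.

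For Parts (4) and (5), I would analyze the effect of $\mathit{localReset}(t'')$ at each server. The key is to upgrade $p_i$'s local predicate in Condition~(\ref{eq:ready}) to a global statement: because $gossip_i[k]=(t',t'',t'')$ for every $p_k$, each server gossiped $maxPhase_k(\sD\setminus\{\text{`pre'}\})=t''$ and $maxPhase_k(\{\text{`FIN'}\})=t''$, so each held a FINALIZED record $(t'',\bullet,\text{`FIN'})$; by Lemma~\ref{thm:noRemoval} (servers never drop their maximal records) and Corollary~\ref{thm:nonDecr} (these maxima are non-decreasing) this record is still present when the reset fires. Under the Part~(4) assumption that $t''=(z,k)$ is a genuine tag, $\mathit{localReset}(t'')$ discards every record whose tag differs from $t''$ and relabels the survivor to $(1,k)$ with phase $\text{`FIN'}$, so every server resumes with $S_{p_\ell}=\{((1,k),\bullet,\text{`FIN'})\}$, giving the state $c''$. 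In the complementary case the absence of any such $p_j$ forces $t''=t_0$ via the default-triple convention (line~\ref{alg:srv:default}), and since $\mathit{localReset}(t_0)$ keeps no record, every storage is emptied, yielding Part~(5).

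The hard part will be Part~(4): turning the single node's gossip-based predicate into the global invariant ``every server stores the finalized record with tag $t''$ and nothing with a larger finalized tag'' at the exact instant the reset executes. The difficulty is purely the asynchrony between a server updating its finalized tag and that update reaching $p_i$ as a gossip. I expect to close this gap by combining the monotonicity of $maxPhase(\sD\setminus\{\text{`pre'}\})$ (Corollary~\ref{thm:nonDecr}) with the write suspension of item~\ref{ln:stop}: once $c$ is reached no new write tag can be created, so any advance of a finalized tag beyond $t''$ would itself have to be gossiped and would have falsified the equality in Condition~(\ref{eq:ready}), contradicting the choice of $c'$ as the state where that condition first holds.
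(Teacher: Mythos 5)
Your treatment of Parts (1), (2), (3) and (5) matches the paper's route: Part (1) is Lemma~\ref{thm:progression}, Part (2) is Claim~\ref{thm:convergenceMaxTagE}, Part (3) is the reset specification of Section~\ref{ss:BBBreq} (the paper additionally uses the legitimate-overflow-record assumption to argue that the gossip entries $gossip_i[k]=(t',t'',t'')$ in $c'$ must reflect genuine $tagTuple_k()$ computations rather than stale channel contents from the arbitrary starting state --- you make this move implicitly, and you should anchor it to Definition~\ref{def:wellStarted}).

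The genuine gap is in Part (4). The reset service explicitly permits \emph{concurrent} calls to $\mathit{globalReset}()$ (Section~\ref{s:bld}), so the conclusion $\forall p_\ell \in \sP : S_{p_\ell} = \{((1,k),\bullet, \text{`FIN'})\}$ for a \emph{single} $k$ requires showing that every initiator passes the \emph{same} tag $t''$ to $\mathit{globalReset}()$; this is the paper's Claim~\ref{thm:noGlobalIntersect}, and your proposal has no counterpart to it. Your closing argument --- that an advance of a finalized tag beyond $t''$ would falsify Condition~$(\ref{eq:ready})$ ``at $c'$, the state where that condition first holds'' --- only constrains the first initiator's view. It does not exclude the scenario the paper worries about: item~\ref{ln:stop} suspends only the \emph{query} phase of \emph{new} writes while letting in-flight operations complete, so a finalize phase already past its query can still install a finalized tag $t'''>t''$ on some servers after $c'$, and a second node $p_j$ could later see Condition~$(\ref{eq:ready})$ satisfied with $t'''$ and call $\mathit{globalReset}(t''')$. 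The paper closes this by observing that once the first initiator $p_i$ calls $\mathit{globalReset}()$ its server is disabled and stops gossiping, so $gossip_j[i]=(\bullet,t''',t''')$ can never hold and $p_j$'s guard cannot fire with a different tag (with a symmetric argument for a smaller tag). You need this uniqueness argument, or an equivalent one, before the per-server analysis of $\mathit{localReset}(t'')$ yields Part (4).
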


\begin{proof}
\noindent \textbf{Part (1).~~} Lemma~\ref{thm:progression} implies this part of the proof.

\noindent \textbf{Part (2).~~} Claim~\ref{thm:convergenceMaxTagE} implies this part.

\noindent \textbf{Part (3).~~} Note that, by this lemma's assumption that $R$ has a legitimate overflow record, the starting system state of $R$ includes no tag that is greater or equal to $t_{top}$. Moreover, there is a node $p_i \in \sP$ for which it holds that $\forall p_k \in \sP : gossip_i[k] = (t',t'',t'')$ in $c'$, where $t'\geq t_{top} \land t' \geq t''$, because of Part (2) of this proof which implies that Condition~$(\ref{eq:ready})$ holds in $c'$. Therefore, the only way in which $gossip_i[k] = (t',t'',t'')$ can hold in $c'$, is if $tagTuple_k()=(t',t'',t'')$ holds in some system state that appears in $R$ before (and perhaps also after) $c'$ (and then these tags are gossiped from $p_k$ to $p_i$), because these tag values do not appear in $R$'s starting state. Moreover, at least one node calls $\mathit{globalReset}(t'')$ (due to item~\ref{ln:restart} of Figure~\ref{fig:boundCAS} and the fact that Condition~$(\ref{eq:ready})$ holds in $c'$). Therefore, all nodes resume within $\bigO(\Psi)$ asynchronous cycles (Section~\ref{ss:BBBreq}), which leads to the end of $R_{sameTagTuple}$.

\noindent \textbf{Part (4).~~} 
Claim~\ref{thm:noGlobalIntersect} considers the case in which more than one node calls $\mathit{globalReset}(t'')$ (item~\ref{ln:restart} of Figure~\ref{fig:boundCAS}) and implies that all such calls in $R_{sameTagTuple}$ refers to the same FINALIZED tag  $t''=(\bullet,k)$. This part of the proof is implied by the fact that a call to $\mathit{globalReset}(t'')$ indeed replaces $t''$ by $(1,k)$, cf. item~\ref{ln:restart} of Figure~\ref{fig:boundCAS}.

\begin{claim}
\label{thm:noGlobalIntersect}
Suppose that $R_{sameTagTuple}$ includes two steps, $a_i$ and $a_j$, in which $p_i$ and $p_j$ call $\mathit{globalReset}_i(t_i)$, and respectively, $\mathit{globalReset}_j(t_j)$. It is true that $t_i=t_j$. 
\end{claim}
 
\begin{proof}
We prove this claim by assuming that $t_i\neq t_j$ and then demonstrating a contradiction. Suppose, without the loss of generality, that $a_i$ appears in $R_{sameTagTuple}$ before $a_j$. Let us start the proof by assuming that $t_i < t_j$ before considering the complementary case of $t_i > t_j$. We show that neither case is possible and thus this claim is correct.

\noindent \textbf{The case of $\mathbf{t_i < t_j}$.~~} 
By Part (2) of the proof of this lemma, we know that $p_i$ calls $\mathit{globalReset}_i(t_i)$ in step $a_i$ only after $p_j$ has seen that the tag $t_i$ is a FINALIZED record, because Condition~$(\ref{eq:ready})$ must hold with respect to $t_i$ (item~\ref{ln:restart} of Figure~\ref{fig:boundCAS}) and the definition of $tagTuple()$ (line~\ref{alg:srv:gossip}). This is true starting from some system state that appears in $R$ before the steps $a_i$ and $a_j$. 
When $p_i$ takes step $a_i$ and calls $\mathit{globalReset}_i(t_i)$, it is true that $p_i$ has not seen $t_j$ in a finalized (or FINALIZED) record, due to this case assumption that $t_i < t_j$ and the fact that Condition~$(\ref{eq:ready})$ holds with respect to tag $t_i$ in the system state that immediately precedes $a_i$ (item~\ref{ln:restart} of Figure~\ref{fig:boundCAS}). 
Once $p_i$ takes step $a_i$ and calls $\mathit{globalReset}_i(t_i)$, the function $\mathit{globalReset}()$ disables $p_i$'s server and therefore $p_i$'s server does not receive or send in $R_{sameTagTuple}$ gossip messages after $a_i$. Therefore, $p_i$ does not receive the tag $t_j$ (in a finalized or FINALIZED record) in any step that follows $a_i$ in $R_{sameTagTuple}$. Moreover, the fact that $p_i$ does not gossip after $a_i$ implies that $p_j$ cannot receive from $p_i$ a gossip message with $(\bullet,t_j,t_j)$ (lines~\ref{alg:srv:uponGossip} and~\ref{alg:srv:FINMaxGossipHelper}). Thus, $gossip_j[i]=(\bullet,t_j,t_j)$ does not hold in any system state in $R_{sameTagTuple}$ that follows $a_i$. This is in contradiction to the assumption that $p_j$ takes step $a_j \in R_{sameTagTuple}$, because this step requires Condition~$(\ref{eq:ready})$ to hold with respect to $t_j$ (item~\ref{ln:restart} of Figure~\ref{fig:boundCAS}).

\noindent \textbf{The case of $\mathbf{t_i > t_j}$.~~} 
By Part (2) of the proof of this lemma, we know that $p_i$ calls $\mathit{globalReset}_i(t_i)$ in step $a_i$ only after $p_j$ has seen the tag $t_i$ is a FINALIZED record, because Condition~$(\ref{eq:ready})$ must hold with respect to $t_i$ (item~\ref{ln:restart} of Figure~\ref{fig:boundCAS}) and the definition of $tagTuple()$ (line~\ref{alg:srv:gossip}). This is true starting from some system state that appears in $R$ before the steps $a_i$ and $a_j$. The fact that, in the system state that immediately precedes step $a_j$ in which $p_j$ calls $\mathit{globalReset}_j(t_j)$, node $p_j$ has indeed seen $t_i$ in a finalized (or FINALIZED) record, demonstrates a contradiction due to this case assumption that $t_i > t_j$ and our assumption that $a_j$ appears in $R$ after $a_i$, because $p_j$ should select $t_j$ according to Condition~$(\ref{eq:ready})$ (item~\ref{ln:restart} of Figure~\ref{fig:boundCAS}).
\end{proof}

\noindent \textbf{Part (5).~~} 
This part refers to a case in which no server stores any finalized record. Thus, by the arguments above, there is at least one nodes that calls $\mathit{globalReset}_j(t_0)$. Since no server stores any record, the tag $t_0$ is used (line~\ref{alg:srv:default}), and this part of the proof flows simply from item~\ref{ln:restart} of Figure~\ref{fig:boundCAS}.
\end{proof}

\begin{theorem}[Bounded self-stabilizing $CAS(k)$ in the presence of seldom fairness]
\label{boundeVari}
Algorithm~\ref{alg:cas}'s variation in Figure~\ref{fig:boundCAS} is a bounded (message and state) size self-stabilizing algorithm (in the presence of seldom fairness) for implementing $T_{\text{CAS}(k)}$'s task. Both the recovery and the overflow periods end within $\bigO(\Psi)$ asynchronous cycles.
\end{theorem}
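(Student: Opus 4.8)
The plan is to treat Theorem~\ref{boundeVari} as the composition theorem that stitches together the four established guarantees (atomicity, liveness, bounded storage, and recovery) under the seldom-fairness criterion, and to argue the two claims it bundles separately: boundedness of message and state size, and self-stabilization with both the recovery and overflow periods capped at $\bigO(\Psi)$ asynchronous cycles. First I would dispose of boundedness, which is the cheapest part. For message size, I would observe that the bounded variation works over $\sT=\{1,\ldots,z_{\max}\}\times P$, so each tag costs $\bigO(\log z_{\max}+\log N)$ bits, and every gossip message carries only the constant-size triple $tagTuple()$ of such tags; no message ever carries the $relevant(S)$ set itself. For state size, I would invoke Theorem~\ref{thm:implicitHereIsExplicitThereMaybe}, which bounds the relevant records by $N+\delta+3$, together with the explicit assignment $S\gets relevant(S)$ inserted by item~\ref{ln:restart} of Figure~\ref{fig:boundCAS}; since a server discards everything outside $relevant(S)$ on each gossip iteration, its storage never exceeds $N+\delta+3$ records during the (not necessarily fair) legal execution.

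Next I would establish the recovery period by a case split on the starting system state. When the start is overflow-free (no tag $\geq t_{top}$), I would appeal directly to Theorem~\ref{thm:convergence}: within $\bigO(1)$ asynchronous cycles the execution reaches a suffix with neither incomplete nor invalid operations, after which Theorem~\ref{thm:atomicity} and Theorem~\ref{thm:liveness} guarantee atomic, live behaviour \emph{without} further fairness, exactly as the seldom-fairness criterion demands. When the start does contain a tag $\geq t_{top}$, I would invoke Lemma~\ref{thm:boundedRecovery}: under the fairness that seldom-fairness requires precisely during such an overflow, the system reaches a reset- and overflow-free state within $\bigO(\Psi)$ asynchronous cycles, reducing this case to the previous one. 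Combining the two cases yields recovery from an arbitrary state within $\bigO(\Psi)$ asynchronous cycles.

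For the overflow period arising during ordinary (possibly unfair) legal operation, I would use Lemma~\ref{thm:bounded}. Because counters only increase, after at most $\bigO(z_{\max})$ write operations some tag reaches $t_{top}$; part~(1) then places a record with tag $\geq t_{top}$ at a server within $\bigO(1)$ cycles, and parts~(2)--(3) show that once the required fairness kicks in, Condition~(\ref{eq:ready}) holds and at least one node calls $\mathit{globalReset}(t'')$, with all nodes resuming within $\bigO(\Psi)$ cycles. Safety across the reset is the crucial point: parts~(4)--(5), supported by Claim~\ref{thm:noGlobalIntersect}, show that every concurrent reset agrees on the same maximal FINALIZED tag $t''$ and leaves each server holding exactly the single record $((1,k),\bullet,\text{`FIN'})$ (or $\emptyset$ when $t''=t_0$). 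I would argue this post-reset configuration matches a safe start (Definition~\ref{def:safeStart}), so that the succeeding execution is legal (Definition~\ref{def:safeSystemState}) and Theorems~\ref{thm:atomicity} and~\ref{thm:liveness} reapply; the only price paid is a bounded, fairness-confined liveness violation (aborted in-flight operations), which the seldom-fairness criterion explicitly permits.

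The main obstacle I expect is the serialization argument at the reset boundary, i.e.\ confirming that concatenating the pre-reset legal execution with the post-reset one is still linearizable. The delicate step is showing that the retained tag $t''=maxPhase(\sD\setminus\{\text{`pre'}\})$ is at least as large as the tag returned by \emph{any} operation that completed before the reset, so that no completed read is retroactively invalidated and the reset acts as a clean serialization point. I would derive this from the quorum intersection of Lemma~\ref{thm:quor} and the visibility machinery behind Theorem~\ref{thm:atomicity}: because $t''$ is the finalized tag agreed upon by a full quorum under Condition~(\ref{eq:ready}), it dominates every tag any prior complete operation could have observed, and re-tagging it to $(1,k)$ preserves the object version while restarting the counter. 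Once this is secured, the theorem follows by taking the maximum of the recovery bound and the overflow bound, both $\bigO(\Psi)$, and noting that fairness is invoked only once per $\bigO(z_{\max})$ writes or per rare transient fault.
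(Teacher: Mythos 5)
Your proposal is correct and follows essentially the same route as the paper: the paper's proof of Theorem~\ref{boundeVari} is likewise a composition argument that cites Theorem~\ref{thm:convergence} together with theorems~\ref{thm:atomicity} and~\ref{thm:liveness} for recovery and legal behaviour, Lemma~\ref{thm:boundedRecovery} for overflow recovery from an arbitrary state, and Lemma~\ref{thm:bounded} (with Claim~\ref{thm:noGlobalIntersect}) for the overflow period, with the storage bound coming from Theorem~\ref{thm:implicitHereIsExplicitThereMaybe}. You spell out the reset-boundary serialization concern more explicitly than the paper does (the paper simply delegates it to parts~(4)--(5) of Lemma~\ref{thm:bounded}), but the decomposition and the key lemmas are the same.
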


\begin{proof}
We demonstrate that the proposed algorithm is self-stabilizing (in the presence of seldom fairness). To that end, we show that (1) the proposed algorithm can always recover within $\bigO(\Psi)$ asynchronous cycles from an arbitrary starting system state of a fair execution, (2) during arbitrary executions that start from a legitimate system state, the system execution is legal, but (3) once in every $\bigO(z_{\max})$ write operations, the system stops providing liveness until the system execution becomes fair and then within $\bigO(\Psi)$ asynchronous cycles (during which safety is not violated) liveness is regained.

\noindent \textbf{Part (1).~~} 
Theorem~\ref{thm:liveness} (with respect to Part (1) of Definition~\ref{def:liveness}) demonstrates that the operations of Algorithms~\ref{alg:cas} always terminate. Lemma~\ref{thm:boundedRecovery} demonstrates that the added mechanisms for dealing with overflow events (Figure~\ref{fig:boundCAS}) always finish to deal with overflows and then the system simply follows Algorithms~\ref{alg:cas} for a period of at least $z_{\max}$ write operations. 

The proof of Theorem~\ref{thm:convergence} considers a complete write operation, $\pi_{write}$, that its tag is greater than any tag that is present throughout any earlier stage of the recovery process, i.e., including the set of tags that appeared in the starting system state. 
%
%
As long as no overflow event occurs, within $\bigO(1)$ asynchronous cycles, the system can complete the write operation, $\pi_{write}$, such that $T(\pi_{write})$'s record stays at the set $relevant(S_i)$ at least until a later write operation is completed. If an overflow handling is needed, recovery occurs within $\bigO(\Psi)$ asynchronous cycles (Lemma~\ref{thm:bounded}).


\noindent \textbf{Part (2).~~} 
Algorithm~\ref{alg:cas}'s correctness (theorems~\ref{thm:atomicity} and~\ref{thm:liveness}) implies this part.

\noindent \textbf{Part (3).~~} 
Lemma~\ref{thm:bounded} implies this part.
\end{proof}

\section{Cost Analysis}
%
%
\label{sec:cost}
The main complexity measures of self-stabilizing systems in the presence of seldom fairness are (Section~\ref{sec:timeComplexity}): (ii) the maximum length overflow period, which is of $\bigO(\Psi)$ asynchronous cycles (Theorem~\ref{boundeVari}) for the proposed solution, and (ii) the maximum length of the period during which the system recovers after the occurrence of transient failures, which is $\bigO(1)$ asynchronous cycles for the case of the unbounded solution (Theorem~\ref{thm:convergence}) but can also take $\bigO(\Psi)$ asynchronous cycles if the recovery period includes an overflow (or a recovery of the overflow mechanism).

Cadambe et al.~\cite{DBLP:journals/dc/CadambeLMM17} present a version of the $CAS(k)$ algorithm that includes elements of garbage collection that recycles merely the stored objects and never the meta-data, i.e., it never removes the records themselves, because the garbage collector removes only the coded elements and always keeps the tags and the phase indices. In the context of self-stabilizing systems, this implies that the storage is unbounded, because a single transient fault can clog the storage. Cadambe et al. also explicitly say that when the execution is unfair, an infinite storage is required~\cite[Table 1]{DBLP:journals/dc/CadambeLMM17}. One of the advantages of Algorithm~\ref{alg:cas}'s variation in Figure~\ref{fig:boundCAS} is that it offers bounded local storage of $\bigO(N+\delta)$ records also during periods in which the execution is unfair (Theorem~\ref{thm:implicitHereIsExplicitThereMaybe}), i.e., $\bigO( (\log_2 |\sV|) (N+\delta) )$ bits in total.
  
The proposed solution has write operations that include four phases rather than three, as in~\cite{DBLP:journals/dc/CadambeLMM17}. Cadambe et al.~\cite[Theorem 4]{DBLP:journals/dc/CadambeLMM17} analyze the communication costs of $CAS(k)$ and show that they can be made as small as $\frac{N}{N-f} \log_2 |\sV|$ bits by choosing $k=N-2f$. Moreover, Algorithm~\ref{alg:cas} and its variation in Figure~\ref{fig:boundCAS} consider gossip messages that include three tags, rather than just one as in Cadambe et al.~\cite{DBLP:journals/dc/CadambeLMM17}. When comparing the communication costs of a self-stabilizing algorithm to another that does not consider recovery from transient faults, we have to take into consideration that fact that self-stabilizing algorithms can never stop communicating (because then the system can be first brought to a state in which communication stops and then a transient-fault merely change some part of that state, which the algorithm cannot correct because it has stopped communicating, see~\cite[Chapter 3.2]{Dolev:2000} for details). Therefore, the proposed algorithm never stops sending gossip messages whereas the one by Cadambe et al.~\cite[Theorem 4]{DBLP:journals/dc/CadambeLMM17} sends $\bigO(N^2)$ gossip messages per client operation.

\section{Discussion}
\label{sec:disc}
We studied the implementation of private coded atomic storage protocol, which
is resilient to malicious servers. For the case of asynchronous
message-passing networks that provide fair communication, we proposed a self-stabilizing algorithm that preserves privacy and recovers after the occurrence of transient faults. Our solution requires the system to first reach a  fair execution before the algorithm guarantees recovery. Moreover, once in a practically infinite number of write operations, the proposed solution again requires fair execution to the end of dealing with counter overflows. Since overflow events of $64$-bit integers, in any practical settings, can only be the result of transient faults, and since transient faults are very rare, we believe that our novel stabilization criteria are applicable to a range of similar problems that require self-stabilizing tag schemes. Thus, as future work, we propose to the study of self-stabilizing (in the presence of seldom fairness) of consensus~\cite{DBLP:conf/netys/BlanchardDBD14,DBLP:journals/jcss/DolevKS10}, virtual synchrony~\cite{DBLP:conf/sss/DolevGMS15} and other  shared register emulation schemes~\cite{DBLP:journals/corr/abs-1805-03691}, to name a few.   

\subsection{Extension: recyclable client identifiers}
\label{sec:incarnationNumber}
In Section~\ref{sec:benignFailures}, we assume that clients that fail-stop and never return to take steps. We present here an elegant extension that is based on well-known techniques. This extension allows the nodes to recycle their client identifiers whenever they resume operation after failing. That is, we tolerate detectable restarts of the client nodes. For dealing with detectable restarts of the server nodes, we point out the existence of self-stabilizing quorum reconfiguration~\cite{DBLP:journals/corr/DolevGMS16,DBLP:conf/netys/DolevGMS17,DBLP:conf/middleware/DolevGMS16}.  
  
The client identifier could be a pair that includes the identifier of node $p_i$ and an incarnation number that is incremented whenever a failing node resumes and then wishes to invoke client operations. A well-known technique for maintaining a persistent incarnation number (without assuming access to a stable storage or that the storage is not prone to corruption by transient-faults) is to let a quorum service to emulate a shared counter, similar to the one in~\cite[Section 3.3]{DBLP:journals/corr/DolevGMS15}. Namely, $p_i$ queries all servers for its current maximal incarnation number and waits for a quorum of replies. Then, the client sends to all servers its updated incarnation number, which is the maximum in all query replies plus one and waits for a quorum of replies before invoking the next client operation. Note that each node $p_i$ that hosts a client has to maintain its own incarnation counter (by using the client identifier for partitioning the space of incarnation numbers) and that the above procedure is executed only when $p_i$ resumes after failing. 

We end the description of this extension by saying that the bounded variation of Algorithm~\ref{alg:cas} (Section~\ref{fig:boundCAS}) includes a global reset procedure that resets all clients and server. This reset procedure can also reset the above mechanisms for recyclable client identifiers. In addition, whenever the incarnation number reaches its maximum value, the global reset procedure is triggered. Note that the latter happens only after the occurrence of a transient fault. Thus, the client never runs out of incarnation numbers (in any participial settings). The description of the above procedure would not be complete without saying that the servers gossip periodically the set of all known pairs of client identifiers and their incarnation numbers. When such a set is received, the server updates its local set by adding all the pairs that come from clients that it does not know and updating all the existing pairs with the highest incarnation number. Note that the size of these sets is bounded by the number of possible clients, $N$.

\subsection{Conclusions}
We view the criteria of self-stabilizing algorithms (in the presence of seldom fairness) as an attractive alternative to both Dijkstra's self-stabilization criterion~\cite{DBLP:journals/cacm/Dijkstra74}, which for the sake of bounded recovery period usually models fail-stop failure as transient faults~\cite{Dolev:2000}, and the less restrictive criteria of pseudo-self-stabilizing~\cite{DBLP:conf/opodis/DolevDPT12,DBLP:journals/dc/BurnsGM93} and practically-self-stabilizing systems~\cite{DBLP:journals/jcss/DolevKS10,DBLP:conf/sss/DolevGMS15,DBLP:conf/netys/BlanchardDBD14}, which do not model fail-stop failure as transient faults, but do not offer a bounded recovery period. We consider self-stabilizing systems (in the presence of seldom fairness) to be (i) wait-free (since they do not assume execution fairness) in the absence of transient faults, and (ii) offering a bounded recovery period from transient faults, as in Dijkstra's self-stabilization criterion~\cite{DBLP:journals/cacm/Dijkstra74}. This is offered at the expense of compromising liveness (without jeopardizing safety) for a bounded period that occurs once in every practically infinite number of operations, of say, $2^{64}$.    
 
\paragraph*{Acknowledgments:}
We thank Robert Gustafsson and Andreas Lindh\'{e} for useful discussions and for helping to improve the presentation significantly.

\part*{Appendix}

\section{Self-stabilizing Gossip and Quorum Services}
\label{sec:basic:App}
Algorithm~\ref{alg:comm} provides an implementation that satisfies the requirements of Definition~\ref{def:coomSafe}. We start the algorithm description by refining the model with respect to the variables that Algorithm~\ref{alg:comm} uses as well as its interfaces. We then detail the way in which Algorithm~\ref{alg:comm} provides the requirements that appear in Definition~\ref{def:coomSafe}. 

\subsection{Refined model}
We assume that the system has access to a self-stabilizing end-to-end (reliable FIFO) message delivery protocol~\cite{DBLP:journals/ipl/DolevDPT11,DBLP:conf/sss/DolevHSS12} (over unreliable non-FIFO channels that are subject to packet omissions, reordering, and duplication). The self-stabilizing algorithms in~\cite{DBLP:journals/ipl/DolevDPT11,DBLP:conf/sss/DolevHSS12} circulate a token between any pair of senders and receivers. Our pseudo-code interfaces that protocol via events for token departure and arrival. Moreover, the self-stabilizing algorithms in~\cite{DBLP:journals/ipl/DolevDPT11,DBLP:conf/sss/DolevHSS12} guarantees that the receiver raises eventually a token arrival event with a token that the sender had transmitted upon its previous token departure event; exactly one token exists at any time that succeeds the recovery period. 

\subsubsection{Variables}
The gossip servers require two kinds of buffers (line~\ref{alg:comm:gossipDef}): one for the messages that go out ($gossipTx$) and another for the ones that come in ($gossipRx[]$). Node $p_i$ stores in $gossipTx$ the needed to transmit messages and in $gossipRx[j]:j\neq i$ the ones that are arriving to $p_i$ from $p_j$'s gossip. We use $gossipRx[i]$ for aggregating these received values.   

Four kinds of buffers facilitate the quorum-based communications, because this communication pattern has a round-trip nature that a client initiates and includes a number of nodes (line~\ref{alg:comm:pingpong}). The initiating client write the message to $pingTx$. The end-to-end communication protocol~\cite{DBLP:journals/ipl/DolevDPT11,DBLP:conf/sss/DolevHSS12} transfers that message to the server-side and stores it in $pongTx[]$. The server processes the arriving messages and stores its reply in $pongRx[]$ so that the end-to-end protocol could transfer this reply back to the client-side, which stores it in $pingRx[]$. When any of this buffers does not store a message, the $\bot$ symbol is used. 
Note that the client requests take the form of $(tag, word, phase) \in F=(\sT \cup \{\bot\})  \times (\sW \cup \{\bot \})\times (\sD \cup \{\text{`qry'}\})$, where $phase =\text{`qry'}$ whenever the clients sends a query (rather than another phase that does appear in $\sD$). The reply has the form of $( ping, pong ) \in F\cup \{\bot \} \times F  \cup\{\bot \}$.
We use the variable $aggregated$ for letting the client to aggregate the server responses for the latest request (line~\ref{alg:comm:aggregated}).

\subsubsection{Interface that Algorithm~\ref{alg:comm} assumes to be available}
Every pair of nodes maintains a pair of self-stabilizing communication channels~\cite{DBLP:journals/ipl/DolevDPT11,DBLP:conf/sss/DolevHSS12}, i.e., one channel in each direction, which circulates a token between the sender and the receiver. The code interfaces that protocol via events for token departure (lines~\ref{alg:comm:gossipServerDeparture},~\ref{alg:comm:pingpongClientDeparture} and~\ref{alg:comm:pingpongServerDeparture}) and arrival (lines~\ref{alg:comm:gossipClientArrival},~\ref{alg:comm:pingpongServerArrival} and~\ref{alg:comm:pingpongClientArrival}). Moreover, this protocol guarantees that the receiver raises eventually a token arrival event with a token that the sender had transmitted upon its previous token departure event. Exactly one token exists at any time. In the code, token arrival and departure events raise the respective events according to the message type, which are gossip and quorum. The event handlers to these events perform local operations, release the token by calling ${send}(receiverID, Payload)$ as well as raising other local events on the calling node.
Algorithm~\ref{alg:comm} also assumes access to a primitive, which we call $suspend()$. When an event calls $suspend(var, const)$, the system suspends the calling event until $var=const$ (while allowing other non-suspended events on that node to run).

\subsection{The details of Algorithm~\ref{alg:comm}}
The gossip functionality simply sends the message in $gossipTx$ whenever the token arrives to the client (line~\ref{alg:comm:gossipServerDeparture}), stores it in $gossipRx[j]$ whenever the token arrives to the server (line~\ref{alg:comm:gossipClientArrival}) and raises the gossip arrival event at the server-side with all the most recently arrived gossip messages from each server (line~\ref{alg:comm:gossipClientArrivalDeliver}). Note that when a gossip message arrives, the receiving server raises an event with $\{gossip[k]\}_{p_k \in \sP}$, which includes the most recently received messages. We allow a simpler presentation of Algorithm~\ref{alg:cas} by letting the server at node $p_i$ to use the item $gossip[i]$ for aggregating the gossip information that it later gossips to all other servers.

The ping-pong protocol is inspired by the Communicate protocol proposed by Attiya at el.~\citep[Section 3]{Attiya:1995}. The client sends its message to the server upon token departure (line~\ref{alg:comm:pingpongClientDeparture}), update the server buffer upon token arrival (line~\ref{alg:comm:pingpongServerDepartureNotBot}), and then the server sends its reply upon its token departure (line~\ref{alg:comm:pingpongServerDepartureNotBot}) before the token arrival event allows the client to accumulate the server replies (line~\ref{alg:comm:pingpongServerArrival}). The client then tests whether it has accumulated replies from a quorum of servers (line~\ref{alg:comm:sQ}). If this is the case, it lets the calling client procedure to receive the accumulated replies (line~\ref{alg:comm:pingpongClientArrivalReturn}).

Note that, for the sake of simple presentation of Algorithm~\ref{alg:cas}, we allow the client to call $\text{qrmAccess}(msg)$ either when $msg$ is a single message to be sent to all servers or when $msg$ is a vector that includes an individual message for each server (line~\ref{alg:comm:pingpongClientDeparture} and line~\ref{alg:comm:load}).

\begin{algorithm*}[t!]
\algoSize

\noindent \textbf{Variables:}
$gossipTx$ and $gossipRx[]$ are buffers, where $p_i$'s server stores in $gossipTx$ the needed to transmit message. In $gossipRx[j]:j\neq i$, $p_i$'s server stores the most recently received $p_j$'s gossip and allowing the use of $gossipRx[i]$ for aggregating these received values.\label{alg:comm:gossipDef}   

$pingTx$, $pingRx[]$, $pongTx[]$ and $pongRx[]$ are buffers, where $p_i$'s client stores in $pingTx$ its request, or $\bot$. In $pingRx[j]$, $p_i$'s server stores the most recently received $p_j$'s request, or $\bot$. In $pongTx[j]$, $p_i$'s server stores the reply to $p_j$, or $\bot$. In $pongRx[j]$, $p_i$'s client stores the most recently received $p_j$'s acknowledgment, or $\bot$. The client request and $pingTx$ has the form $(tag, word$, $phase) \in F=(\sT \cup \{\bot\})  \times (\sW \cup \{\bot \})\times (\sD \cup \{\text{`qry'}\})$. The reply form is $( ping, pong ) \in F\cup \{\bot \} \times F  \cup\{\bot \} $.\label{alg:comm:pingpong}

$aggregated$ is a variable in which $p_i$'s client stores the collected server responses for the latest request\label{alg:comm:aggregated}\;

\BlankLine

\noindent \textbf{Interface in use:}
\label{alg:comm:interfacesNeeded}
Every pair of nodes maintain self-stabilizing communication channel~\cite{DBLP:journals/ipl/DolevDPT11,DBLP:conf/sss/DolevHSS12}, which circulates a token between the sender and the receiver. Token arrival and departure events raises the respective events according to channel type, which are gossip and quorum.\;

$suspend(var, const)$ suspends the calling event until $var=const$ (while allowing other events to run)\;

\BlankLine\BlankLine

\noindent \textbf{Interface provided:}\\
\label{alg:comm:interfacesProvided}
\textbf{function} $\text{gossip}(msg)$ \lDo{~$gossipTx \gets msg$;\label{alg:comm:gossip}}

\textbf{function} $\text{qrmAccess}(msg)$ \lDo{ $phaseInit(msg)$; $\textbf{return}({wait}())$;\label{alg:comm:quorumAccess}}

\textbf{function} $\text{reply}(j,m)$ \lDo{ \{\textbf{if} $pingRx[j].phase = \text{`qry'}$ \textbf{then} $pongTx[j] \gets (m.tag, \bot, \text{`qry'})$ \textbf{else} $pongTx[j] \gets$ $(pingRx[j].tag, m.word, pingRx[j].phase)$\}\label{alg:comm:acknowledge}}

\BlankLine

\noindent \textbf{Event handlers and local functions:}\\
\label{alg:comm:function}
\textbf{upon} gossip token departure from $p_i$'s server to $p_j$'s server\label{alg:comm:gossipServerDeparture} \lDo{~\textbf{send}$(j, gossipTx)$;}

\textbf{upon} gossip token $m$ arrival from $p_j$'s server to $p_i$'s server\label{alg:comm:gossipClientArrival} \Do{
     $gossipRx[j] \gets m$; \textbf{raise} the gossip arrival event with the message $\{gossipRx[k]\}_{p_k \in \sP}$\label{alg:comm:gossipClientArrivalDeliver};   
}

\textbf{upon} pingpong token departure from $p_i$'s client to $p_j$'s server\label{alg:comm:pingpongClientDeparture} \lDo{~\textbf{send}$(j,load(j,pingTx))$;}

\textbf{upon} pingpong token departure from $p_i$'s server to $p_j$'s client\label{alg:comm:pingpongServerDeparture} \Do{
\lIf(~\textbf{/*} \texttt{ignore the server during channel resets} \textbf{*/}){$pingRx[j]=\bot$}{\textbf{send}$(j, (\bot, \bot))$}
\lIf{$pingRx[j]\neq\bot$}{\textbf{send}$(j,(pingRx[j], pongTx[j]))$\label{alg:comm:pingpongServerDepartureNotBot}}
}

\textbf{upon} $pingpong$ token arrival from $p_j$'s client to $p_i$'s server\label{alg:comm:pingpongServerArrival} \Do{
     $pingRx[j] \gets pingpong$\tcc*{$pingpong$ is either $\bot$ or $msg$}
     \If(~\textbf{/*} \texttt{don't interrupt the server during channel resets} \textbf{*/} ){$pingRx[j] \neq \bot$}{\textbf{raise} the event of  $pingRx[j].phase$ arrival from $p_j$'s server with $pingRx[j].msg$ (unless it is $\bot$)\label{alg:comm:pingpongServerArrivalDeliver}}
}

\textbf{upon} $pingpong=(ping, pong)$ token arrival from $p_j$'s server to $p_i$'s client \label{alg:comm:pingpongClientArrival} \Do{
     \lIf{$load(j,pingTx) = ping \land (pong = \bot \lor pong.tag = \bot \lor ((ping.phase \neq \text{`qry'} ) \implies (ping.tag =$ $pong.tag)))$\label{alg:comm:pingTxPing}}{$pongRx[j] \gets pong$}      

\If(~ \textbf{/*} \texttt{test for a quorum of acknowledgments} \textbf{*/}){$\{p_j:pongRx[j] \neq \bot\} \in \sQ$\label{alg:comm:sQ}}{

$aggregated \gets \{pongRx[j].word : pongRx[j] \neq \bot \}$\;

$clear()$\tcc*{returns from $ping.phase$ with $aggregated$ as the acknowledgment set}\label{alg:comm:pingpongClientArrivalReturn}}      
}

\textbf{function} $clear()$ \lDo{ \textbf{begin} \lForEach{$p_k \in \sP$}{$pongRx[k] \gets \bot$ \textbf{end}; $pingTx \gets \bot$}\label{alg:comm:clear}}

\textbf{function} $phaseInit(m)$ \lDo{ \textbf{begin} \lForEach{$p_k \in \sP$}{$pongRx[k] \gets \bot$ \textbf{end}; $pingTx \gets m$}\label{alg:comm:phaseInit}}

\textbf{function} ${wait}()$ \lDo{ $suspend(pingTx, \bot)$; \textbf{let} $x=aggregated$; $aggregated\gets \emptyset$;  \textbf{return}$(x)$;\label{alg:comm:wait}}

\textbf{function} ${load}(j,m)$ \lDo{~\textbf{if $m=((\bullet, w_1, \bullet), \ldots, (\bullet, w_N, \bullet))$ then return $(\bullet, w_j, \bullet)$ else return $m$;}\label{alg:comm:load}}
 
\caption{\algoSize Self-stabilizing gossip and quorum-based communication, code for node $p_i$.}\label{alg:comm}
\end{algorithm*}

\subsection{Correctness of Algorithm~\ref{alg:comm}}
Theorem~\ref{thm:basic} shows that the system always reaches a legal execution (Definition~\ref{def:coomSafe}) and uses Remark~\ref{thm:tokenCyrc}. 

\begin{remark}
\label{thm:tokenCyrc}
Recall that we assume that use of a self-stabilizing communication channel between any pair of nodes, such as in~\cite{DBLP:journals/ipl/DolevDPT11,DBLP:conf/sss/DolevHSS12}, for guaranteeing reliable end-to-end message delivery. In~\cite{DBLP:journals/ipl/DolevDPT11,DBLP:conf/sss/DolevHSS12}, the receiver raises eventually a token arrival event with a message that the sender had transmitted upon its previous token departure event, such that, at any time, there is exactly one token carrying one message. And, the token traversal direction alternates, i.e., go from one peer to another, then back, then go again and so on.
\end{remark}

Theorem~\ref{thm:basic} provides a proof for Corollary~\ref{thm:basicCorollary}.
 
\begin{theorem}[Self-stabilizing gossip and quorum-based communications]
\label{thm:basic}
Let $R$ be an Algorithm~\ref{alg:comm}'s (unbounded) execution that satisfies the terms of service of the quorum-based communication functionality. Suppose that $R$ is fair and its starting system state is arbitrary. Within $\bigO(1)$ asynchronous cycles, $R$ reaches a suffix $R'$ in which 
\textbf{\emph{(1)}} the gossip, and 
\textbf{\emph{(2)}} the quorum-based communication functionalities are correct.
\textbf{\emph{(3)}} During $R'$, the gossip and quorum-based communication complete correctly their operations within $\bigO(1)$ asynchronous cycles.
\end{theorem}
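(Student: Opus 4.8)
‐

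The plan is to prove Theorem~\ref{thm:basic} by analyzing the self-stabilizing token-based channel (Remark~\ref{thm:tokenCyrc}) separately for each of the two communication patterns, and in each case tracking how stale information is flushed from the buffers within a bounded number of asynchronous cycles. The starting point is the fact that we assume access to self-stabilizing end-to-end channels~\cite{DBLP:journals/ipl/DolevDPT11,DBLP:conf/sss/DolevHSS12} that, after their own recovery period, guarantee exactly one token per channel and that every token departure event is eventually matched by a token arrival event carrying exactly the transmitted message. I would first invoke this assumption to fix a suffix of $R$ in which all underlying channels already behave correctly; since their recovery takes $\bigO(1)$ asynchronous cycles, this costs only a constant factor. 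All subsequent reasoning then takes place in this clean suffix, where I need only argue that the \emph{application-level} buffers ($gossipTx$, $gossipRx[]$, $pingTx$, $pingRx[]$, $pongTx[]$, $pongRx[]$, $aggregated$) converge.

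For Part~(1), the gossip functionality, the key observation is that the handlers in lines~\ref{alg:comm:gossipServerDeparture}--\ref{alg:comm:gossipClientArrivalDeliver} never block and simply copy $gossipTx$ out on each token departure and overwrite $gossipRx[j]$ on each token arrival. Thus any arbitrary initial content of $gossipRx[j]$ is overwritten the first time a genuine token from $p_j$ arrives, which by the channel guarantee and fairness happens within $\bigO(1)$ asynchronous cycles; from that point every delivered value in $\{gossipRx[k]\}_{p_k\in\sP}$ was genuinely sent by its respective server, establishing correctness properties (a) and (b) of Definition~\ref{def:coomSafe}. For Part~(2), the quorum-based communication, I would trace one full round trip following the ping-pong discipline: the client loads $pingTx$ on departure (line~\ref{alg:comm:pingpongClientDeparture}), the server stores it in $pingRx[j]$ and raises the phase event (lines~\ref{alg:comm:pingpongServerArrival}--\ref{alg:comm:pingpongServerArrivalDeliver}), the server replies via $pongTx[j]$ on its departure (line~\ref{alg:comm:pingpongServerDepartureNotBot}), and the client accumulates genuine replies into $pongRx[]$ (guarded by the consistency test of line~\ref{alg:comm:pingTxPing}) until a quorum is detected (line~\ref{alg:comm:sQ}) and $clear()$ returns. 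The guard in line~\ref{alg:comm:pingTxPing} is what filters out stale pong tokens that do not correspond to the currently outstanding $pingTx$, so after at most one $clear()$/$phaseInit()$ cycle every $pongRx[j]$ reflects a genuine acknowledgment of the current request, which together with Lemma~\ref{thm:quor}(2) (a non-crashed quorum exists) yields the return property and properties (i)--(v) of Definition~\ref{def:coomSafe}.

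The main obstacle I anticipate is Part~(2) in the presence of an \emph{arbitrary} starting state, where the danger is a spurious self-consistent configuration: leftover values in $pongRx[]$ might already satisfy the quorum test of line~\ref{alg:comm:sQ} before any genuine round trip completes, causing a false return. I would handle this by arguing that the very first invocation of $phaseInit(m)$ (inside $\text{qrmAccess}$, line~\ref{alg:comm:quorumAccess}) resets all $pongRx[k]\gets\bot$ (line~\ref{alg:comm:phaseInit}), and that the terms of service (the client calls $\text{qrmAccess}()$ sequentially, and $suspend(pingTx,\bot)$ in $wait()$, line~\ref{alg:comm:wait}, forces the client to block until a genuine clear) guarantee such a reset happens within $\bigO(1)$ cycles; thereafter the line~\ref{alg:comm:pingTxPing} guard ensures only genuine current-request pongs ever repopulate $pongRx[]$. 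A subsidiary subtlety is the $\bot$-handling during channel resets (lines~\ref{alg:comm:pingpongServerDeparture}, \ref{alg:comm:pingpongServerArrivalDeliver}), which must be shown not to raise spurious phase events at the server or accept spurious acknowledgments at the client; these follow from the explicit $pingRx[j]\neq\bot$ guards. Finally, Part~(3) is immediate once Parts~(1) and~(2) hold: each gossip exchange and each quorum round trip completes within one asynchronous cycle by the definition of a complete iteration (Section~\ref{ss:completeIteration}) and the channel fairness guarantee, so all operations finish within $\bigO(1)$ asynchronous cycles.
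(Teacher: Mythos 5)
Your proposal is correct and follows essentially the same route as the paper's proof: fix a suffix where the token channels behave correctly, argue the gossip part by the overwrite semantics of $gossipRx[]$, trace the ping-pong round trip with the line~\ref{alg:comm:pingTxPing} guard and Lemma~\ref{thm:quor}(2) for the quorum part, and derive Part~(3) from the constant number of round trips per iteration. The only detail worth making explicit (the paper's Claim~\ref{thm:quorumReceiveMore}) is that a stale $pingTx_i\neq\bot$ present in the arbitrary starting state, with no accompanying call to $\text{qrmAccess}()$, is itself eventually answered by a quorum and cleared, so a client suspended in $wait()$ cannot block forever — your round-trip argument covers the mechanism but you should state this case separately.
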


\begin{proof}
\noindent \textbf{Part (1).~~} 
Let $a_{\text{gossip},\ell} \in R$ be a step in which the server at $p_j \in \sP$ calls $\text{gossip}(m_\ell)$ for the $\ell$-th time in $R$ (line~\ref{alg:comm:gossip}). Note that $a_{\text{gossip},\ell}$ copies $m$ to $gossipTx_i$ (line~\ref{alg:comm:gossip}). Let $a_{depart,\ell'} \in R$ be the first step in $R$ that appears after $a_{\text{gossip},\ell'}:\ell'\in \{1,\ldots\}$ and before $a_{\text{gossip},{\ell'+1}}$, if there is any such step, in which $p_j$ executes the event of gossip token departure (line~\ref{alg:comm:gossipServerDeparture}). Note that $a_{\text{depart},{\ell'}}$ transmits to the server at $p_k \in \sP$ the token $m_{\ell'}$, where $m_{\ell'}=gossipTx_i[i]$ in any system state that is between $a_{\text{gossip},\ell'}$ and $a_{depart,\ell'}$. That token arrives eventually to the server at $p_k$, which raises the respective event (line~\ref{alg:comm:gossipClientArrival}) and then the event of gossip arrival with the message $m_{\ell'}$ (line~\ref{alg:comm:gossipClientArrivalDeliver}). Thus, the gossip functionality is correct (Definition~\ref{def:coomSafe}), because (1) $m_{\ell'}$ was indeed sent by $p_j$, and (2) $p_k$ delivers such message infinitely often.\\

\noindent \textbf{\emph{Correct behavior of the gossip functionality.~~}} Suppose that (1) every message that $p_k$ delivers to the upper layer as a gossip from $p_j$ (line~\ref{alg:comm:gossipClientArrival}) was indeed sent by $p_j$ earlier in $R$ (line~\ref{alg:comm:gossip}). Moreover, (2) such deliveries occur infinitely often in $R$. Furthermore, (3) at any time, the communication channel from $p_j$ to $p_k$ does not include a message that $p_j$ has never sent. In this case, we say that the behavior of the gossip functionality from the server at $p_j$ to the one at $p_k$ is correct.

\noindent \textbf{Part (2).~~} 
Claims~\ref{thm:quorumReceive} and~\ref{thm:quorumReceiveMore} imply the proof of this part.

\begin{claim}
\label{thm:quorumReceive}
Suppose that the client at $p_i$ sends a request, i.e., $p_i$ calls the function $\text{qrmAccess}(m)$ (line~\ref{alg:comm:quorumAccess}) in step $a_{qrmAccess} \in R$, where $m \neq \bot$. After $a_{qrmAccess}$, execution $R$ includes steps (i) to (v) (Definition~\ref{def:coomSafe}).
\end{claim}
\begin{claimproof}
In $a_{qrmAccess}$, node $p_i$ assigns $m_i \neq \bot$ to $pingTx_i$ (line~\ref{alg:comm:quorumAccess}) and then suspends the running client (while allowing other events to run concurrently) until $pingTx_i=\bot$, where $m_i \in F$. Recall that only the client calls the function $\text{qrmAccess}()$ (Algorithm~\ref{alg:cas}) and it does so sequentially (terms of service for the quorum-based communication functionality, Definition~\ref{def:coomSafe}). Thus, after $a_{qrmAccess}$, the only way in which $p_i$ assigns $\bot$ to $pingTx_i$ is by executing line~\ref{alg:comm:pingpongClientArrivalReturn}. Therefore, the invariant of $pingTx_i =m_i\neq \bot$ holds until the client at $p_i$ takes the step (v), which returns from the function that it had called in step $a_{depart}$ with $aggregated$ as the acknowledgment set. We show that after $a_{qrmAccess}$, the system takes the steps (i) to (v).

\noindent \textbf{Steps (i) and (ii).~~}
Let $a_{depart}$ be the first step in $R$ that appears after $a_{qrmAccess}$ and in which $p_i$ executes the event of ping-pong token departure from the client at $p_i$ to the server at $p_j$ (line~\ref{alg:comm:pingpongClientDeparture}). Note that $p_i$ transmits the message $m_i=pingTx_i$ in $a_{depart}$. That token arrives eventually to the server at $p_j$ (Remark~\ref{thm:tokenCyrc}), which raises the respective event at some step $a_{j} \in R$ (line~\ref{alg:comm:pingpongServerArrival}) and then the event of $pingRx[j].phase$ arrival (also in step $a_{j}$), which delivers the arriving token, $pingRx[j].msg$ (line~\ref{alg:comm:pingpongServerArrivalDeliver}), unless the latter is $\bot$. Let $\hat{Q}_{(i),(ii)}\subseteq P$ include the set of nodes, such as $p_j$, that their servers raise the latter two events (in step $a_{j}$). Note that, as long as the invariant of $pingTx_i =m_i\neq \bot$ holds, $\hat{Q}_{(i),(ii)}$ includes more and more nodes. Thus, $\hat{Q}_{(i),(ii)} \in \sQ$ eventually (Property (2) of Lemma~\ref{thm:quor}).

\noindent \textbf{Steps (iii).~~}
Recall that this lemma assumes that in $R$ the server at $p_j \in \sP$ acknowledges (by calling $\text{reply}(msg)$, line~\ref{alg:comm:acknowledge}) requests that $p_j$ delivers to it (terms of service for the quorum-based communication functionality, Definition~\ref{def:coomSafe}). Let $a_{j'} \in R$ refer to these steps and $\hat{Q}_{(iii)}\subseteq P$ include the set of nodes, such as $p_j$, that take these steps, $a_{j'}$. Note that, as long as the invariant of $pingTx_i =m_i\neq \bot$ holds, $\hat{Q}_{(iii)}$ includes more and more nodes. Thus, $\hat{Q}_{(i),(ii),(iii)} \in \sQ$ eventually (Property (2) of Lemma~\ref{thm:quor}). 

\noindent \textbf{Steps (iv) and (v).~~}
The server at $p_j$ eventually sends the token $(pingRx_j[i]$, $pongTx_j[i])$ (line~\ref{alg:comm:pingpongServerDepartureNotBot}), cf. Remark~\ref{thm:tokenCyrc}. Note that the sent token includes (in the ping field) $a_{qrmAccess}$'s request and (in the pong field) the same phase and tag of the arriving request (line~\ref{alg:comm:acknowledge}), if the sent request includes any such values. Moreover, that token arrives eventually from the server at $p_j$ to the client at $p_i$ (Remark~\ref{thm:tokenCyrc}) at some step $a_{j''} \in R$, which raises the respective event (line~\ref{alg:comm:pingpongClientArrival}). 
Recall that node $p_i$ cannot change the value of $pingTx_i \neq \bot$ before the if-statement condition of line~\ref{alg:comm:pingTxPing} holds.
By the fact that the arriving token includes the same phase and tag, if there is any, of the sent request $pingTx_i$, node $p_i$ collects the arriving acknowledgments until the if-statement condition of line~\ref{alg:comm:pingTxPing} holds eventually. 
Let $\hat{Q}_{(iv)}\subseteq P$ be the set of nodes, such as $p_j$, for which the $p_i$ takes the step $a_{j''}$. Note that, as long as the invariant of $pingTx_i =m_i\neq \bot$ holds, $\hat{Q}_{(iv)}$ includes more and more nodes. Thus, $\hat{Q}_{(iv)} \in \sQ$ eventually (Property (2) of Lemma~\ref{thm:quor}). This implies that $p_i$ also takes the step $a_{i} \in R$ in which $p_i$ lets the function (which $p_i$ had previously called in step $a_{qrmAccess}$) to return (line~\ref{alg:comm:pingpongClientArrivalReturn}) by calling $clear()$ (line~\ref{alg:comm:pingpongClientArrivalReturn}) and by that allowing the resume of the client and the return from the function that has sent the request. Only then does the invariant $pingTx_i \neq \bot$ stops from holding. We remind that $aggregated$ holds the set of server replies that were sent between $a_{qrmAccess}$ and $a_{i}$ as well as matched $p_i$'s request, used the same phase and tag, if there were any such values in $p_i$'s request.\end{claimproof}

\begin{claim}
\label{thm:quorumReceiveMore}
Suppose that the state of the client at $p_i$ includes a non-$\bot$ value in $pingTx_i$. Eventually, $pingTx_i=\bot$ (and the client at $p_i$ resumes, if it had been suspended).
\end{claim}
\begin{claimproof}
Suppose that the client at $p_i$ calls eventually the function $\text{qrmAccess}()$ (line~\ref{alg:comm:quorumAccess}). Then, by Claim~\ref{thm:quorumReceive} the proof of this claim is done. 
Suppose that, throughout $R$, the client at $p_i$ does not call the $\text{qrmAccess}()$ function and yet $pingTx_i\neq \bot$ (in $R$'s starting system state). We show that $pingTx_i = \bot$ eventually. 
Recall that $pingTx \neq \bot$ has the form of $(tag, \bullet, phase) \in F$, where $F=(\sT \cup \{\bot\})  \times \{\bullet\} \times (\sD \cup \{\text{`qry'}\})$ and the replies have the form of $(ping, pong) \in F\cup\{\bot \} \times F\cup\{\bot \} $ (line~\ref{alg:comm:pingpong}). Note that eventually when $p_j$ sends a token back to $p_i$ (line~\ref{alg:comm:pingpongServerDeparture}). That token is either $(\bot,\bot)$ or $(pin_j,pon_j)$, where $pin_j$ is the request $pingRx_j[i]$ that $p_j$ had received from $p_i$ and $pon_j$ is $(pingRx[j].tag, \bullet, pingRx[j].phase)$ (line~\ref{alg:comm:acknowledge}).
By the same arguments that appears in the proof of Claim~\ref{thm:quorumReceive}, this proof is done. Namely, as long as $pingTx \neq \bot$ we have $pin_j \neq \bot$ and we can apply `steps (iv) and (v)' in the proof of Claim~\ref{thm:quorumReceive}.\end{claimproof}

\noindent \textbf{Part (3).~~} 
By the proof of Part (1) of this lemma, we see that the correctness invariant of the gossip service holds within $\bigO(1)$ asynchronous cycles because it considers the propagation of a single message from $p_j$ to every $p_k \in \sP$, i.e., it requires a single complete server iteration (with round-trips). By the proof of Part (2) of this lemma, we see that the correctness invariant of the quorum-based communication service holds within $\bigO(1)$ asynchronous cycles because steps (i) and (v) consider the propagation of a single message round-trip from a client to a quorum of servers, i.e., it requires a single complete client round.
\end{proof}  

\bibliographystyle{plain}
\bibliography{bib/trBib}

\end{document}